\documentclass[notitlepage]{article}%
\usepackage{harvard}
\usepackage{amsmath, bm}
\usepackage{graphicx}
\usepackage{amsfonts}
\usepackage{amssymb}
\usepackage{relsize}
\usepackage{scalefnt}
\usepackage{amsmath}%
\providecommand{\U}[1]{\protect\rule{.1in}{.1in}}
\harvardparenthesis{none}
\newtheorem{theorem}{Theorem}
\newtheorem{lemma}{Lemma}

\advance \topmargin by -\headheight
\advance \topmargin by -\headsep
\evensidemargin \oddsidemargin
\begin{document}

\title{Testing Hypotheses About Ratios of Linear Trend Slopes in Systems of Equations
with a Focus on Tests of Equal Trend Ratios}
\author{Timothy J. Vogelsang\thanks{Tim Vogelsang, Department of Economics, 486 W.
Circle Drive, 110 Marshall-Adams Hall, Michigan State University East Lansing,
MI 48824-1038. Phone: 517-353-4582, Fax: 517--432-1068, email: tjv@msu.edu}}
\date{February 26, 2026}
\maketitle

\begin{abstract}
This paper develops inference methods for ratios of deterministic trend slopes
in systems of pairs of time series. Hypotheses based on linear cross-equation
restrictions are considered with particular interest in tests that trend
ratios are equal across pairs of trending series. Tests of equal ratios can be
used for the empirical assessment of climate models through comparisons of
trend ratios (amplification ratios) of model generated temperature series and
observed temperature series. The analysis in this paper builds on the
estimation and inference methods developed by \nocite{vogelsang-nawaz-JTSA}%
Vogelsang and Nawaz (2017, \textit{Journal of Time Series Analysis}) for a
single pair of trending time series. Because estimators of ratios can have
poor finite sample properties when the trend slope are small relative to
variation around the trends, tests of equal trend ratios are restated in terms
of products of trend slopes leading to inference that is less affected by
small trend slopes. Asymptotic theory is developed that can be used to
generate critical values. For tests of equal trend ratios, finite sample
performance is assessed using simulations. Practical advice is provided for
empirical practitioners. An empirical application compares amplification
ratios (trend ratios) across a set of five groups of observed global
temperature series.

\end{abstract}

Keywords: Trend Stationary, Instrumental Variables Estimation, HAC Estimator,
Fixed-b Asymptotics, Amplification Ratio, Long Run Variance\bigskip

\setcounter{page}{0} \thispagestyle{empty} \baselineskip=16.0pt \pagestyle
{plain}

\section{Introduction}

This paper analyzes estimation and inference methods for parameters that
represent ratios of pairs of linear trend slopes in a system of linear
trending time series with covariance stationary fluctuations around the trend.
The proposed methods extend the results of \cite{vogelsang-nawaz-JTSA} where
the focus was on a single trend ratio parameter, and it was shown that
instrumental variables (IV) estimation using time as an instrument was
preferred to ordinary least squares (OLS) of the relevant estimating equation.
Here, there is a vector of trend ratio parameters estimated by IV. Inference
focuses on hypotheses that represent linear restrictions across trend ratio
parameters. Much of the intuition in \cite{vogelsang-nawaz-JTSA} for the
single pair case extends to the multi-pair case especially with respect to the
importance of the magnitude of trend slopes relative to noise
(variation/fluctuations) around the trends for estimation and inference.
Throughout the paper, magnitudes of trends slopes are always interpreted
relative to the magnitude of the noise.

Detailed attention is given to the special case where an empirical
practitioner wants to test the equality of trend ratio parameters between two
pairs of series. This special case is directly relevant for the study of
amplification ratios in the empirical climate literature;
\nocite{santer2005amplification}Santer \textit{et al} (2005),
\nocite{thorne2007tropical}Thorne \textit{et al} (2007),
\nocite{klotzbach2009alternative}\nocite{klotzbach2010correction}Klotzbach
\textit{et al} (2009, 2010), \nocite{christy2010observational}Christy
\textit{et al} (2010), \cite{po2012discrepancies}, \cite{vogelsang-nawaz-JTSA}%
, \cite{VMCS_2026} (hereafter VMCS26), and references in those papers.

The amplification ratio is the ratio of a temperature trend in the troposphere
of the earth relative to a temperature trend at the surface of the earth. A
key assessment of theoretical climate models is determining whether estimated
amplification ratios of observed temperature series are aligned with estimated
amplification ratios of model generated temperatures. The null hypothesis of
interest is equality of trend ratios between two pairs of temperature series:
one pair for observed temperatures and a second pair for model generated
temperatures. Whereas the previous literature treats the amplification ratio
computed for model generated temperatures as fixed (ignoring that it is
estimated), the methods developed in this paper treat both the observed and
model amplification ratios as estimators. Inference takes into account the
joint sampling distributions of both estimated amplification ratios. VMCS26
use these methods to assess the alignment of amplification ratios in CMIP6
model generated temperature series with amplification ratios in observed
temperature series. VMCS26 find systematic misalignments especially for
amplification ratios in the lower troposphere in which case the models exhibit
substantially higher amplification (more warming relative to the surface) than
in observed temperatures. In the present paper amplification ratios are
compared across five sets of observed temperatures.

The remainder of the paper is organized as follows: Section 2 lays out the
system of estimating equations used to estimate trend ratios of pairs of
linear trending time series. The asymptotic properties of the IV estimator of
the trend slope ratios is provided. As shown by \cite{vogelsang-nawaz-JTSA},
IV estimation is used rather than OLS because of systematic correlation
between the regressors and regression errors. Section 3 provides a framework
for testing linear restrictions of trend ratios across equations in the
system. Test statistics are configured to be robust to serial correlation in
the fluctuations of the time series around their trends as well as correlation
across time series. Tests of equal trend ratios between pairs is obtained as a
special case. When the trend slopes are small relative to the variation of the
time series around their trend, the IV estimators, and tests built on them,
can behave very differently than when trend slopes are large. To help offset
this sensitivity to the magnitude of the trend slopes, Section 4 explores an
alternative approach to inference for tests of equal trends that is labeled
the "product approach". The idea is related to the linear-in-trend slopes
inference method (\nocite{fieller1954}Fieller 1954) used by
\cite{vogelsang-nawaz-JTSA} for a single trend ratio. Unlike the linear in
trend slopes approach, the product approach is not fully robust to very small,
or even zero, trend slopes. However, as the finite sample simulations show in
Section 5, the product approach for testing equal trend slopes is often less
sensitive to very small trend slopes relative to tests based on the IV
estimators. The simulations suggest that while the product approach can be
more robust to very small trend slopes, IV based tests tend to be less
sensitive to strong autocorrelation (tendency to over-reject under the null
hypothesis is less). Section 6 provides some practical recommendations for
empirical researchers. Section 7 uses the proposed tests of equal trend ratios
to compare amplification ratios across five sets of observed temperatures for
the tropics of the earth. Section 8 concludes, and proofs are provided in an appendix.

\section{The Model and Estimation}

\subsection{Statistical Model and Assumptions}

The setup consists of $i=1,2,\ldots,n$ \textit{pairs} of univariate linear
trending time series, $y_{1t}^{(i)}$ and $y_{2t}^{(i)}$, given by%
\begin{equation}
y_{1t}^{(i)}=\mu_{1}^{(i)}+\beta_{1}^{(i)}t+u_{1t}^{(i)}, \label{1.1}%
\end{equation}%
\begin{equation}
y_{2t}^{(i)}=\mu_{2}^{(i)}+\beta_{2}^{(i)}t+u_{2t}^{(i)}, \label{1.2}%
\end{equation}
where $u_{1t}^{(i)}$ and $u_{2t}^{(i)}$ are mean zero covariance stationary
processes and $t=1,2,\ldots,T$. The parameters of interest are the $n$ ratios
of trend slopes between each pairs of series given by%
\[
\theta^{(i)}=\frac{\beta_{1}^{(i)}}{\beta_{2}^{(i)}},
\]
where $\beta_{2}^{(i)}\neq0$. Using simple algebra from
\cite{vogelsang-nawaz-JTSA}, estimating equations for the ratios can be
derived as%
\begin{equation}
y_{1t}^{(i)}=\delta^{(i)}+\theta^{(i)}y_{2t}^{(i)}+\epsilon_{\theta t}^{(i)},
\label{y1y2reg}%
\end{equation}
where
\[
\delta^{(i)}=\mu_{1}^{(i)}-\theta^{(i)}\mu_{2}^{(i)},\text{ \ \ \ \ }%
\epsilon_{\theta t}^{(i)}=u_{1t}^{(i)}-\theta^{(i)}u_{2t}^{(i)}.
\]
Throughout the paper, the time series are assumed to be covariance stationary
around their respective linear trends and that sufficient weak dependence
holds so that a functional central limit theorem (FCLT) holds for the
$2n\times1$ vector:%
\[
\mathbf{U}_{t}=\left[
\begin{array}
[c]{c}%
\mathbf{U}_{1t}\\
\mathbf{U}_{2t}%
\end{array}
\right]  ,
\]
where $\mathbf{U}_{1t}=\left[  u_{1t}^{(1)},u_{1t}^{(2)},\ldots,u_{1t}%
^{(n)}\right]  ^{\prime}$ and $\mathbf{U}_{2t}=\left[  u_{2t}^{(1)}%
,u_{2t}^{(2)},\ldots,u_{2t}^{(n)}\right]  ^{\prime}$. Specifically, it is
assumed that%
\begin{equation}
T^{-1/2}%
{\displaystyle\sum_{t=1}^{[rT]}}
\mathbf{U}_{t}=T^{-1/2}%
{\displaystyle\sum_{t=1}^{[rT]}}
\left[
\begin{array}
[c]{c}%
\mathbf{U}_{1t}\\
\mathbf{U}_{2t}%
\end{array}
\right]  \Rightarrow\left[
\begin{array}
[c]{c}%
\mathbf{B}_{\mathbf{u}1}(r)\\
\mathbf{B}_{\mathbf{u}2}(r)
\end{array}
\right]  \equiv\mathbf{B}_{\mathbf{u}}\mathbf{(r),} \label{fclt}%
\end{equation}
where $r\in\lbrack0,1]$ and $[rT]$ is the integer part of $rT$. The elements
of the $n\times1$ vectors of Brownian motions, $\mathbf{B}_{\mathbf{u}1}(r)$
and $\mathbf{B}_{\mathbf{u}2}(r)$, are given by%
\[
\mathbf{B}_{\mathbf{u}1}(r)=\left[  B_{u1}^{(1)}(r),B_{u1}^{(2)}%
(r)\ldots,B_{u1}^{(n)}(r)\right]  ^{\prime},\text{ \ \ \ \ }\mathbf{B}%
_{\mathbf{u}2}(r)=\left[  B_{u2}^{(1)}(r),B_{u2}^{(2)}(r)\ldots,B_{u2}%
^{(n)}(r)\right]  ^{\prime}.
\]
The vector of Brownian motions, $\mathbf{B}_{\mathbf{u}}\mathbf{(r)}$, can be
written as $\mathbf{\Lambda}_{\mathbf{u}}\mathbf{W}_{\mathbf{u}}(r)$ where
$\mathbf{W}_{\mathbf{u}}(r)$ is a $2n\times1$ vector of independent standard
Wiener processes and $\mathbf{\Omega}_{\mathbf{u}}=\mathbf{\Lambda
}_{\mathbf{u}}\mathbf{\Lambda}_{\mathbf{u}}^{\prime}$ is the long run variance
of $\mathbf{U}_{t}$. It is\textit{\ not} assumed that $\mathbf{\Omega
}_{\mathbf{u}}$ is diagonal allowing for correlation across elements
of\ $\mathbf{U}_{t}$ (within and across pairs). In addition to (\ref{fclt}),
it is assumed that $\mathbf{U}_{t}$ is ergodic for the first and second moments.

Stacking the individual estimation equation errors, $\epsilon_{\theta t}%
^{(i)}$, gives the $n\times1$ vector, $\scalebox{1.5}{$\bm{\epsilon}$}_{\theta
t}$, that can be written as%
\[
\scalebox{1.5}{$\bm{\epsilon}$}_{\theta t}=\mathbf{U}_{1t}-\mathbf{D}%
_{\mathbf{\theta}}\mathbf{U}_{2t},
\]
where%
\[
\mathbf{\theta=}\left[  \theta^{(1)},\theta^{(2)},\ldots,\theta^{(n)}\right]
^{\prime},
\]
and $\mathbf{D}_{\mathbf{\theta}}$ is an $n\times n$ diagonal matrix with
$i^{th}$ diagonal elements $\theta^{(i)}$. Notice that
$\scalebox{1.5}{$\bm{\epsilon}$}_{\theta t}$ can be written as a linear
function of $\mathbf{U}_{t}$ through the relationship%
\[
\scalebox{1.5}{$\bm{\epsilon}$}_{\theta t}=\left[  \mathbf{I}_{n}%
,-\mathbf{D}_{\mathbf{\theta}}\right]  \mathbf{U}_{t},
\]
where $\mathbf{I}_{n}$ is an $n\times n$ identity matrix. It immediately
follows from (\ref{fclt}) that%
\begin{equation}
T^{-1/2}%
{\displaystyle\sum_{t=1}^{[rT]}}
\scalebox{1.5}{$\bm{\epsilon}$}_{\theta t}\Rightarrow\left[  \mathbf{I}%
_{n},-\mathbf{D}_{\mathbf{\theta}}\right]  \mathbf{B}_{\mathbf{u}%
}\mathbf{(r)\sim\Lambda}_{\mathbf{\epsilon}}\mathbf{W}_{\mathbf{\epsilon}}(r),
\label{fclteps}%
\end{equation}
where $\mathbf{W}_{\mathbf{\epsilon}}(r)$ is an $n\times1$ vector of
independent Wiener processes and%
\[
\mathbf{\Omega}_{\mathbf{\epsilon}}=\mathbf{\Lambda}_{\mathbf{\epsilon}%
}\mathbf{\Lambda}_{\mathbf{\epsilon}}^{\prime}=\left[  \mathbf{I}%
_{n},-\mathbf{D}_{\mathbf{\theta}}\right]  \mathbf{\Lambda}_{\mathbf{u}%
}\mathbf{\Lambda}_{\mathbf{u}}^{\prime}\left[  \mathbf{I}_{n},-\mathbf{D}%
_{\mathbf{\theta}}\right]  ^{\prime},
\]
is the long run variance of $\scalebox{1.5}{$\bm{\epsilon}$}_{\theta t}$.

\subsection{Estimation of the Trend Slope Ratios}

The trend slope ratios $\theta^{(i)}$ can be estimated using the estimating
equations (\ref{y1y2reg}). While ordinary least squares (OLS) applied equation
by equation for each $i$ would seem natural, \cite{vogelsang-nawaz-JTSA}
showed that OLS applied to (\ref{y1y2reg}), for a given $i$, yields a biased
estimator of $\theta^{(i)}$. The bias is caused by correlation between
$y_{2t}^{(i)}$ and $\epsilon_{\theta t}^{(i)}$ through the common term
$u_{2t}^{(i)}$. Instead, \cite{vogelsang-nawaz-JTSA} recommend using
instrumental variables (IV) estimation using $t$ as the instrument for
$y_{2t}^{(i)}$. The IV estimators are defined as%
\begin{equation}
\widehat{\theta}^{(i)}=\left(
{\displaystyle\sum\limits_{t=1}^{T}}
(t-\overline{t})(y_{2t}^{(i)}-\overline{y}_{2}^{(i)})\right)  ^{-1}%
{\displaystyle\sum\limits_{t=1}^{T}}
(t-\overline{t})(y_{1t}^{(i)}-\overline{y}_{1}^{(i)}), \label{iv}%
\end{equation}
where $\overline{y}_{1}^{(i)}=T^{-1}\sum_{t=1}^{T}y_{1t}^{(i)}$, $\overline
{y}_{2}^{(i)}=T^{-1}\sum_{t=1}^{T}y_{2t}^{(i)}$ and $\overline{t}=T^{-1}%
\sum_{t=1}^{T}t$ are sample averages. Standard algebra gives the relationship%
\[
\widehat{\theta}^{(i)}-\theta^{(i)}=\left(
{\displaystyle\sum\limits_{t=1}^{T}}
(t-\overline{t})(y_{2t}^{(i)}-\overline{y}_{2}^{(i)})\right)  ^{-1}%
{\displaystyle\sum\limits_{t=1}^{T}}
(t-\overline{t})\epsilon_{\theta t}^{(i)}.
\]
Notice that the IV estimator can be equivalently written as%
\[
\widehat{\theta}^{(i)}=\frac{\widehat{\beta}_{1}^{(i)}}{\widehat{\beta}%
_{2}^{(i)}},
\]
where $\widehat{\beta}_{1}^{(i)}$ and $\widehat{\beta}_{2}^{(i)}$ are the OLS
estimators of $\beta_{1}^{(i)}$ and $\beta_{2}^{(i)}$ based on regressions
(\ref{1.1}) and (\ref{1.2}):%
\begin{equation}
\widehat{\beta}_{1}^{(i)}=\left(  \sum_{t=1}^{T}(t-\overline{t})^{2}\right)
^{-1}\sum_{t=1}^{T}(t-\overline{t})(y_{1t}^{(i)}-\overline{y}_{1}^{(i)}),
\label{beta1ols}%
\end{equation}%
\begin{equation}
\widehat{\beta}_{2}^{(i)}=\left(  \sum_{t=1}^{T}(t-\overline{t})^{2}\right)
^{-1}\sum_{t=1}^{T}(t-\overline{t})(y_{2t}^{(i)}-\overline{y}_{2}^{(i)}).
\label{beta2ols}%
\end{equation}
Stack the $\widehat{\theta}^{(i)}$ into the vector%
\[
\widehat{\mathbf{\theta}}\mathbf{=}\left[  \widehat{\theta}^{(1)}%
,\widehat{\theta}^{(2)},\ldots,\widehat{\theta}^{(n)}\right]  ^{\prime}.
\]
Using (\ref{iv}), $\widehat{\mathbf{\theta}}$ can be written as%
\[
\widehat{\mathbf{\theta}}\text{ }\mathbf{=}\text{ }\widehat{\mathbf{D}}%
_{2}^{-1}\sum_{t=1}^{T}\left(  \mathbf{y}_{1t}-\overline{\mathbf{y}}%
_{1}\right)  \left(  t-\overline{t}\right)  ,
\]
where $\mathbf{y}_{1t}=\left[  y_{1t}^{(1)},y_{1t}^{(2)},\ldots,y_{1t}%
^{(n)}\right]  ^{\prime}$ and $\widehat{\mathbf{D}}_{2}$ is an $n\times n$
diagonal matrix with $i^{th}$ diagonal element given by $\sum_{t=1}^{T}\left(
t-\overline{t}\right)  \left(  y_{2t}^{(i)}-\overline{y}_{2}^{(i)}\right)  $.
Standard calculations give%
\begin{equation}
\widehat{\mathbf{\theta}}-\mathbf{\theta=}\text{ }\widehat{\mathbf{D}}%
_{2}^{-1}\sum_{t=1}^{T}\left(  t-\overline{t}\right)
\scalebox{1.5}{$\bm{\epsilon}$}_{\theta t}, \label{thetahat_centered}%
\end{equation}
which holds as long as the trend slopes are nonzero. When trends slopes are
zero, $\mathbf{\theta}$ is not defined and it follows that%
\begin{equation}
\widehat{\theta}^{(i)}=\frac{%
{\textstyle\sum\nolimits_{t=1}^{T}}
(t-\overline{t})(u_{1t}^{(i)}-\overline{u}_{1}^{(i)})}{%
{\textstyle\sum\nolimits_{t=1}^{T}}
(t-\overline{t})(u_{2t}^{(i)}-\overline{u}_{2}^{(i)})}=\frac{%
{\textstyle\sum\nolimits_{t=1}^{T}}
(t-\overline{t})u_{1t}^{(i)}}{%
{\textstyle\sum\nolimits_{t=1}^{T}}
(t-\overline{t})u_{2t}^{(i)}}. \label{thetahat_zero_slopes}%
\end{equation}
For the rest of the paper, the focus is on the IV estimator,
$\widehat{\mathbf{\theta}}$, and tests of linear restrictions regarding
$\mathbf{\theta}$.

\subsection{Asymptotic Properties of the IV Estimator}

\noindent The following Theorem gives the asymptotic properties of
$\widehat{\mathbf{\theta}}$ under the assumption that the FCLT (\ref{fclt})
holds. The asymptotic limit of $\widehat{\mathbf{\theta}}$ depends on the
magnitude of the trend slope parameters, $\beta_{1}^{(i)},\beta_{2}^{(i)}$
relative to the variation in the random components (noise), $u_{1t}^{(i)}$ and
$u_{2t}^{(i)}$.

\begin{theorem}
Suppose that (\ref{fclt}) holds which implies that (\ref{fclteps}) holds. Let
$\overline{\beta}_{1}^{(i)},\overline{\beta}_{2}^{(i)}$ be fixed with respect
to $T$. Let $\mathbf{D}_{\overline{\mathbf{\beta}}_{2}}$ be an $n\times n$
diagonal matrix with $i^{th}$ diagonal element $\overline{\beta}_{2}^{(i)}$,
and let $\mathbf{D}_{\mathbf{B}_{u2}}$be an $n\times n$ diagonal matrix with
$i^{th}$ diagonal element $\int_{0}^{1}\left(  s-\frac{1}{2}\right)
dB_{u2}^{(i)}(s)$. The following hold as $T\rightarrow\infty$:%
\[
\]
Case 1 (large to small slopes): For $\beta_{1}^{(i)}=T^{-\kappa}%
\overline{\beta}_{1}^{(i)}$, $\beta_{2}^{(i)}=T^{-\kappa}\overline{\beta}%
_{2}^{(i)}$ with $0\leq\kappa<\frac{3}{2}$,%
\[
T^{3/2-\kappa}\left(  \widehat{\mathbf{\theta}}-\mathbf{\theta}\right)
\Rightarrow\left(  \frac{1}{12}\mathbf{D}_{\overline{\mathbf{\beta}}_{2}%
}\right)  ^{-1}\mathbf{\Lambda}_{\mathbf{\epsilon}}\int_{0}^{1}\left(
s-\frac{1}{2}\right)  d\mathbf{W}_{\mathbf{\epsilon}}(s)\sim N\left(
\mathbf{0},12\mathbf{D}_{\overline{\mathbf{\beta}}_{2}}^{-1}\mathbf{\Omega
}_{\mathbf{\epsilon}}\mathbf{D}_{\overline{\mathbf{\beta}}_{2}}^{-1}\right)
,
\]
Case 2 (very small slopes): For $\beta_{1}^{(i)}=T^{-3/2}\overline{\beta}%
_{1}^{(i)},$ $\beta_{2}^{(i)}=T^{-3/2}\overline{\beta}_{2}^{(i)}$
($\kappa=\frac{3}{2}$),%
\[
T^{3/2}\left(  \widehat{\mathbf{\theta}}-\mathbf{\theta}\right)  =\left(
\widehat{\mathbf{\theta}}-\mathbf{\theta}\right)  \Rightarrow\left(  \frac
{1}{12}\mathbf{D}_{\overline{\mathbf{\beta}}_{2}}+\mathbf{D}_{\mathbf{B}_{u2}%
}\right)  ^{-1}\mathbf{\Lambda}_{\mathbf{\epsilon}}\int_{0}^{1}\left(
s-\frac{1}{2}\right)  d\mathbf{W}_{\mathbf{\epsilon}}(s).
\]
Case 3 (zero slopes): For $\beta_{1}^{(i)}=0$, $\beta_{2}^{(i)}=0$,%
\[
\widehat{\mathbf{\theta}}\Rightarrow\mathbf{D}_{\mathbf{B}_{u2}}^{-1}\int%
_{0}^{1}\left(  s-\frac{1}{2}\right)  d\mathbf{B}_{\mathbf{u}1}(s).
\]

\end{theorem}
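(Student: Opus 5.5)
The plan is to work from the exact representation (\ref{thetahat_centered}), $\widehat{\mathbf{\theta}}-\mathbf{\theta}=\widehat{\mathbf{D}}_{2}^{-1}\sum_{t=1}^{T}(t-\overline{t})\bm{\epsilon}_{\theta t}$, and treat the numerator and denominator separately. Both are handled equation by equation, because $\widehat{\mathbf{D}}_{2}$ is diagonal. The limits are then combined jointly via the continuous mapping theorem. Joint convergence is available because every random quantity is a functional of the same partial sum process in (\ref{fclt}).

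\textbf{Numerator.} Write $T^{-3/2}\sum_{t=1}^{T}(t-\overline{t})\bm{\epsilon}_{\theta t}=\sum_{t=1}^{T}\big(\tfrac{t}{T}-\tfrac{\overline{t}}{T}\big)T^{-1/2}\bm{\epsilon}_{\theta t}$. Summation by parts rewrites this as $\big(\tfrac{T}{T}-\tfrac{\overline{t}}{T}\big)S_T-T^{-1}\sum_{t=1}^{T-1}S_{t}$, where $S_t=T^{-1/2}\sum_{j\le t}\bm{\epsilon}_{\theta j}$. This is a continuous functional of the partial sum process. By (\ref{fclteps}) it converges to
\[
\tfrac12\mathbf{\Lambda}_{\mathbf{\epsilon}}\mathbf{W}_{\mathbf{\epsilon}}(1)-\mathbf{\Lambda}_{\mathbf{\epsilon}}\int_0^1\mathbf{W}_{\mathbf{\epsilon}}(s)ds=\mathbf{\Lambda}_{\mathbf{\epsilon}}\int_0^1(s-\tfrac12)d\mathbf{W}_{\mathbf{\epsilon}}(s),
\]
using $\overline{t}/T\to 1/2$. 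The same argument applied to $\mathbf{U}_{2t}$ and $\mathbf{U}_{1t}$ gives $T^{-3/2}\sum(t-\overline{t})u_{2t}^{(i)}\Rightarrow\int_0^1(s-\frac12)dB_{u2}^{(i)}(s)$, and likewise for $u_{1t}^{(i)}$, jointly.

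\textbf{Denominator.} For each $i$, since $\sum(t-\overline{t})=0$,
\[
\sum_{t}(t-\overline{t})(y_{2t}^{(i)}-\overline{y}_{2}^{(i)})=\beta_2^{(i)}\sum_t(t-\overline{t})^2+\sum_t(t-\overline{t})u_{2t}^{(i)} .
\]
Substituting $\beta_2^{(i)}=T^{-\kappa}\overline{\beta}_2^{(i)}$ and using $\sum(t-\overline{t})^2=T^3/12+O(T^2)$, we get
\[
T^{-3/2}\widehat{\mathbf{D}}_2 = T^{3/2-\kappa}\big(\tfrac1{12}\mathbf{D}_{\overline{\mathbf{\beta}}_2}+o(1)\big)+\mathbf{D}_{\mathbf{B}_{u2}}+o_p(1).
\]
The three cases now follow from the size of $T^{3/2-\kappa}$.

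\textbf{Case 1} ($\kappa<3/2$). Here $T^{-3+\kappa}\widehat{\mathbf{D}}_2\to\tfrac1{12}\mathbf{D}_{\overline{\mathbf{\beta}}_2}$ in probability, because the stochastic term is $O_p(T^{\kappa-3/2})=o_p(1)$. Then
\[
T^{3/2-\kappa}(\widehat{\mathbf{\theta}}-\mathbf{\theta})=(T^{-3+\kappa}\widehat{\mathbf{D}}_2)^{-1}T^{-3/2}\sum(t-\overline{t})\bm{\epsilon}_{\theta t},
\]
and Slutsky gives the stated limit. Normality follows because $\int_0^1(s-\frac12)d\mathbf{W}_{\mathbf{\epsilon}}$ is Gaussian with variance $\int_0^1(s-\frac12)^2ds\,\mathbf{I}=\frac1{12}\mathbf{I}$. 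The covariance is therefore $144\cdot\frac1{12}\mathbf{D}^{-1}_{\overline{\mathbf{\beta}}_2}\mathbf{\Omega}_{\mathbf{\epsilon}}\mathbf{D}^{-1}_{\overline{\mathbf{\beta}}_2}$, which matches the statement.

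\textbf{Case 2} ($\kappa=3/2$). Both denominator terms are of the same order, so $T^{-3/2}\widehat{\mathbf{D}}_2\Rightarrow\frac1{12}\mathbf{D}_{\overline{\mathbf{\beta}}_2}+\mathbf{D}_{\mathbf{B}_{u2}}$ jointly with the numerator. The continuous mapping theorem then applies to the ratio, because the limiting diagonal entries are nonzero with probability one: each is a nondegenerate Gaussian, provided the $i$th diagonal element of $\mathbf{\Omega}_{\mathbf{u}}$ for $u_{2}^{(i)}$ is positive. Note also that $\mathbf{\theta}=\overline{\beta}^{(i)}_1/\overline{\beta}^{(i)}_2$ does not depend on $T$. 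This explains why no rescaling of $\widehat{\mathbf{\theta}}-\mathbf{\theta}$ is needed, as the statement's $T^{3/2}(\cdot)=(\cdot)$ indicates. I would read that as the factors $T^{3/2}$ cancelling between numerator and denominator.

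\textbf{Case 3} (zero slopes). Start instead from (\ref{thetahat_zero_slopes}). Dividing numerator and denominator by $T^{3/2}$ and applying the joint limits above gives the elementwise ratio $\mathbf{D}_{\mathbf{B}_{u2}}^{-1}\int(s-\frac12)d\mathbf{B}_{\mathbf{u}1}$.

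\textbf{Main obstacle.} The substantive steps are the joint weak convergence of numerator and denominator and the almost-sure invertibility of the random limit matrix in Cases 2--3. I would handle joint convergence by expressing everything as one continuous functional of the $2n$-dimensional process $T^{-1/2}\sum_{t\le[rT]}\mathbf{U}_t$. I would also note that $\bm{\epsilon}_{\theta t}$ is a fixed linear transform of $\mathbf{U}_t$, so no additional FCLT is required.
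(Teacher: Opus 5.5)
Your proposal is correct and follows essentially the same route as the paper: scale the representation (\ref{thetahat_centered}) so the numerator is $T^{-3/2}\sum_{t}(t-\overline{t})\bm{\epsilon}_{\theta t}$, get the denominator limit by substituting $y_{2t}^{(i)}-\overline{y}_{2}^{(i)}=\beta_{2}^{(i)}(t-\overline{t})+(u_{2t}^{(i)}-\overline{u}_{2}^{(i)})$ (the paper's Lemmas 1--3), apply the continuous mapping theorem, and handle Case 3 separately via (\ref{thetahat_zero_slopes}). You also address two points the paper leaves implicit: joint convergence of numerator and denominator, and almost-sure nonsingularity of the random limit $\frac{1}{12}\mathbf{D}_{\overline{\mathbf{\beta}}_{2}}+\mathbf{D}_{\mathbf{B}_{u2}}$ (or $\mathbf{D}_{\mathbf{B}_{u2}}$); the paper even writes $T^{-3/2}\widehat{\mathbf{D}}_{2}\overset{p}{\rightarrow}$ a random limit in Case 2, where only weak convergence holds.
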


For Cases 1 and 2 the limits given in Theorem 1 are multivariate versions of
the limits obtained by \cite{vogelsang-nawaz-JTSA} and are identical to the
limits in \cite{vogelsang-nawaz-JTSA} when $n=1$. In Case 1,
$\widehat{\mathbf{\theta}}$ consistently estimates $\mathbf{\theta}$ and the
precision of $\widehat{\mathbf{\theta}}$ depends on the magnitudes of the
trends slopes in the denominator series. In Case 2, $\widehat{\mathbf{\theta}%
}$ becomes inconsistent. This is not surprising because the case of very small
slopes means that the trend component of $y_{2t}^{(i)}$ is dominated by the
noise, $u_{2t}^{(i)}$, in which case $t$ is a weak instrument (Staiger and
Stock 1997\nocite{staiger-stock}) for $y_{2t}^{(i)}$. When all trend slopes
are zero (Case 3), trend ratios are not defined and $\widehat{\mathbf{\theta}%
}$ converges to a random vector that depends on $\mathbf{B}_{\mathbf{u}%
}\mathbf{(r)}$. It is obvious that for very small and zero slopes, inference
will be affected by the different behavior of $\widehat{\mathbf{\theta}}$.

\section{Testing Linear Restrictions Across Trend Ratios}

\noindent Suppose one is interested in testing linear restrictions across the
trend slopes, $\theta^{(i)}$, using the IV estimators $\widehat{\theta}^{(i)}%
$. More formally, consider testing the null hypothesis%
\[
H_{0}:\mathbf{R\theta=r,}%
\]
against the alternative%
\[
H_{1}:\mathbf{R\theta\neq r,}%
\]
where $\mathbf{R}$ is a known $q\times n$ matrix with $rank(\mathbf{R)=}q$ and
$\mathbf{r}$ is a known $q\times1$ vector.

Using the Case 1 limit from Theorem 1, the Wald statistic for testing $H_{0}$
against $H_{1}$ is given by%
\[
Wald_{IV}=\left(  \mathbf{R}\widehat{\mathbf{\theta}}-\mathbf{r}\right)
^{\prime}\left[  \mathbf{R}\widehat{\mathbf{V}}_{IV}\mathbf{R}^{\prime
}\right]  ^{-1}\left(  \mathbf{R}\widehat{\mathbf{\theta}}-\mathbf{r}\right)
\]
where%
\[
\widehat{\mathbf{V}}_{IV}=\left(  \sum\nolimits_{t=1}^{T}(t-\overline{t}%
)^{2}\right)  \widehat{\mathbf{D}}_{2}^{-1}\widehat{\mathbf{\Omega}%
}_{\mathbf{\epsilon}}\widehat{\mathbf{D}}_{2}^{-1}.
\]
The middle term of $\widehat{\mathbf{V}}_{IV}$ is an estimator of
$\mathbf{\Omega}_{\mathbf{\epsilon}}$ given by%
\[
\widehat{\mathbf{\Omega}}_{\mathbf{\epsilon}}=\widehat{\mathbf{\Gamma}%
}_{\mathbf{\epsilon}0}+%
{\displaystyle\sum\limits_{j=1}^{T-1}}
k\left(  \frac{j}{M}\right)  \left(  \widehat{\mathbf{\Gamma}}%
_{\mathbf{\epsilon}j}+\widehat{\mathbf{\Gamma}}_{\mathbf{\epsilon}j}^{\prime
}\right)  \text{, \ \ }\widehat{\mathbf{\Gamma}}_{\mathbf{\epsilon}j}=T^{-1}%
{\displaystyle\sum_{t=j+1}^{T}}
\widehat{\scalebox{1.5}{$\bm{\epsilon}$}}_{\theta t}%
\widehat{\scalebox{1.5}{$\bm{\epsilon}$}}_{\theta t-j}^{\prime},
\]
where $\widehat{\scalebox{1.5}{$\bm{\epsilon}$}}_{\theta t}$ is the vector of
IV residuals given by%
\[
\widehat{\scalebox{1.5}{$\bm{\epsilon}$}}_{\theta t}=\left[  \widehat{\epsilon
}_{\theta t}^{(1)},\widehat{\epsilon}_{\theta t}^{(2)},\ldots
,\widehat{\epsilon}_{t}^{(n)}\right]  ^{\prime},
\]
with%
\[
\widehat{\epsilon}_{t}^{(i)}=y_{1t}^{(i)}-\overline{y}_{1}^{(i)}%
-\widehat{\theta}^{(i)}\left(  y_{2t}^{(i)}-\overline{y}_{2}^{(i)}\right)  .
\]
The long run variance estimator, $\widehat{\mathbf{\Omega}}_{\mathbf{\epsilon
}}$, is of the well known kernel form where $k(x)$ is the kernel
(downweighting function) and $M$ is the bandwidth tuning parameter that
controls the extent of downweighting by the kernel. For the case where one
restriction is being tested, $q=1$, a $t$-statistic can be used to test
one-sided hypotheses:%
\[
t_{IV}=\frac{\mathbf{R}\widehat{\mathbf{\theta}}-\mathbf{r}}{\sqrt
{\mathbf{R}\widehat{\mathbf{V}}_{IV}\mathbf{R}^{\prime}}}.
\]

In order to understand the power properties of $Wald_{IV}$ and $t_{IV}$, their
asymptotic limits are derived for local alternatives, $H_{1L}$, of the form%
\[
H_{1L}:\mathbf{R\theta=r+}\overline{\mathbf{\Delta}}T^{-3/2+\kappa}\mathbf{,}%
\]
where $\kappa$ is the same parameter used in Theorem 1 to model the trend
slopes as local to zero. In deriving the asymptotic results, the bandwidth
parameter for $\widehat{\mathbf{\Omega}}_{\mathbf{\epsilon}}$ is assumed to be
a fixed proportion of the sample size, $b\in(0,1]$, i.e. $M=bT$. Modeling
$M/T$ as a fixed constant gives the fixed-$b$ (or fixed-smoothing) limit of
$\widehat{\mathbf{\Omega}}_{\mathbf{\epsilon}}$ (and $t_{IV}$). The advantage
of the fixed-$b$ approach is that it delivers an asymptotic random variable
and associated critical values that depend on the bandwidth and kernel. This
is in contrast to appealing to a consistency result for
$\widehat{\mathbf{\Omega}}_{\mathbf{\epsilon}}$ which would not depend on the
bandwidth or kernel. For more details on the fixed-$b$ approach see
\cite{bunzel-vogelsang}, \cite{bunzel-vogelsang}, \cite{jansson-kvb},
\cite{kvnewasym}, \cite{phillips-sun-jin-higherpower}, \cite{zhang2013fixed},
\cite{sun-twostepgmm}, \cite{LLSW18}, and \cite{LLS21}.

Because the form of the fixed-$b$ limit of the test statistics depends on the
type of kernel function used to compute $\widehat{\mathbf{\Omega}%
}_{\mathbf{\epsilon}}$, definitions from \cite{kvnewasym} are used. A kernel
is labelled Type 1 if $k\left(  x\right)  $ is twice continuously
differentiable everywhere and as a Type 2 kernel if $k\left(  x\right)  $ is
continuous, $k\left(  x\right)  =0$ for $\left\vert x\right\vert \geq1$ and
$k\left(  x\right)  $ is twice continuously differentiable everywhere except
at $\left\vert x\right\vert =1.$ The Bartlett kernel (which is neither Type 1
or 2) is considered separately.

The fixed-$b$ limiting distributions are expressed in terms of the following
stochastic functions. Let $\mathbf{Q}(r)$ be a generic vector stochastic
process. Define the random variable $\mathbf{P}_{b}(\mathbf{Q}(r))$ as%
\[
\mathbf{P}_{b}(\mathbf{Q}(r))=\left\{
\begin{array}
[c]{cc}%
\int_{0}^{1}\int_{0}^{1}-k^{\ast\prime\prime}\left(  r-s\right)
\mathbf{Q}(r)\mathbf{Q}(s)^{\prime}drds & \text{if }k\left(  x\right)  \text{
is Type 1}\\
& \\
\int\int_{\left\vert r-s\right\vert <b}-k^{\ast\prime\prime}\left(
r-s\right)  \mathbf{Q}\left(  r\right)  \mathbf{Q}\left(  s\right)  ^{\prime
}drds & \\
+k_{-}^{\ast\prime}\left(  b\right)  \int_{0}^{1-b}\left(  \mathbf{Q}\left(
r+b\right)  \mathbf{Q}\left(  r\right)  ^{\prime}+\mathbf{Q}\left(  r\right)
\mathbf{Q}\left(  r+b\right)  ^{\prime}\right)  dr & \text{if }k\left(
x\right)  \text{ is Type 2}\\
& \\
\frac{2}{b}\int_{0}^{1}\mathbf{Q}\left(  r\right)  \mathbf{Q}\left(  r\right)
^{\prime}dr-\frac{1}{b}\int_{0}^{1-b}\left(  \mathbf{Q}\left(  r+b\right)
\mathbf{Q}\left(  r\right)  ^{\prime}+\mathbf{Q}\left(  r\right)
\mathbf{Q}\left(  r+b\right)  ^{\prime}\right)  dr & \text{if }k\left(
x\right)  \text{ is Bartlett}%
\end{array}
\right.
\]
where $k^{\ast}(x)=k\left(  \frac{x}{b}\right)  $ and $k_{-}^{\ast^{\prime}}$
is the first derivative of $k^{\ast}$ from below (left).

\begin{theorem}
Suppose that (\ref{fclt}) holds which implies that (\ref{fclteps}) holds. Let
$\overline{\beta}_{1}^{(i)},\overline{\beta}_{2}^{(i)}$ be fixed with respect
to $T$. Let $\mathbf{D}_{\overline{\mathbf{\beta}}_{2}}$ and $\mathbf{D}%
_{\mathbf{B}_{u2}}$ be defined as in Theorem 1. Let $\mathbf{\Lambda
}_{\mathbf{\epsilon}}^{\ast}$ be the matrix square root of $\mathbf{\Omega
}_{\epsilon}^{\ast}=\mathbf{RA}_{\mathbf{B}_{2}}^{-1}\mathbf{\Lambda
}_{\mathbf{\epsilon}}\mathbf{\Lambda}_{\mathbf{\epsilon}}^{\prime}%
\mathbf{A}_{\mathbf{B}_{2}}^{-1}\mathbf{R}^{\prime}$ ($\mathbf{\Lambda
}_{\mathbf{\epsilon}}^{\ast}\mathbf{\Lambda}_{\mathbf{\epsilon}}^{\ast\prime
}=\mathbf{\Omega}_{\epsilon}^{\ast}$). Suppose $\mathbf{R\theta=r+}%
\overline{\mathbf{\Delta}}T^{-3/2+\kappa}$. The following hold as
$T\rightarrow\infty$.%
\[
\]
Case 1 (large to small slopes): For $\beta_{1}^{(i)}=T^{-\kappa}%
\overline{\beta}_{1}^{(i)}$, $\beta_{2}^{(i)}=T^{-\kappa}\overline{\beta}%
_{2}^{(i)}$ with $0\leq\kappa<\frac{3}{2}$,%
\[
Wald_{IV}\Rightarrow\left(  \mathbf{Z}_{\epsilon}^{\ast}+\frac{1}{\sqrt{12}%
}\mathbf{\Lambda}_{\mathbf{\epsilon}}^{\ast-1}\overline{\mathbf{\Delta}%
}\right)  ^{\prime}\mathbf{P}_{b}(\widetilde{\mathbf{W}}_{\mathbf{\epsilon}%
}^{\ast}(r))^{-1}\left(  \mathbf{Z}_{\epsilon}^{\ast}+\frac{1}{\sqrt{12}%
}\mathbf{\Lambda}_{\mathbf{\epsilon}}^{\ast-1}\overline{\mathbf{\Delta}%
}\right)  ,
\]
for $q=1$%
\[
t_{IV}\Rightarrow\frac{\mathbf{Z}_{\epsilon}^{\ast}+\frac{1}{\sqrt{12}%
}\mathbf{\Lambda}_{\mathbf{\epsilon}}^{\ast-1}\overline{\mathbf{\Delta}}%
}{\sqrt{\mathbf{P}_{b}(\widetilde{\mathbf{W}}_{\mathbf{\epsilon}}^{\ast}(r))}%
},
\]
where $\mathbf{Z}_{\epsilon}^{\ast}=\sqrt{12}\int_{0}^{1}\left(  s-\frac{1}%
{2}\right)  d\mathbf{W}_{\mathbf{\epsilon}}^{\ast}(s)$, $\widetilde{\mathbf{W}%
}_{\mathbf{\epsilon}}^{\ast}(r)=\mathbf{W}_{\mathbf{\epsilon}}^{\ast
}(r)-r\mathbf{W}_{\mathbf{\epsilon}}^{\ast}(1)-12L(r)\int_{0}^{1}\left(
s-\frac{1}{2}\right)  d\mathbf{W}_{\mathbf{\epsilon}}^{\ast}(s)$,
$L(r)=\int_{0}^{r}\left(  s-\frac{1}{2}\right)  ds$, and $\mathbf{W}%
_{\mathbf{\epsilon}}^{\ast}(r)$ is a $q\times1$ vector of independent Wiener
processes. Note that $\mathbf{Z}_{\epsilon}^{\ast}\sim\mathbf{N}%
(0,\mathbf{I}_{q})$ and is independent of $\widetilde{\mathbf{W}%
}_{\mathbf{\epsilon}}^{\ast}(r)$.%
\[
\]
Case 2 (very small slopes): For $\beta_{1}^{(i)}=T^{-3/2}\overline{\beta}%
_{1}^{(i)},$ $\beta_{2}^{(i)}=T^{-3/2}\overline{\beta}_{2}^{(i)}$
($\kappa=\frac{3}{2}$),%
\begin{align*}
Wald_{IV} &  \Rightarrow\left(  \mathbf{R}\left(  \frac{1}{12}\mathbf{D}%
_{\overline{\mathbf{\beta}}_{2}}+\mathbf{D}_{\mathbf{B}_{u2}}\right)
^{-1}\mathbf{\Lambda}_{\mathbf{\epsilon}}\int_{0}^{1}\left(  s-\frac{1}%
{2}\right)  d\mathbf{W}_{\mathbf{\epsilon}}(s)+\overline{\mathbf{\Delta}%
}\right)  ^{\prime}\\
&  \times\left[  \mathbf{R}\left(  \frac{1}{12}\mathbf{D}_{\overline
{\mathbf{\beta}}_{2}}+\mathbf{D}_{\mathbf{B}_{u2}}\right)  ^{-1}\mathbf{P}%
_{b}(\mathbf{H}_{1}(r))\left(  \frac{1}{12}\mathbf{D}_{\overline
{\mathbf{\beta}}_{2}}+\mathbf{D}_{\mathbf{B}_{u2}}\right)  ^{-1}%
\mathbf{R}^{\prime}\right]  ^{-1}\\
&  \times\left(  \mathbf{R}\left(  \frac{1}{12}\mathbf{D}_{\overline
{\mathbf{\beta}}_{2}}+\mathbf{D}_{\mathbf{B}_{u2}}\right)  ^{-1}%
\mathbf{\Lambda}_{\mathbf{\epsilon}}\int_{0}^{1}\left(  s-\frac{1}{2}\right)
d\mathbf{W}_{\mathbf{\epsilon}}(s)+\overline{\mathbf{\Delta}}\right)  ,
\end{align*}
for $q=1$%
\[
t_{IV}\Rightarrow\frac{\mathbf{R}\left(  \frac{1}{12}\mathbf{D}_{\overline
{\mathbf{\beta}}_{2}}+\mathbf{D}_{\mathbf{B}_{u2}}\right)  ^{-1}%
\mathbf{\Lambda}_{\mathbf{\epsilon}}\int_{0}^{1}\left(  s-\frac{1}{2}\right)
d\mathbf{W}_{\mathbf{\epsilon}}(s)+\overline{\mathbf{\Delta}}}{\sqrt{\frac
{1}{12}\mathbf{R}\left(  \frac{1}{12}\mathbf{D}_{\overline{\mathbf{\beta}}%
_{2}}+\mathbf{D}_{\mathbf{B}_{u2}}\right)  ^{-1}\mathbf{P}_{b}(\mathbf{H}%
_{1}(r))\left(  \frac{1}{12}\mathbf{D}_{\overline{\mathbf{\beta}}_{2}%
}+\mathbf{D}_{\mathbf{B}_{u2}}\right)  ^{-1}\mathbf{R}^{\prime}}},
\]
where $\mathbf{D}_{\overline{\mathbf{\beta}}_{2}}$ and $\mathbf{D}%
_{\mathbf{B}_{u2}}$ are defined in Theorem 1,%
\[
\mathbf{H}_{1}(r)=\mathbf{\Lambda}_{\mathbf{\epsilon}}\left(  \mathbf{W}%
_{\mathbf{\epsilon}}(r)-r\mathbf{W}_{\mathbf{\epsilon}}(1)\right)  -\left(
L(r)\mathbf{D}_{\overline{\mathbf{\beta}}_{2}}+\mathbf{D}_{\widehat{\mathbf{B}%
}_{u2}(r)}\right)  \left(  \left(  \frac{1}{12}\mathbf{D}_{\overline
{\mathbf{\beta}}_{2}}+\mathbf{D}_{\mathbf{B}_{u2}}\right)  ^{-1}%
\mathbf{\Lambda}_{\mathbf{\epsilon}}\int_{0}^{1}\left(  s-\frac{1}{2}\right)
d\mathbf{W}_{\mathbf{\epsilon}}(s)\right)  ,
\]
and $\mathbf{D}_{\widehat{\mathbf{B}}_{u2}(r)}$ is an $n\times n$ diagonal
matrix with $i^{th}$ diagonal element $\widehat{B}_{u2}^{(i)}(r)=B_{u2}%
^{(i)}(r)-rB_{u2}^{(i)}(1)$.%
\[
\]
Case 3 (zero slopes): For $\beta_{1}^{(i)}=0$, $\beta_{2}^{(i)}=0$,%
\begin{align*}
Wald_{IV} &  \Rightarrow\left(  \mathbf{RD}_{\mathbf{B}_{u2}}^{-1}\int_{0}%
^{1}\left(  s-\frac{1}{2}\right)  d\mathbf{B}_{\mathbf{u}1}(s)-\mathbf{r}%
\right)  ^{\prime}\left[  \frac{1}{12}\mathbf{RD_{\mathbf{B}_{u2}}%
^{-1}\mathbf{P}_{b}(\mathbf{H}_{2}(r))D_{\mathbf{B}_{u2}}^{-1}R}^{\prime
}\right]  ^{-1}\\
&  \text{ \ \ \ \ \ \ \ \ \ \ \ \ \ \ \ \ \ \ \ \ }\times\left(
\mathbf{RD}_{\mathbf{B}_{u2}}^{-1}\int_{0}^{1}\left(  s-\frac{1}{2}\right)
d\mathbf{B}_{\mathbf{u}1}(s)-\mathbf{r}\right)  ,
\end{align*}
for $q=1$%
\[
t_{IV}\Rightarrow\frac{\mathbf{RD}_{\mathbf{B}_{u2}}^{-1}\int_{0}^{1}\left(
s-\frac{1}{2}\right)  d\mathbf{B}_{\mathbf{u}1}(s)-\mathbf{r}}{\sqrt{\frac
{1}{12}\mathbf{RD_{\mathbf{B}_{u2}}^{-1}P}_{b}(\mathbf{H}_{2}%
(r))\mathbf{D_{\mathbf{B}_{u2}}^{-1}R}^{\prime}}},
\]
where%
\[
\mathbf{H}_{2}(r)=\mathbf{B}_{\mathbf{u}1}(s)-r\mathbf{B}_{\mathbf{u}%
1}(1)-\mathbf{D}_{\widehat{\mathbf{B}}_{u2}(r)}\mathbf{D}_{\mathbf{B}_{u2}%
}^{-1}\int_{0}^{1}\left(  s-\frac{1}{2}\right)  d\mathbf{B}_{\mathbf{u}1}(s).
\]

\end{theorem}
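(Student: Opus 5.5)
The plan is the standard fixed-$b$ argument: derive a functional limit for the partial sums of the IV residuals in each case, then apply the continuous mapping theorem to $\widehat{\mathbf{\Omega}}_{\mathbf{\epsilon}}$ through the representation of \cite{kvnewasym}. Let $\widehat{\mathbf{S}}_{[rT]}=\sum_{t=1}^{[rT]}\widehat{\scalebox{1.5}{$\bm{\epsilon}$}}_{\theta t}$. Summation by parts writes $\widehat{\mathbf{\Omega}}_{\mathbf{\epsilon}}$ as a quadratic form in $\widehat{\mathbf{S}}_{t}$ weighted by second differences of $k(j/bT)$, plus boundary terms. The boundary term at $t=T$ vanishes because $\widehat{\mathbf{S}}_{T}=\mathbf{0}$: the residuals are demeaned. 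Hence, whenever $T^{-1/2}\widehat{\mathbf{S}}_{[rT]}\Rightarrow\mathbf{Q}(r)$ jointly with the numerator of the test, the three kernel classes give $\widehat{\mathbf{\Omega}}_{\mathbf{\epsilon}}\Rightarrow\mathbf{P}_b(\mathbf{Q}(r))$. This is the usual Kiefer--Vogelsang result, which I take as given.

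For the residual process, $\widehat{\epsilon}^{(i)}_t=(\epsilon^{(i)}_{\theta t}-\overline{\epsilon}^{(i)}_\theta)-(\widehat\theta^{(i)}-\theta^{(i)})(y^{(i)}_{2t}-\overline{y}^{(i)}_2)$. In vector form,
\[
T^{-1/2}\widehat{\mathbf{S}}_{[rT]}=T^{-1/2}\sum_{t=1}^{[rT]}(\scalebox{1.5}{$\bm{\epsilon}$}_{\theta t}-\overline{\scalebox{1.5}{$\bm{\epsilon}$}}_\theta)-\mathbf{D}_{T}(r)\,(\widehat{\mathbf{\theta}}-\mathbf{\theta}),
\]
where $\mathbf{D}_T(r)$ is diagonal with entries $T^{-1/2}\sum_{t\le[rT]}(y^{(i)}_{2t}-\overline{y}^{(i)}_2)=T^{-1/2}\beta_2^{(i)}\sum_{t\le[rT]}(t-\overline t)+T^{-1/2}\sum_{t\le [rT]}(u^{(i)}_{2t}-\overline u^{(i)}_2)$. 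The first piece is $T^{3/2-\kappa}\overline\beta^{(i)}_2L(r)(1+o(1))$ and the second converges to $\widehat B^{(i)}_{u2}(r)$.

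\emph{Case 1.} Multiply by $T^{3/2-\kappa}(\widehat{\mathbf{\theta}}-\mathbf{\theta})=O_p(1)$ from Theorem 1. The $\mathbf{D}_{\widehat{\mathbf{B}}_{u2}}$ term is then $O_p(T^{\kappa-3/2})\to0$. The partial sum therefore converges to $\mathbf{\Lambda}_{\mathbf{\epsilon}}[\mathbf{W}_{\mathbf{\epsilon}}(r)-r\mathbf{W}_{\mathbf{\epsilon}}(1)-12L(r)\int_0^1(s-\frac12)d\mathbf{W}_{\mathbf{\epsilon}}]$, using the Theorem 1 limit $12\mathbf{D}_{\overline\beta_2}^{-1}\mathbf{\Lambda}_{\mathbf{\epsilon}}\int(s-\frac12)d\mathbf{W}_{\mathbf{\epsilon}}$. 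Next I rescale $\widehat{\mathbf{V}}_{IV}$. Since $\sum(t-\overline t)^2\approx T^3/12$ and $T^{-3+\kappa}\widehat{\mathbf{D}}_2\to\frac1{12}\mathbf{D}_{\overline\beta_2}$, we get $T^{3-2\kappa}\mathbf{R}\widehat{\mathbf{V}}_{IV}\mathbf{R}'\Rightarrow12\,\mathbf{R}\mathbf{D}_{\overline\beta_2}^{-1}\mathbf{\Lambda}_{\mathbf{\epsilon}}\mathbf{P}_b(\widetilde{\mathbf{W}}_{\mathbf{\epsilon}})\mathbf{\Lambda}_{\mathbf{\epsilon}}'\mathbf{D}_{\overline\beta_2}^{-1}\mathbf{R}'$. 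Similarly, $T^{3/2-\kappa}(\mathbf{R}\widehat{\mathbf{\theta}}-\mathbf{r})\Rightarrow\mathbf{R}12\mathbf{D}^{-1}_{\overline\beta_2}\mathbf{\Lambda}_{\mathbf{\epsilon}}\int(s-\frac12)d\mathbf{W}_{\mathbf{\epsilon}}+\overline{\mathbf{\Delta}}$. I read $\mathbf{A}_{\mathbf{B}_2}$ as $\frac1{12}\mathbf{D}_{\overline\beta_2}$.

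The \textbf{key step} is the reduction to $q$ dimensions. The $q$-vector Gaussian process $\mathbf{R}\mathbf{A}^{-1}\mathbf{\Lambda}_{\mathbf{\epsilon}}\mathbf{W}_{\mathbf{\epsilon}}(r)$ equals $\mathbf{\Lambda}^*_{\mathbf{\epsilon}}\mathbf{W}^*_{\mathbf{\epsilon}}(r)$ in distribution, for a $q$-dimensional standard Wiener process $\mathbf{W}^*_{\mathbf{\epsilon}}$. The same linear map carries $\widetilde{\mathbf{W}}_{\mathbf{\epsilon}}$ and the numerator to their starred versions, and $\mathbf{P}_b$ is bilinear, so $\mathbf{\Lambda}^*_{\mathbf{\epsilon}}$ factors out and cancels in the quadratic form. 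This leaves the stated expression with noncentrality $\frac1{\sqrt{12}}\mathbf{\Lambda}^{*-1}_{\mathbf{\epsilon}}\overline{\mathbf{\Delta}}$. Independence of $\mathbf{Z}^*$ from $\widetilde{\mathbf{W}}^*$ follows because the processes are jointly Gaussian and their covariances vanish: $\mathrm{Cov}(\int(s-\frac12)dW,W(r)-rW(1))=L(r)$, which the $12L(r)$ term exactly removes.

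\emph{Cases 2 and 3.} In Case 2 both pieces of $\mathbf{D}_T(r)$ are $O_p(1)$ and $\widehat{\mathbf{\theta}}-\mathbf{\theta}=O_p(1)$ by Theorem 1. The partial sum limit is therefore $\mathbf{H}_1(r)$, and it is not mixed normal. No normalization cancels here, so the limit is simply expressed with $(\frac1{12}\mathbf{D}_{\overline\beta_2}+\mathbf{D}_{\mathbf{B}_{u2}})$ in place of $\widehat{\mathbf{D}}_2/T^{3/2}$, joint convergence being guaranteed since everything is a functional of $\mathbf{B}_{\mathbf{u}}$. Note $\mathbf{R}\mathbf{\theta}-\mathbf{r}=\overline{\mathbf{\Delta}}$ because $\kappa=3/2$. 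Case 3 is analogous, with residuals $u_{1t}-\widehat\theta u_{2t}$ demeaned and $\widehat{\mathbf{\theta}}$ taken from Theorem 1 Case 3, which gives $\mathbf{H}_2$. The main obstacle I anticipate is bookkeeping: keeping the joint convergence of the numerator, $\widehat{\mathbf{D}}_2$ and the partial-sum process explicit, and checking that the $\frac1{12}$ scaling constants match those in the statement.
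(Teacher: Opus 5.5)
Your route is the paper's route: decompose the IV-residual partial sums, apply the Kiefer--Vogelsang representation to get $\mathbf{P}_b(\cdot)$, and push the $q\times n$ linear map inside $\mathbf{P}_b$ to reduce to $q$ dimensions. The step that fails is the constant bookkeeping in Case 1.

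\textbf{Case 1.} With your reading $\mathbf{A}_{\mathbf{B}_2}=\frac{1}{12}\mathbf{D}_{\overline{\beta}_2}$, the numerator limit $12\mathbf{R}\mathbf{D}_{\overline{\beta}_2}^{-1}\mathbf{\Lambda}_{\mathbf{\epsilon}}\int_0^1(s-\frac12)d\mathbf{W}_{\mathbf{\epsilon}}(s)+\overline{\mathbf{\Delta}}$ becomes $\mathbf{\Lambda}^*_{\mathbf{\epsilon}}\int_0^1(s-\frac12)d\mathbf{W}^*_{\mathbf{\epsilon}}(s)+\overline{\mathbf{\Delta}}$. Your variance limit $12\mathbf{R}\mathbf{D}_{\overline{\beta}_2}^{-1}\mathbf{\Lambda}_{\mathbf{\epsilon}}\mathbf{P}_b(\widetilde{\mathbf{W}}_{\mathbf{\epsilon}}(r))\mathbf{\Lambda}_{\mathbf{\epsilon}}^{\prime}\mathbf{D}_{\overline{\beta}_2}^{-1}\mathbf{R}^{\prime}$, however, becomes $\frac{1}{12}\mathbf{\Lambda}^*_{\mathbf{\epsilon}}\mathbf{P}_b(\widetilde{\mathbf{W}}^*_{\mathbf{\epsilon}}(r))\mathbf{\Lambda}^{*\prime}_{\mathbf{\epsilon}}$. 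So $\mathbf{\Lambda}^*_{\mathbf{\epsilon}}$ does not cancel cleanly: a factor $12$ is left over. The quadratic form then equals $(\mathbf{Z}^*_{\epsilon}+\sqrt{12}\,\mathbf{\Lambda}^{*-1}_{\mathbf{\epsilon}}\overline{\mathbf{\Delta}})^{\prime}\mathbf{P}_b(\widetilde{\mathbf{W}}^*_{\mathbf{\epsilon}}(r))^{-1}(\mathbf{Z}^*_{\epsilon}+\sqrt{12}\,\mathbf{\Lambda}^{*-1}_{\mathbf{\epsilon}}\overline{\mathbf{\Delta}})$. The coefficient on $\mathbf{\Lambda}^{*-1}_{\mathbf{\epsilon}}\overline{\mathbf{\Delta}}$ is $\sqrt{12}$, not the $1/\sqrt{12}$ you claim.

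The stated form requires $\mathbf{A}_{\mathbf{B}_2}=\mathbf{D}_{\overline{\beta}_2}$, which is what the paper's proof uses: there $\mathbf{\Lambda}^*_{\mathbf{\epsilon}}$ is the square root of $\mathbf{R}\mathbf{D}_{\overline{\beta}_2}^{-1}\mathbf{\Lambda}_{\mathbf{\epsilon}}\mathbf{\Lambda}_{\mathbf{\epsilon}}^{\prime}\mathbf{D}_{\overline{\beta}_2}^{-1}\mathbf{R}^{\prime}$. With that choice the numerator is $12\mathbf{\Lambda}^*_{\mathbf{\epsilon}}\int_0^1(s-\frac12)d\mathbf{W}^*_{\mathbf{\epsilon}}(s)+\overline{\mathbf{\Delta}}$ and the variance is $12\mathbf{\Lambda}^*_{\mathbf{\epsilon}}\mathbf{P}_b(\widetilde{\mathbf{W}}^*_{\mathbf{\epsilon}}(r))\mathbf{\Lambda}^{*\prime}_{\mathbf{\epsilon}}$. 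Dividing through gives $\mathbf{Z}^*_{\epsilon}+\frac{1}{\sqrt{12}}\mathbf{\Lambda}^{*-1}_{\mathbf{\epsilon}}\overline{\mathbf{\Delta}}$. The two readings agree under the null, which is why the slip is invisible there. The same correction applies to your $t_{IV}$ limit.

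\textbf{Case 2.} The same care is needed here, where you only say that $(\frac{1}{12}\mathbf{D}_{\overline{\beta}_2}+\mathbf{D}_{\mathbf{B}_{u2}})$ replaces $T^{-3/2}\widehat{\mathbf{D}}_2$. Carrying $T^{-3}\sum_{t}(t-\overline{t})^2\to\frac{1}{12}$ through gives the variance limit $\frac{1}{12}\mathbf{R}(\frac{1}{12}\mathbf{D}_{\overline{\beta}_2}+\mathbf{D}_{\mathbf{B}_{u2}})^{-1}\mathbf{P}_b(\mathbf{H}_1(r))(\frac{1}{12}\mathbf{D}_{\overline{\beta}_2}+\mathbf{D}_{\mathbf{B}_{u2}})^{-1}\mathbf{R}^{\prime}$. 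That matches the displayed Case 2 $t_{IV}$ limit and the Case 3 Wald limit. The displayed Case 2 $Wald_{IV}$ limit omits the $\frac{1}{12}$, so your derivation, which is correct, will differ from that display by a factor of $12$. Do not try to force agreement.

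The remaining ingredients agree with the paper:
\begin{itemize}
\item the $O_p(T^{\kappa-3/2})$ negligibility of the $\mathbf{D}_{\widehat{\mathbf{B}}_{u2}(r)}$ term in Case 1;
\item $\widehat{\mathbf{S}}_T=\mathbf{0}$;
\item the covariance calculation giving independence of $\mathbf{Z}^*_{\epsilon}$ and $\widetilde{\mathbf{W}}^*_{\mathbf{\epsilon}}$;
\item the $\mathbf{H}_1$ and $\mathbf{H}_2$ partial-sum limits.
\end{itemize}
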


The limit of $t_{IV}$ given by Case 1 of Theorem 2 is of the same form as the
limit derived by \cite{vogelsang-nawaz-JTSA} for case of a single trend ratio
($n=1$). Under the null hypothesis (when $\overline{\mathbf{\Delta}%
}=\mathbf{0}$\textbf{)}, the limits of $Wald_{IV}$ and $t_{IV}$ are identical
to limits obtained by \cite{bunzel-vogelsang} in deterministic trend
regression models. While the limiting random variables are nonstandard because
of the fixed-$b$ limit of $\widehat{\mathbf{\Omega}}_{\mathbf{\epsilon}}$, the
$\mathbf{P}_{b}(\widetilde{\mathbf{W}}_{\mathbf{\epsilon}}^{\ast}(r))$ term in
the denominator, critical values are available using formulas from
\cite{bunzel-vogelsang}. The critical values depend on the bandwidth through
$b=M/T$ and the kernel, $k(x)$.

When the trend slopes are very small or zero, the limiting distributions
change, become more complicated, and depend on nuisance parameters. This is to
be expected given the weak instrument problem that occurs when the $\beta
_{2}^{(i)}$ slopes are small and the fact that $\mathbf{\theta}$ is not
defined when slopes are zero. The finite sample simulations will illustrate
the extent to which inference breaks down as the trend slopes become very
small or zero.

\section{Product Approach for Testing Equal Trend Ratios}

When testing a simple hypothesis about a single trend slope ratio,
\cite{vogelsang-nawaz-JTSA} used Fieller's method (\cite{fieller1954}) to
construct confidence intervals that are robust to very small trends slopes
including the case of zero trend slopes. This approach is based on rewriting a
simple hypothesis about $\theta^{(i)}$ in terms of a linear restriction
involving $\beta_{1}^{(i)}$ and $\beta_{2}^{(i)}$. However, once a null
hypothesis involves a linear combination of at least two $\theta^{(i)}$,
Fieller's method cannot be applied.

One important empirical application is testing equality of trend slope ratios
for two pairs of series. See VMCS26 for tests of equal trend ratios between
observed and model generated temperature series. For the null hypothesis of
equal trend ratios, it is possible to develop a testing approach similar in
spirit to Fieller's method that gives potentially more robust inference when
slopes are very small. Suppose there are two pairs of series with trend ratios
$\theta^{(1)}$ and $\theta^{(2)}$ and the hypothesis of interest is%
\[
H_{0}:\theta^{(1)}=\theta^{(2)},
\]
or equivalently%
\begin{equation}
H_{0}:\theta^{(1)}-\theta^{(2)}=0. \label{h0equaltheta}%
\end{equation}
Anticipating a local asymptotic calculation, suppose the alternative is
specified as local to zero%
\[
H_{1L}:\theta^{(1)}-\theta^{(2)}=\overline{\mathbf{\Delta}}T^{-3/2+\kappa}.
\]
Rewriting the alternative hypothesis in terms of the trend slopes gives%
\[
H_{1L}:\frac{\beta_{1}^{(1)}}{\beta_{2}^{(1)}}-\frac{\beta_{1}^{(2)}}%
{\beta_{2}^{(2)}}=\overline{\mathbf{\Delta}}T^{-3/2+\kappa}.
\]
Multiplying both sides of $H_{1L}$ by $\beta_{2}^{(1)}\beta_{2}^{(2)}$ gives%
\begin{equation}
H_{1L}:\beta_{2}^{(2)}\beta_{1}^{(1)}-\beta_{2}^{(1)}\beta_{1}^{(2)}=\beta
_{2}^{(1)}\beta_{2}^{(2)}\overline{\mathbf{\Delta}}T^{-3/2+\kappa}.
\label{h1prod}%
\end{equation}
Testing (\ref{h0equaltheta}) is equivalent to testing%
\begin{equation}
H_{0}:\beta_{2}^{(2)}\beta_{1}^{(1)}-\beta_{2}^{(1)}\beta_{1}^{(2)}=0.
\label{h0prod}%
\end{equation}
The advantage of using (\ref{h0prod}) is that the trend slopes can be directly
estimated by OLS and ratios are avoided.

To develop a test statistic for testing (\ref{h0prod}) against (\ref{h1prod})
define%
\[
g_{\beta}=\beta_{2}^{(2)}\beta_{1}^{(1)}-\beta_{2}^{(1)}\beta_{1}^{(2)}.
\]
The natural estimator of $g_{\beta}$ is given by%
\[
g_{\widehat{\beta}}=\widehat{\beta}_{2}^{(2)}\widehat{\beta}_{1}%
^{(1)}-\widehat{\beta}_{2}^{(1)}\widehat{\beta}_{1}^{(2)},
\]
where $\widehat{\beta}_{1}^{(i)}$ and $\widehat{\beta}_{2}^{(i)}$ are the OLS
estimators (\ref{beta1ols}) and (\ref{beta2ols}).

Because $g_{\widehat{\beta}}$ is a nonlinear function of the slope estimators,
its asymptotic variance depends on $\mathbf{\Omega}_{\mathbf{u}}$, the long
run variance of $\mathbf{U}_{t}=\left[  u_{1t}^{(1)},u_{1t}^{(2)},u_{2t}%
^{(1)},u_{2t\mathbf{U}_{t}}^{(2)}\right]  ^{\prime}$, and the vector%
\[
\mathbf{R}_{\beta}=\left[  \beta_{2}^{(2)},-\beta_{2}^{(1)},-\beta_{1}%
^{(2)},\beta_{1}^{(1)}\right]  ,
\]
which can be derived using the delta method or, as in the appendix, directly.
The feasible version of $\mathbf{R}_{\beta}$ is given by%
\[
\mathbf{R}_{\widehat{\beta}}=\left[  \widehat{\beta}_{2}^{(2)},-\widehat{\beta
}_{2}^{(1)},-\widehat{\beta}_{1}^{(2)},\widehat{\beta}_{1}^{(1)}\right]  .
\]
Let $\widehat{\mathbf{U}}_{t}=\left[  \widehat{u}_{1t}^{(1)},\widehat{u}%
_{1t}^{(2)},\widehat{u}_{2t}^{(1)},\widehat{u}_{2t}^{(2)}\right]  ^{\prime}$
where $\widehat{u}_{1t}^{(i)},\widehat{u}_{2t}^{(i)}$ are the residuals from
(\ref{1.1}) and (\ref{1.2}) estimated by OLS. Define the long run variance
estimator of $\mathbf{\Omega}_{\mathbf{u}}$ as%
\[
\widehat{\mathbf{\Omega}}_{\mathbf{u}}=\widehat{\mathbf{\Gamma}}_{\mathbf{u}%
0}+%
{\displaystyle\sum\limits_{j=1}^{T-1}}
k\left(  \frac{j}{M}\right)  (\widehat{\mathbf{\Gamma}}_{\mathbf{u}%
j}+\widehat{\mathbf{\Gamma}}_{\mathbf{u}j}^{\prime}),\text{ \ \ \ }%
\widehat{\mathbf{\Gamma}}_{\mathbf{u}j}=T^{-1}%
{\displaystyle\sum\limits_{t=j+1}^{T}}
\widehat{\mathbf{U}}_{t}\widehat{\mathbf{U}}_{t-j}^{\prime}.
\]
Using%
\[
\widehat{\lambda}_{g}^{2}=\mathbf{R}_{\widehat{\beta}}\widehat{\mathbf{\Omega
}}_{\mathbf{u}}\mathbf{R}_{\widehat{\beta}}^{\prime},
\]
a $t$-statistic for testing (\ref{h0prod}) can be constructed as%
\[
t_{prod}=\frac{g_{\widehat{\beta}}}{\sqrt{\widehat{\lambda}_{g}^{2}\left(
{\textstyle\sum\nolimits_{t=1}^{T}}
(t-\overline{t})^{2}\right)  ^{-1}}}.
\]
The following Theorem gives the limiting distribution of $t_{prod}$ under the
local alternative (\ref{h1prod}).

\begin{theorem}
Suppose that (\ref{fclt}) holds. Let $\overline{\beta}_{1}^{(i)}%
,\overline{\beta}_{2}^{(i)}$ be fixed with respect to $T$. Let $\mathbf{D}%
_{\overline{\mathbf{\beta}}_{2}}$and $\mathbf{D}_{\mathbf{B}_{u2}}$be defined
as in Theorem 1 for the case of $n=2$. Suppose $\theta^{(1)}-\theta
^{(2)}=\overline{\mathbf{\Delta}}T^{-3/2+\kappa}$. The following hold as
$T\rightarrow\infty$:%
\[
\]
Case 1 (large to small slopes): For $\beta_{1}^{(i)}=T^{-\kappa}%
\overline{\beta}_{1}^{(i)}$, $\beta_{2}^{(i)}=T^{-\kappa}\overline{\beta}%
_{2}^{(i)}$ with $0\leq\kappa<\frac{3}{2}$,%
\[
t_{prod}\Rightarrow\frac{z_{u}^{\ast}+\frac{\overline{\beta}_{2}%
^{(1)}\overline{\beta}_{2}^{(2)}\overline{\mathbf{\Delta}}}{\Lambda_{u}^{\ast
}\sqrt{12}}}{\sqrt{P_{b}(\widetilde{w}_{u}^{\ast}(r))}},
\]
where $z_{u}^{\ast}=\sqrt{12}\int_{0}^{1}\left(  s-\frac{1}{2}\right)
dw_{u}^{\ast}(s)$, $\widetilde{w}_{u}^{\ast}(r)=\widetilde{w}_{u}^{\ast
}(r)-rw_{u}^{\ast}(1)-12L(r)\int_{0}^{1}\left(  s-\frac{1}{2}\right)
dw_{u}^{\ast}(s)$, $L(r)=\int_{0}^{r}\left(  s-\frac{1}{2}\right)  ds$,
$w_{u}^{\ast}(r)$ is a standard Wiener process, and $\Lambda_{u}^{\ast}%
=\sqrt{\mathbf{R}_{\overline{\beta}}\mathbf{\Omega}_{\mathbf{u}}%
\mathbf{R}_{\overline{\beta}}^{\prime}}$. Note that $z_{u}^{\ast}\sim N(0,1)$
and is independent of $\widetilde{w}_{u}^{\ast}(r)$.%
\[
\]
Case 2 (very small slopes): For $\beta_{1}^{(i)}=T^{-3/2}\overline{\beta}%
_{1}^{(i)},$ $\beta_{2}^{(i)}=T^{-3/2}\overline{\beta}_{2}^{(i)}$
($\kappa=\frac{3}{2}$),%
\[
t_{prod}\Rightarrow\frac{\mathbf{R}_{\overline{\beta}}\Psi+\Psi_{2}^{(2)}%
\Psi_{1}^{(1)}-\Psi_{2}^{(1)}\Psi_{1}^{(2)}+\overline{\mathbf{\Delta}%
}\overline{\beta}_{2}^{(1)}\overline{\beta}_{2}^{(2)}}{\sqrt{12\left(
\mathbf{R}_{\overline{\beta}}+\left[  \Psi_{2}^{(2)},-\Psi_{2}^{(1)},-\Psi
_{1}^{(2)},\Psi_{1}^{(1)}\right]  \right)  \mathbf{P}_{b}%
(\widetilde{\mathbf{B}}_{\mathbf{u}}(r))\left(  \mathbf{R}_{\overline{\beta}%
}+\left[  \Psi_{2}^{(2)},-\Psi_{2}^{(1)},-\Psi_{1}^{(2)},\Psi_{1}%
^{(1)}\right]  \right)  ^{\prime}}},
\]
where $\mathbf{\Psi}=\left[  \Psi_{1}^{(1)},\Psi_{1}^{(2)},\Psi_{2}^{(1)}%
,\Psi_{2}^{(2)}\right]  ^{\prime}=12\int_{0}^{1}\left(  s-\frac{1}{2}\right)
d\mathbf{B}_{\mathbf{u}}(s)$, $\widetilde{\mathbf{B}}_{\mathbf{u}%
}(r)=\mathbf{B}_{\mathbf{u}}(r)-r\mathbf{B}_{\mathbf{u}}(1)-12L(r)\int_{0}%
^{1}\left(  s-\frac{1}{2}\right)  d\mathbf{B}_{\mathbf{u}}(s).$%
\[
\]
Case 3 (zero slopes): For $\beta_{1}^{(i)}=0$, $\beta_{2}^{(i)}=0$,%
\[
t_{prod}\Rightarrow\frac{\Psi_{2}^{(2)}\Psi_{1}^{(1)}-\Psi_{2}^{(1)}\Psi
_{1}^{(2)}}{\sqrt{12\left[  \Psi_{2}^{(2)},-\Psi_{2}^{(1)},-\Psi_{1}%
^{(2)},\Psi_{1}^{(1)}\right]  \mathbf{P}_{b}(\widetilde{\mathbf{B}%
}_{\mathbf{u}}(r))\left[  \Psi_{2}^{(2)},-\Psi_{2}^{(1)},-\Psi_{1}^{(2)}%
,\Psi_{1}^{(1)}\right]  ^{\prime}}}.
\]

\end{theorem}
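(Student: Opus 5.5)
The plan is to work directly with the OLS slope estimators. Stack them into $\widehat{\boldsymbol\beta}=[\widehat\beta_1^{(1)},\widehat\beta_1^{(2)},\widehat\beta_2^{(1)},\widehat\beta_2^{(2)}]'$, ordered to match $\mathbf{U}_t$. Then $\widehat{\boldsymbol\beta}-\boldsymbol\beta=(\sum(t-\overline t)^2)^{-1}\sum(t-\overline t)\mathbf{U}_t$. Since $\sum(t-\overline t)^2=T^3/12+o(T^3)$, the FCLT (\ref{fclt}) together with summation by parts gives
\[
T^{3/2}(\widehat{\boldsymbol\beta}-\boldsymbol\beta)\Rightarrow 12\int_0^1(s-\tfrac12)d\mathbf{B}_{\mathbf{u}}(s)=\Psi .
\]
Write $\boldsymbol\beta=T^{-\kappa}\overline{\boldsymbol\beta}$. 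Because $g$ is bilinear, the decomposition is exact with no delta-method remainder:
\[
g_{\widehat\beta}-g_\beta=\mathbf{R}_\beta(\widehat{\boldsymbol\beta}-\boldsymbol\beta)+g_{\widehat{\boldsymbol\beta}-\boldsymbol\beta},
\]
where $g_{x}$ denotes the same quadratic form evaluated at $x$. Likewise $\mathbf{R}_{\widehat\beta}=\mathbf{R}_\beta+\mathbf{R}_{\widehat{\boldsymbol\beta}-\boldsymbol\beta}$, since $\mathbf{R}$ is linear in its argument. Under (\ref{h1prod}), $g_\beta=T^{-2\kappa}\overline\beta_2^{(1)}\overline\beta_2^{(2)}\overline{\mathbf{\Delta}}T^{-3/2+\kappa}$.

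Next I handle the numerator in each case, scaling by $T^{3/2+\kappa}$.

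\emph{Case 1 ($\kappa<3/2$).} The linear term is $T^{-\kappa}\mathbf{R}_{\overline\beta}T^{-3/2}\Psi_T$, which is of order $T^{-3/2-\kappa}$. The quadratic term is $O_p(T^{-3})$ and is therefore negligible. The local alternative contributes $T^{-3/2-\kappa}\overline\beta_2^{(1)}\overline\beta_2^{(2)}\overline{\mathbf{\Delta}}$. Hence $T^{3/2+\kappa}g_{\widehat\beta}\Rightarrow\mathbf{R}_{\overline\beta}\Psi+\overline\beta_2^{(1)}\overline\beta_2^{(2)}\overline{\mathbf{\Delta}}$.

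\emph{Case 2 ($\kappa=3/2$).} All three pieces are of order $T^{-3}$. Multiplying by $T^3$ gives $\mathbf{R}_{\overline\beta}\Psi+g_\Psi+\overline{\mathbf{\Delta}}\overline\beta_2^{(1)}\overline\beta_2^{(2)}$, where $g_\Psi=\Psi_2^{(2)}\Psi_1^{(1)}-\Psi_2^{(1)}\Psi_1^{(2)}$.

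\emph{Case 3 (zero slopes).} Only $g_\Psi$ survives.

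Correspondingly, $T^{\kappa}\mathbf{R}_{\widehat\beta}$ converges to $\mathbf{R}_{\overline\beta}$ in Case 1. Under the $T^{3/2}$ scaling it converges to $\mathbf{R}_{\overline\beta}+[\Psi_2^{(2)},-\Psi_2^{(1)},-\Psi_1^{(2)},\Psi_1^{(1)}]$ in Case 2, and to the bracketed vector alone in Case 3.

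For the denominator, I would use the standard fixed-$b$ result of \cite{kvnewasym} and \cite{bunzel-vogelsang}. The OLS residuals in each trend regression satisfy
\[
T^{-1/2}\sum_{t\le[rT]}\widehat{\mathbf{U}}_t\Rightarrow\mathbf{B}_{\mathbf{u}}(r)-r\mathbf{B}_{\mathbf{u}}(1)-12L(r)\int_0^1(s-\tfrac12)d\mathbf{B}_{\mathbf{u}}(s)=\widetilde{\mathbf{B}}_{\mathbf{u}}(r).
\]
This holds regardless of the size of $\beta$, because the residuals are invariant to the trend parameters. Writing $\widehat{\mathbf{\Omega}}_{\mathbf{u}}$ in terms of second differences of the kernel and partial sums, and applying the continuous mapping theorem, gives $\widehat{\mathbf{\Omega}}_{\mathbf{u}}\Rightarrow\mathbf{P}_b(\widetilde{\mathbf{B}}_{\mathbf{u}}(r))$, jointly with $\Psi$ since everything is a functional of $\mathbf{B}_{\mathbf{u}}$. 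The denominator scaled by $T^{3/2+\kappa}$ (with $\kappa=3/2$ in Case 3) is $T^{3+2\kappa}\widehat\lambda_g^2/\sum(t-\overline t)^2$, which equals $12(T^\kappa\mathbf{R}_{\widehat\beta})\widehat{\mathbf{\Omega}}_{\mathbf{u}}(T^\kappa\mathbf{R}_{\widehat\beta})'(1+o(1))$. Combining this with the numerator limits yields the Case 2 and Case 3 expressions directly via joint convergence and the continuous mapping theorem.

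The one step needing real care is rewriting Case 1 in the pivotal form. Set $\Lambda_u^*=(\mathbf{R}_{\overline\beta}\mathbf{\Omega}_{\mathbf{u}}\mathbf{R}_{\overline\beta}')^{1/2}$; then $\mathbf{R}_{\overline\beta}\mathbf{B}_{\mathbf{u}}(r)=\Lambda_u^*w_u^*(r)$ for a standard Wiener process $w_u^*$. Consequently $\mathbf{R}_{\overline\beta}\Psi=\Lambda_u^*\sqrt{12}\,z_u^*$. Because $P_b$ is a quadratic functional, $\mathbf{R}_{\overline\beta}\mathbf{P}_b(\widetilde{\mathbf{B}}_{\mathbf{u}})\mathbf{R}_{\overline\beta}'=\Lambda_u^{*2}P_b(\widetilde w_u^*)$. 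Dividing numerator and denominator by $\sqrt{12}\Lambda_u^*$ gives the stated Case 1 limit. Independence of $z_u^*$ and $\widetilde w_u^*$ follows because they are jointly Gaussian and uncorrelated: $\widetilde w_u^*$ is the residual from projecting $w_u^*$ onto $1$ and $(s-\tfrac12)$, as in \cite{bunzel-vogelsang}. I also need $\Lambda_u^*>0$, which I take as implicit in the assumptions.
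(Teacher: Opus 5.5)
Your proposal is correct and follows essentially the same route as the paper's proof. Both arguments use the exact bilinear expansion $g_{\widehat\beta}=g_\beta+\mathbf{R}_\beta(\widehat{\mathbf{\beta}}-\mathbf{\beta})+g_{\widehat{\mathbf{\beta}}-\mathbf{\beta}}$ with the local alternative substituted for $g_\beta$, rescale by $T^{3/2+\kappa}$ (by $T^{3}$ for very small or zero slopes), take the fixed-$b$ limit $\mathbf{P}_b(\widetilde{\mathbf{B}}_{\mathbf{u}}(r))$ of the slope-invariant $\widehat{\mathbf{\Omega}}_{\mathbf{u}}$, and use the $\Lambda_u^{\ast}$ reparametrization to reach the pivotal Case 1 form. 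Your version is also slightly cleaner in places: it uses the correct scaling $T^{3/2}(\widehat{\mathbf{\beta}}-\mathbf{\beta})\Rightarrow\mathbf{\Psi}$ where the paper's Lemma 1 writes $T^{-3/2}$, and it states explicitly the joint convergence and the independence of $z_u^{\ast}$ and $\widetilde{w}_u^{\ast}$, which the paper leaves implicit.
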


When trend slopes are large to small, the asymptotic distribution of
$t_{prod}$ has the same form as $t_{IV}$. The two asymptotic distributions are
the same under the null ($\overline{\mathbf{\Delta}}=0$) but are different
under the local alternative because the variance parameters in the limit are
different. As is the case for $t_{IV}$, the asymptotic distribution of
$t_{prod}$ changes when trends slopes are very small or zero. Given the
complexity of the limits in Cases 2 and 3, it is not clear what Theorem 3
predicts about the finite sample behavior of $t_{prod}$ as trend slopes become
very small or zero. Finite simulations in the next section will be used to
explore the implications of very small or zero trend slopes.

\section{Finite Sample Null Rejection Probabilities and Power for Tests of
Equal Ratios}

This section provides simulation results to show some finite sample properties
of tests of equal trend slopes. The following DGP was used. The $y_{1t}^{(i)}$
and $y_{2t}^{(i)}$ variables where generated by models (\ref{1.1}) and
(\ref{1.2}) where the noise is given by%
\begin{align*}
u_{1t}^{(1)}  &  =\phi_{1}^{(1)}u_{1t-1}^{(1)}+\varepsilon_{1t}^{(1)},\text{
}u_{2t}^{(1)}=\phi_{2}^{(1)}u_{2t-1}^{(1)}+\varepsilon_{2t}^{(1)},\\
u_{1t}^{(2)}  &  =\phi_{1}^{(2)}u_{1t-1}^{(2)}+\varepsilon_{1t}^{(2)},\text{
}u_{2t}^{(2)}=\phi_{2}^{(2)}u_{2t-1}^{(2)}+\varepsilon_{2t}^{(2)},
\end{align*}%
\[
\left[
\begin{array}
[c]{c}%
\varepsilon_{1t}^{(1)}\\
\varepsilon_{2t}^{(1)}\\
\varepsilon_{1t}^{(2)}\\
\varepsilon_{2t}^{(2)}%
\end{array}
\right]  \sim iidN\left(  \mathbf{0},\left[
\begin{array}
[c]{cc}%
\begin{array}
[c]{cc}%
1 & \varphi\\
\varphi & 1
\end{array}
&
\begin{array}
[c]{cc}%
0 & 0\\
0 & 0
\end{array}
\\%
\begin{array}
[c]{cc}%
0 & 0\\
0 & 0
\end{array}
&
\begin{array}
[c]{cc}%
1 & \varphi\\
\varphi & 1
\end{array}
\end{array}
\right]  \right)  ,
\]%
\[
u_{10}^{(1)}=u_{20}^{(1)}=u_{10}^{(2)}=u_{20}^{(2)}=0.
\]
All the noise series are generated by AR(1) processes. Pairs of series are
uncorrelated with each other but there can be within-pair correlation
($\varphi\neq0$).

Given that $t_{IV}$ and $t_{prod}$ are exactly invariant to the values of the
intercept parameters, without loss of generality the intercept parameters are
set to zero: $\mu_{1}^{(1)}=0,\mu_{2}^{(1)}=0,\mu_{1}^{(2)}=0,\mu_{2}^{(2)}%
=0$. Results are reported for various magnitudes of $\beta_{1}^{(i)}$ and
$\beta_{2}^{(i)}$ where $\theta^{(i)}=\beta_{1}^{(i)}/\beta_{2}^{(i)}=1$ under
the null hypothesis of equal ratios across the two pairs. Empirical null
rejections are given for $T=50,100,200$ with $10,000$ replications used in all
cases. Empirical power is given for $T=100$ where $\theta^{(2)}$ takes on
values different from $\theta^{(1)}=1$.

Table 1 reports null rejection probabilities for 5\% nominal level tests for
testing $H_{0}:\theta^{(1)}=\theta^{(2)}$ against the two-sided alternative
$H_{1}:\theta^{(1)}\neq\theta^{(2)}$. Results are reported for the values of
$\beta_{1}^{(i)}=\beta_{2}^{(i)}=10,2,.2,.05,0.25,.005,0$ giving $\theta
^{(i)}=1$ except for the case of $\beta_{1}^{(i)}=\beta_{2}^{(i)}=0$ where
$\theta^{(i)}$ is not defined. The variance estimators use the Daniell kernel.
Results for four bandwidth sample size ratios are provided:
$b=A91,0.25,0.5,1.0$, where $A91$ is the AR(1) plug-in data dependent
bandwidth proposed by \cite{andrews-91}. Empirical rejections are computed
using fixed-$b$ asymptotic critical values with the critical value function%
\[
cv_{0.025}(b)=1.9659+4.0603b+11.6626b^{2}+34.8269b^{3}-13.9506b^{4}%
+3.2669b^{5}\text{,}%
\]
as given by \cite{bunzel-vogelsang} for the Daniell kernel for $k(x)$.

The top panels of Table 1 give results for the case of iid noise as a
benchmark. Here $\phi_{1}^{(1)}=\phi_{2}^{(1)}=\phi_{1}^{(2)}=\phi_{2}%
^{(2)}=\varphi=0$. As long as the trend slopes are large, empirical rejections
are close to the nominal level of 0.05 for both $t_{IV}$ and $t_{prod}$ and
all values of $b$. As the trend slopes get smaller, both tests have rejections
below 0.05 (conservative tests). This happens more quickly with $t_{IV}$ than
$t_{prod}$ and more quickly with smaller values of $b$ (smaller bandwidths).
As $T$ increases, empirical rejections are close to the nominal level for
relatively smaller trend slopes. This makes sense given the asymptotic
results. For given values of $\beta_{1}^{(i)}$ and $\beta_{2}^{(i)}$, the
values of the local trend parameters, $\overline{\beta}_{1}^{(i)}$ and
$\overline{\beta}_{2}^{(i)}$, are larger for bigger values of $T$.

The bottom panels of Table 1 give results for serially correlated noise with
$\phi_{1}^{(1)}=0.3$, $\phi_{2}^{(1)}=0.7$, $\phi_{1}^{(2)}=0.5$, $\phi
_{2}^{(2)}=0.9$. Pairs of series have within-pairs correlation given by
$\varphi=0.5$. The first pair of noise series has modest serial correlation in
the numerator series, $u_{1t}^{(1)}$, and moderate serial correlation in the
denominator series, $u_{2t}^{(1)}$. The second pair of noise series has
stronger serial correlation than the first pair with the denominator series
again having stronger serial correlation than the numerator series. The
relative strengths of serial correlation within pairs are a feature of
temperature series where upper atmosphere temperature (numerators) tend to
have less serial correlation than surface temperatures (denominators). The
relative strength across pairs are a feature of observed versus model
generated temperatures where observed temperatures (first pair) tend to have
less serial correlation than model generated series (second pair).

As in the iid case, empirical rejections in the serial correlation case tend
to fall as trend slopes become smaller indicating that the tests remain
conservative when trend slope are small (or even zero). For larger trend
slopes, the positive autocorrelation can lead to over-rejections especially if
the data-dependent bandwidth is used. As $b$ increases, the tendency to
over-reject is mitigated. This is a well-known property (see
\cite{bunzel-vogelsang} and \cite{kvnewasym} among others). It is not
surprising that over-rejections are larger with the data-dependent bandwidth
because those bandwidths tend to give relatively small values of $b$.
Comparing $t_{IV}$ and $t_{prod}$, one can see that $t_{IV}$ tends to
over-reject less than $t_{prod}$ especially when using the data-dependent
bandwidth. In contrast, $t_{IV}$ tends to under-reject more than $t_{prod}$ as
trend slopes become smaller. The main takeaway is that $t_{IV}$ over-rejects
less than $t_{prod}$ when bandwidths are small and trend slopes are large, but
$t_{IV}$ is more conservative when trend slopes are small (regardless of
bandwidth). As a practical matter, using the slightly larger bandwidth of
$b=0.25$ gives non-trivial reductions in over-rejections relative to the $A91$
bandwidth rule.

Table 2 gives results for empirical power of the tests. Results are only given
for the case of serially correlated noise (patterns are similar for iid noise)
with $T=100$. In all cases $\theta^{(1)}=1$, and results are given for a grid
of values of $\theta^{(2)}$ above and below $1$. The range of the grid
increases as the trend slopes decrease to provide information about the power
curves. It is not surprising that in order to see power, the range of
$\theta^{(2)}$ needs to be increased as trend slopes decrease - this is
predicted by the slower rate of convergence of $\widehat{\mathbf{\theta}}$
(larger sampling variance) as trend slopes become smaller. Empirical power is
not size-adjusted to show actual power in practice. Null rejections are in
bold for the cases where $\theta^{(2)}=1$.

When trend slopes are large, power is similar for $t_{IV}$ and $t_{prod}$ for
the bandwidths $b=0.25,0.5,1.0$. Power is roughly symmetric around null value
of $\theta^{(1)}=1$. When the $A91$ data-dependent bandwidth is used,
$t_{prod}$ over-rejects more than $t_{IV}$ and has correspondingly higher
power. With smaller trend slopes some differences in power emerge between
$t_{IV}$ and $t_{prod}$ that are not solely due to differences in null
rejections. For example, when $\beta_{2}^{(1)}=\beta_{2}^{(2)}=0.2$, $t_{IV}$
and $t_{prod}$ have similar null rejections with $b=0.5,1.0$. In these cases
power of $t_{IV}$ is higher for $\theta^{(2)}<1$, whereas power of $t_{prod}$
is higher for $\theta^{(2)}>1$. Power for both tests is higher with smaller
values of $b$ --- another well-known feature of tests based on kernel variance
estimators that use fixed-$b$ critical values. As trend slopes become smaller,
power of both tests decreases and power can be low even for values of
$\theta^{(2)}$ very far from $1$. Again, this is not surprising because
smaller trend slopes have relatively less information about $\theta^{(1)}$ and
$\theta^{(2)}$ for given strength of the noise. Interestingly, with very small
trend slopes, $\beta_{2}^{(1)}=\beta_{2}^{(2)}=0.05,0.025$, power initially
increases as $\theta^{(2)}$ moves away from the null value of $1$ but can
begin to fall when $\theta^{(2)}$ is very far from $1$.

The main takeaways from the power results are: i) for large trend slopes,
$t_{IV}$ and $t_{prod}$ have similar power, ii) for medium to small trend
slopes power cannot be ranked, iii) power of both tests decreases as the
bandwidth increases and iv) there is low power in detecting differences
between trend ratios when trend slopes are small.

\section{Practical Recommendations}

The theory and simulations indicate that $t_{IV}$ and $t_{prod}$ perform
similarly in practice when i) trend slopes are not very small, and ii) small
bandwidths are avoided when there is nontrivial serial correlation in the
noise. Both tests can over-reject when there is positive serial correlation in
the noise although this problem is mitigated by larger sample sizes. For very
small trend slopes both tests become conservative with $t_{IV}$ becoming
conservative more quickly than $t_{prod}$. Power of the tests cannot be ranked
in this case. Because of the conservative nature of both tests when trend
slopes are small (or even zero), any rejections obtained for tests of equal
ratios are robust to very small trend slopes. The price paid for this
robustness is lower power, but one cannot expect trend ratios to be precisely
estimated when trend slopes are very small.

If a practitioner uses either test (or both) with the Daniell kernel and a
non-small bandwidth for the variance estimator, then rejections of equal
ratios can be viewed as relatively robust to stationary serial correlation in
the noise and very small trend slopes.

\section{Empirical Application}

VMCS26 used the methods developed in this paper to test equivalence between
trend ratios of observed temperature series and temperature series generated
by recent runs of climate models. They compared trend ratios of temperature
series at various atmospheric heights relative to trends in surface
temperatures for the tropics region of the earth. These trend slope ratios are
called amplification ratios because climate models predict amplified warming
in the lower to mid troposphere relative to the surface in the tropics. VMCS26
found that climate models tend to have more amplification that seen in
observed temperature series and that the differences are statistically
significant using the $t_{IV}$ and $t_{prod}$ test statistics. While VMCS26
compared amplification ratios between each of five sets of observed
temperatures with each of 39 sets of climate model temperature series, they
did not compare amplification ratios among the five sets of observed
temperature series. Comparisons across observed series is interesting given
the different methods by which the observed series are measured, constructed,
and aggregated.

VMCS26 provide details on the five sets of observed temperature series which
can be summarized as follows. Observed temperatures for various pressure
levels in the atmosphere are measured in two ways in the five sets of observed
temperatures. The first method uses station-based, balloon-borne radiosonde
records. Temperature data is collected at specific pressure levels, generally
up to 20 hectopascals (hPa) which is approximately 27 km above the earth's
surface. A smaller hPa value indicates a higher level about the surface. The
second method is known as reanalyses of weather data combined with climate
models to generate temperature series. The five observed data series include
three from balloons. Two are data series from the University of Vienna,
RAOBCORE v1.9 and RICH v1.9. The third is from the U.S. National Oceanic and
Atmospheric Administration, the RATPAC-A v2 data series. These sets of data
are labeled RAOB, RICH, and RATP. The two global reanalyses data sets are
taken from the European Centre for Medium-Range Forecasts Reanalyses (ERA5),
and the Japanese Reanalyses for Three Quarters of a Century (JRA3Q).

The data is aggregated on an annual basis and spans the years 1958 to 2024 (67
years) for the surface and the grid of hPa levels 850, 700, 500, 400, 300,
200, 150, 100, 70, 50, 30, 20. Data is not available for the RATP data set for
hPa 20. Empirical results are presented in three tables where in all cases
confidence intervals at the 95\% level are computed using the same variance
estimators and fixed-$b$ critical values used in the finite sample simulations
(Daniell kernel, A91 bandwidth, fixed-$b$ critical values).

Table 3 provides summary statistics in the form of OLS estimated linear trend
slope parameters using regression (\ref{1.1}) for each of the five sets of
temperatures across the hPa levels. Estimated trend slopes and confidence
intervals are scaled to be in units of degrees Celsius per \textit{decade.}
Nearly all of the estimated trend slopes are statistically significantly
different from zero. For each of the five sets of series, there is a clear
pattern of trend slopes across hPa levels. There is warming at the surface.
There is less warming at 850 hPa but more warming from hPa levels 700 to 150.
At hPa 100 up to 20, there is cooling with cooling increasing at higher altitudes.

Table 4 reports estimated trend ratios for each hPa level relative to the
surface. Confidence intervals are computed using Fieller's method following
\cite{vogelsang-nawaz-JTSA}. All five sets of observed temperatures show the
same pattern in estimated trend ratios across hPa levels. Amplification
(ratios greater than 1) occurs for hPa levels 700 to 200. For hPa level 150
two observed series show amplification whereas three do not. For hPa levels
100 to 20, there is cooling at these higher altitudes and ratios are negative
and increase in magnitude for higher altitudes (lower hPa values). While the
patterns across hPa levels are similar for the five sets of observed
temperature series, there can be noticeable variation across the five sets for
a given hPa level. This raises the question as to whether ratios are equal
across pairs of observed series for a given hPa level.

Table 5 reports, for each hPa level, pairwise differences between estimated
ratios, $\widehat{\Delta}_{\theta}$, and pairwise $g_{\widehat{\beta}}$
values. 95\% fixed-$b$ critical values are given below each estimate. A
$^{\ast}$ superscript on an estimated value indicates a rejection of the null
hypothesis of equal ratios (indicates the confidence interval does not contain
the value $0$). For reporting purposes, the values of $g_{\widehat{\beta}}$
and confidence intervals are scaled by $10^{4}$. Of course, this has no effect
on whether the null hypothesis of equal ratios can be rejected. Table 5 is
divided into four panels where each panel reports results for three hPa
levels. In general, there are many pairs where the differences in trend ratios
are not small in magnitude and are statistically significant. For example,
among the three balloon data sets, RICH, RAOB, RATP, there are rejections of
equal ratios for 8 to 11 of the hPa levels across the three respective pairs.
One case where there is relative alignment of trend ratios is between the two
Reanalyses data sets, ERA5 and JRA3Q, where rejections of equal ratios occurs
for only 3 of 12 hPa levels. Except between the pair of Reanalyses data sets,
the results in Table 5 suggest there are differences in trend ratios among
pairs of the observed data sets across hPa levels.

\section{Conclusion}

This paper develops estimation and inference methods for systems of pairs of
trend stationary time series where the parameters of interest are ratios of
trend slopes between pairs of time series. Inference focuses on null
hypotheses that can be written as linear restrictions across trend ratios.
Trend slopes are estimated by IV using time as an instrument and test
statistics are robust to i) stationary serial correlation in the fluctuations
around trend, and ii) correlation between and across pairs of time series. For
the empirically relevant special case of testing for equal trend slopes
between two pairs of time series, an alternative testing approach is developed
by restating the equal trend ratio restriction as a restriction involving
products of the underlying trend slopes which are estimated by OLS. Theory and
finite sample results suggest that both the IV and products approach work well
and have similar properties when trend slopes are not too small and serial
correlation is not too strong. When trend slopes are very small relative to
the variation around the trend function, the null limiting distribution of
both tests become nonpivotal and both tests tend to under-reject under the
null and have low power. Lower power is expected when trend slopes are very
small because trend ratios cannot be precisely estimated. That both tests
become conservative when trend slopes are very small gives the tests useful
robustness to very small trend slopes. Overall, the IV and product approaches
have complementary finite sample properties and using both is recommended in
practice for testing the hypothesis of equal ratios across two pairs of time series.

\section{Appendix: Proofs of Theorems}

The following lemmas provide the limits of various terms that appear in the IV
estimator of $\theta$, the OLS estimators of the trend slopes, and the
variance estimators. The theorems are straightforward to establish using
algebra, the continuous mapping theorem (CMT), and the lemmas. The first lemma
gives results that hold for any magnitude of trend slopes. Subsequent lemmas
are given for the three cases of trend slope magnitudes.

\begin{lemma}
Suppose that (\ref{fclt}) and (\ref{fclteps}) hold. The following hold as
$T\rightarrow\infty$ for any values of $\beta_{1}^{(i)},\beta_{2}^{(i)}:$%
\[
T^{-3}%
{\displaystyle\sum\limits_{t=1}^{T}}
(t-\overline{t})^{2}\overset{}{\rightarrow}\int_{0}^{1}(s-\frac{1}{2}%
)^{2}ds=\frac{1}{12},
\]%
\[
T^{-2}%
{\displaystyle\sum\limits_{t=1}^{[rT]}}
(t-\overline{t})\overset{}{\rightarrow}\int_{0}^{r}(s-\frac{1}{2})ds=L(r),
\]%
\[
T^{-1/2}%
{\displaystyle\sum\limits_{t=1}^{[rT]}}
\left(  \scalebox{1.5}{$\bm{\epsilon}$}_{\theta t}-\overline
{\scalebox{1.5}{$\bm{\epsilon}$}}_{\theta}\right)  \overset{}{\Rightarrow
}\mathbf{\Lambda}_{\epsilon}\left(  \mathbf{W}_{\epsilon}(r)-r\mathbf{W}%
_{\epsilon}(1)\right)  \equiv\mathbf{\Lambda}_{\epsilon}\widetilde{\mathbf{W}%
}_{\epsilon}(r),
\]%
\[
T^{-3/2}%
{\displaystyle\sum\limits_{t=1}^{T}}
(t-\overline{t})\scalebox{1.5}{$\bm{\epsilon}$}_{\theta t}%
\overset{}{\Rightarrow}\mathbf{\Lambda}_{\epsilon}\int_{0}^{1}(s-\frac{1}%
{2})d\mathbf{W}_{\epsilon}(s),
\]%
\[
T^{-3/2}%
{\displaystyle\sum\limits_{t=1}^{T}}
(t-\overline{t})\mathbf{U}_{t}\overset{}{\Rightarrow}\int_{0}^{1}(s-\frac
{1}{2})d\mathbf{B}_{u}(s)=\mathbf{\Lambda}_{u}\int_{0}^{1}(s-\frac{1}%
{2})d\mathbf{W}_{u}(s),
\]%
\[
T^{-3/2}\left(  \widehat{\mathbf{\beta}}-\mathbf{\beta}\right)
\overset{}{\Rightarrow}12\mathbf{\Lambda}_{u}\int_{0}^{1}(s-\frac{1}%
{2})d\mathbf{W}_{u}(s)=12\int_{0}^{1}(s-\frac{1}{2})d\mathbf{B}_{u}%
(s)=\mathbf{\Psi},
\]%
\begin{align*}
T^{-1/2}%
{\displaystyle\sum\limits_{t=1}^{[rT]}}
\widehat{\mathbf{U}}_{t}  &  \Rightarrow\left[  \mathbf{B}_{u}(r)-r\mathbf{B}%
_{u}(1)-12L(r)\int_{0}^{1}(s-\frac{1}{2})d\mathbf{B}_{u}(s)\right]
\equiv\widetilde{\mathbf{B}}_{u}(r)\\
&  =\mathbf{\Lambda}_{u}\widetilde{\mathbf{W}}_{u}(r)\equiv\mathbf{\Lambda
}_{u}\left[  \mathbf{W}_{u}(r)-r\mathbf{W}_{u}(1)-12L(r)\int_{0}^{1}%
(s-\frac{1}{2})d\mathbf{W}_{u}(s)\right]  .
\end{align*}

\end{lemma}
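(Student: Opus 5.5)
The lemma is a collection of standard limits, and I would prove the items in the order stated: the deterministic limits first, then the weighted partial sums via summation by parts and the continuous mapping theorem (CMT) applied to the FCLT (\ref{fclt}), and finally the OLS residual partial sums, which combine the earlier items.

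\textbf{Deterministic limits.} The first two limits are Riemann sums. Since $\overline{t}=(T+1)/2$, we have $T^{-1}(t-\overline{t})=t/T-\tfrac{1}{2}+O(T^{-1})$ uniformly in $t$. Hence $T^{-3}\sum_{t=1}^{T}(t-\overline{t})^{2}=T^{-1}\sum_{t=1}^{T}(t/T-\tfrac12)^2+O(T^{-1})\to\int_0^1(s-\tfrac12)^2ds=1/12$. The same argument gives $T^{-2}\sum_{t=1}^{[rT]}(t-\overline{t})\to L(r)$, and the convergence is uniform in $r\in[0,1]$. Uniformity matters because $L(\cdot)$ is reused later.

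\textbf{Demeaned partial sums.} For the third limit, write
\[
T^{-1/2}\sum_{t=1}^{[rT]}(\bm{\epsilon}_{\theta t}-\overline{\bm{\epsilon}}_{\theta})=T^{-1/2}S_{[rT]}-\frac{[rT]}{T}T^{-1/2}S_{T},
\qquad S_{k}=\sum_{t=1}^{k}\bm{\epsilon}_{\theta t}.
\]
Then (\ref{fclteps}) and the CMT give $\mathbf{\Lambda}_\epsilon(\mathbf{W}_\epsilon(r)-r\mathbf{W}_\epsilon(1))$.

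\textbf{Trend-weighted sums.} For the fourth and fifth limits I use summation by parts. With $S_k=\sum_{t\le k}\bm{\epsilon}_{\theta t}$,
\[
\sum_{t=1}^{T}(t-\overline{t})\bm{\epsilon}_{\theta t}=(T-\overline{t})S_T-\sum_{t=1}^{T-1}S_t.
\]
Scaling by $T^{-3/2}$ gives $\tfrac12 T^{-1/2}S_T-T^{-1}\sum_{t} T^{-1/2}S_t+o_p(1)$. By the CMT this converges to
\[
\tfrac12\mathbf{\Lambda}_\epsilon\mathbf{W}_\epsilon(1)-\mathbf{\Lambda}_\epsilon\int_0^1\mathbf{W}_\epsilon(s)ds,
\]
and integration by parts shows this equals $\mathbf{\Lambda}_\epsilon\int_0^1(s-\tfrac12)d\mathbf{W}_\epsilon(s)$. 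The identical argument with $\mathbf{U}_t$ and (\ref{fclt}) gives the fifth limit. Representing the Brownian motion as $\mathbf{\Lambda}_u\mathbf{W}_u$ gives the equality of the two forms.

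\textbf{OLS slope estimators.} The sixth limit is the OLS slope result. From (\ref{beta1ols})--(\ref{beta2ols}),
\[
\widehat{\mathbf{\beta}}-\mathbf{\beta}=\left(\sum(t-\overline{t})^2\right)^{-1}\sum(t-\overline{t})\mathbf{U}_t,
\]
where the centering of $\mathbf{U}_t$ drops out because $\sum(t-\overline{t})=0. $ Multiplying by $T^{3/2}$ and combining the first and fifth limits yields $12\int_0^1(s-\tfrac12)d\mathbf{B}_u(s)=\mathbf{\Psi}$. The normalization in the statement should read $T^{3/2}(\widehat{\mathbf{\beta}}-\mathbf{\beta})$; I would use that scaling, since it is what the algebra produces.

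\textbf{Residual partial sums.} For the last limit, the OLS residuals are
\[
\widehat{\mathbf{U}}_t=(\mathbf{U}_t-\overline{\mathbf{U}})-(t-\overline{t})(\widehat{\mathbf{\beta}}-\mathbf{\beta}).
\]
Hence
\[
T^{-1/2}\sum_{t\le[rT]}\widehat{\mathbf{U}}_t=T^{-1/2}\sum_{t\le[rT]}(\mathbf{U}_t-\overline{\mathbf{U}})-\left(T^{-2}\sum_{t\le[rT]}(t-\overline{t})\right)T^{3/2}(\widehat{\mathbf{\beta}}-\mathbf{\beta}).
\]
The first term converges by the argument of the third limit applied to $\mathbf{U}_t$. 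The second converges to $L(r)\mathbf{\Psi}$ by the second and sixth limits.

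\textbf{Main obstacle.} These are not separate marginal limits: they must hold jointly as processes indexed by $r$. The way to secure this is to express every quantity as a continuous functional of the single process $T^{-1/2}S_{[rT]}$, plus deterministic terms converging uniformly, and then apply the CMT once to (\ref{fclt}). The remainder terms from $[rT]/T-r$ and the $O(T^{-1})$ corrections in $\overline{t}$ must be shown to be $o_p(1)$ uniformly in $r$. Finally, none of the arguments uses the values of $\beta_1^{(i)},\beta_2^{(i)}$, since they cancel in $\widehat{\mathbf{\beta}}-\mathbf{\beta}$ and in the residuals. This is why the lemma holds for any trend magnitudes.
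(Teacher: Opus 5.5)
Your proposal is correct and is essentially the argument the paper relies on. The paper's proof only says the limits are standard consequences of the FCLTs (\ref{fclt}) and (\ref{fclteps}) and cites Hamilton; your Riemann-sum, summation-by-parts, continuous-mapping and OLS-residual steps write out those textbook derivations. You are also right that the sixth limit should be normalized as $T^{3/2}(\widehat{\mathbf{\beta}}-\mathbf{\beta})$ rather than $T^{-3/2}$, which is how the paper actually uses it in the proofs of Lemma 3 and Theorem 3.
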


\noindent\textbf{Proof: }The results in this lemma are standard given the
FCLTs (\ref{fclt}) and (\ref{fclteps}). See \cite{hamilton-book}.

\begin{lemma}
(Large to small slopes) Suppose that (\ref{fclt}) and (\ref{fclteps}) hold and
$\beta_{1}^{(i)}=T^{-\kappa}\overline{\beta}_{1}^{(i)}$, $\beta_{2}%
^{(i)}=T^{-\kappa}\overline{\beta}_{2}^{(i)}$ with $0\leq\kappa<\frac{3}{2}$.
The following hold as $T\rightarrow\infty$,%
\begin{gather*}
T^{-3+\kappa}%
{\displaystyle\sum\limits_{t=1}^{T}}
(t-\overline{t})(y_{2t}^{(i)}-\overline{y}_{2}^{(i)})\overset{p}{\rightarrow
}\overline{\beta}_{2}^{(i)}\int_{0}^{1}(s-\frac{1}{2})^{2}ds=\frac{1}%
{12}\overline{\beta}_{2}^{(i)},\\
T^{-2+\kappa}%
{\displaystyle\sum\limits_{t=1}^{[rT]}}
(y_{2t}^{(i)}-\overline{y}_{2}^{(i)})\overset{p}{\rightarrow}\overline{\beta
}_{2}^{(i)}L(r),\\
T^{\kappa}\mathbf{R}_{\widehat{\beta}}\overset{p}{\rightarrow}\mathbf{R}%
_{\overline{\beta}}.
\end{gather*}

\end{lemma}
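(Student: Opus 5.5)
The plan is to substitute the trend-plus-noise representation (\ref{1.2}) into each sum, which splits it into a deterministic trend piece and a stochastic noise piece. Lemma 1 shows that the noise piece is of smaller order when $\kappa<3/2$.

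For the first result, demeaning removes the intercept, so $y_{2t}^{(i)}-\overline{y}_{2}^{(i)}=\beta_{2}^{(i)}(t-\overline{t})+(u_{2t}^{(i)}-\overline{u}_{2}^{(i)})$. Because $\sum(t-\overline{t})=0$, we get
\[
T^{-3+\kappa}\sum_{t=1}^{T}(t-\overline{t})(y_{2t}^{(i)}-\overline{y}_{2}^{(i)})=\overline{\beta}_{2}^{(i)}T^{-3}\sum_{t=1}^{T}(t-\overline{t})^{2}+T^{\kappa-3/2}\,T^{-3/2}\sum_{t=1}^{T}(t-\overline{t})u_{2t}^{(i)} .
\]
By Lemma 1, the first term converges to $\overline{\beta}_{2}^{(i)}/12$. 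In the second term, $T^{-3/2}\sum(t-\overline{t})u_{2t}^{(i)}$ converges in distribution to a component of $\int_{0}^{1}(s-\frac{1}{2})d\mathbf{B}_{u}(s)$, so it is $O_{p}(1)$. It is multiplied by $T^{\kappa-3/2}\rightarrow0$, so the second term is $o_{p}(1)$. Slutsky then gives convergence in probability.

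The second result follows the same pattern:
\[
T^{-2+\kappa}\sum_{t=1}^{[rT]}(y_{2t}^{(i)}-\overline{y}_{2}^{(i)})=\overline{\beta}_{2}^{(i)}T^{-2}\sum_{t=1}^{[rT]}(t-\overline{t})+T^{\kappa-3/2}\,T^{-1/2}\sum_{t=1}^{[rT]}(u_{2t}^{(i)}-\overline{u}_{2}^{(i)}) .
\]
By Lemma 1, the first term tends to $\overline{\beta}_{2}^{(i)}L(r)$. By the FCLT (\ref{fclt}), the partial sum of the demeaned noise converges weakly to $B_{u2}^{(i)}(r)-rB_{u2}^{(i)}(1)$. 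That limit is tight in $r$, so after multiplying by $T^{\kappa-3/2}$ the second term is $o_{p}(1)$, uniformly in $r$ if a functional statement is wanted. The main point to check is that the partial sums are $O_{p}(1)$ uniformly in $r$. This follows from weak convergence together with the continuous mapping theorem applied to the sup norm.

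For the third result, the entries of $\mathbf{R}_{\widehat{\beta}}$ are the OLS slope estimators, so it suffices to show $T^{\kappa}\widehat{\beta}_{j}^{(i)}\overset{p}{\rightarrow}\overline{\beta}_{j}^{(i)}$ for $j=1,2$. Write $\widehat{\beta}_{j}^{(i)}-\beta_{j}^{(i)}=\left(\sum(t-\overline{t})^{2}\right)^{-1}\sum(t-\overline{t})u_{jt}^{(i)}$. By Lemma 1, $T^{3/2}(\widehat{\beta}_{j}^{(i)}-\beta_{j}^{(i)})=O_{p}(1)$; the stated scaling $T^{-3/2}(\widehat{\mathbf{\beta}}-\mathbf{\beta})$ in Lemma 1 should read $T^{3/2}$. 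Hence
\[
T^{\kappa}\widehat{\beta}_{j}^{(i)}=\overline{\beta}_{j}^{(i)}+T^{\kappa-3/2}\,T^{3/2}(\widehat{\beta}_{j}^{(i)}-\beta_{j}^{(i)})=\overline{\beta}_{j}^{(i)}+o_{p}(1).
\]
Stacking the four components gives $T^{\kappa}\mathbf{R}_{\widehat{\beta}}\overset{p}{\rightarrow}\mathbf{R}_{\overline{\beta}}$.

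None of these steps is a serious obstacle. The only care needed is to track the rate $T^{\kappa-3/2}$, which vanishes precisely because $\kappa<3/2$; this is where Case 1 differs from Case 2.
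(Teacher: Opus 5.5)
Your proposal is correct and follows the paper's proof essentially step for step. You substitute $y_{2t}^{(i)}-\overline{y}_{2}^{(i)}=\beta_{2}^{(i)}(t-\overline{t})+(u_{2t}^{(i)}-\overline{u}_{2}^{(i)})$, use Lemma 1 to show the noise pieces are $O_{p}(1)$ multiplied by $T^{\kappa-3/2}\rightarrow0$, and write $T^{\kappa}\widehat{\beta}_{j}^{(i)}=\overline{\beta}_{j}^{(i)}+T^{\kappa-3/2}T^{3/2}(\widehat{\beta}_{j}^{(i)}-\beta_{j}^{(i)})$; you also rightly note that Lemma 1's scaling should be $T^{3/2}(\widehat{\mathbf{\beta}}-\mathbf{\beta})$, and your last display avoids the stray factor $\overline{\beta}_{j}^{(i)}$ that appears in the paper's (\ref{lem2_1}).
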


\noindent\textbf{Proof: }The first two results of the lemma are easy to
establish once%
\[
y_{2t}^{(i)}-\overline{y}_{2}^{(i)}=\beta_{2}^{(i)}\left(  t-\overline
{t}\right)  +(u_{2t}^{(i)}-\overline{u}_{2}^{(i)}),
\]
is substituted into each expression and applying limits from Lemma 1:%
\begin{align*}
T^{-3+\kappa}%
{\displaystyle\sum\limits_{t=1}^{T}}
(t-\overline{t})(y_{2t}^{(i)}-\overline{y}_{2}^{(i)})  &  =T^{-3+\kappa}%
\beta_{2}^{(i)}%
{\displaystyle\sum\limits_{t=1}^{T}}
(t-\overline{t})^{2}+T^{-3/2+\kappa}T^{-3/2}%
{\displaystyle\sum\limits_{t=1}^{T}}
(t-\overline{t})u_{2t}^{(i)}\\
&  =\overline{\beta}_{2}^{(i)}T^{-3}%
{\displaystyle\sum\limits_{t=1}^{T}}
(t-\overline{t})^{2}+o_{p}(1)\overset{p}{\rightarrow}\overline{\beta}%
_{2}^{(i)}\int_{0}^{1}(s-\frac{1}{2})^{2}ds=\frac{1}{12}\overline{\beta}%
_{2}^{(i)},
\end{align*}%
\begin{align*}
T^{-2+\kappa}%
{\displaystyle\sum\limits_{t=1}^{[rT]}}
(y_{2t}^{(i)}-\overline{y}_{2}^{(i)})  &  =T^{-2+\kappa}\beta_{2}^{(i)}%
{\displaystyle\sum\limits_{t=1}^{[rT]}}
\left(  t-\overline{t}\right)  +T^{-3/2+\kappa}T^{-1/2}%
{\displaystyle\sum\limits_{t=1}^{[rT]}}
(u_{2t}^{(i)}-\overline{u}_{2}^{(i)})\\
&  =\overline{\beta}_{2}^{(i)}T^{-2}%
{\displaystyle\sum\limits_{t=1}^{[rT]}}
\left(  t-\overline{t}\right)  +o_{p}(1)\overset{p}{\rightarrow}%
\overline{\beta}_{2}^{(i)}L(r).
\end{align*}
The second two terms of each expression are $o_{p}(1)$ because $T^{-3/2+\kappa
}\rightarrow0$ as $T\rightarrow\infty$ for $0\leq\kappa<\frac{3}{2}$. For the
third result of the lemma note that%
\[
\widehat{\beta}_{j}^{(i)}=\beta_{j}^{(i)}+\left(  \widehat{\beta}_{j}%
^{(i)}-\beta_{j}^{(i)}\right)  =T^{-\kappa}\overline{\beta}_{j}^{(i)}+\left(
\widehat{\beta}_{j}^{(i)}-\beta_{j}^{(i)}\right)  ,
\]
and it follows that%
\begin{equation}
T^{\kappa}\widehat{\beta}_{j}^{(i)}=\overline{\beta}_{j}^{(i)}+T^{\kappa
}\left(  \widehat{\beta}_{j}^{(i)}-\beta_{j}^{(i)}\right)  =\overline{\beta
}_{j}^{(i)}+T^{-3/2+\kappa}T^{3/2}\left(  \widehat{\beta}_{j}^{(i)}-\beta
_{j}^{(i)}\right)  \overline{\beta}_{j}^{(i)}=\overline{\beta}_{j}^{(i)}%
+o_{p}(1), \label{lem2_1}%
\end{equation}
where second term is $o_{p}(1)$ because $0\leq\kappa<\frac{3}{2}$. Using
(\ref{lem2_1}) it easily follows that%
\[
T^{\kappa}\mathbf{R}_{\widehat{\beta}}=\left[  T^{\kappa}\widehat{\beta}%
_{2}^{(2)},-T^{\kappa}\widehat{\beta}_{2}^{(1)},-T^{\kappa}\widehat{\beta}%
_{1}^{(2)},T^{\kappa}\widehat{\beta}_{1}^{(1)}\right]  =\left[  \overline
{\beta}_{2}^{(2)},-\overline{\beta}_{2}^{(1)},-\overline{\beta}_{1}%
^{(2)},\overline{\beta}_{1}^{(1)}\right]  +o_{p}(1)\overset{p}{\rightarrow
}\mathbf{R}_{\overline{\beta}}.
\]

\bigskip

\begin{lemma}
(Very small slopes) Suppose that (\ref{fclt}) and (\ref{fclteps}) hold and
$\beta_{1}^{(i)}=T^{-3/2}\overline{\beta}_{1}^{(i)}$, $\beta_{2}%
^{(i)}=T^{-3/2}\overline{\beta}_{2}^{(i)}$ $(\kappa=\frac{3}{2})$. The
following hold as $T\rightarrow\infty$:%
\begin{gather*}
T^{-3/2}%
{\displaystyle\sum\limits_{t=1}^{T}}
(t-\overline{t})(y_{2t}^{(i)}-\overline{y}_{2}^{(i)})\Rightarrow\frac{1}%
{12}\overline{\beta}_{2}^{(i)}+\int_{0}^{1}(s-\frac{1}{2})dB_{u2}^{(i)}(s),\\
T^{-1/2}%
{\displaystyle\sum\limits_{t=1}^{[rT]}}
(y_{2t}^{(i)}-\overline{y}_{2}^{(i)})\Rightarrow\overline{\beta}_{2}%
^{(i)}L(r)+B_{u2}^{(i)}(r)-rB_{u2}^{(i)}(1)=\overline{\beta}_{2}%
^{(i)}L(r)+\widetilde{B}_{u2}^{(i)}(r),\\
T^{3/2}\mathbf{R}_{\widehat{\beta}}\Rightarrow\mathbf{R}_{\overline{\beta}%
}+\left[  \Psi_{2}^{(2)},-\Psi_{2}^{(1)},-\Psi_{1}^{(2)},\Psi_{1}%
^{(1)}\right]  .
\end{gather*}

\end{lemma}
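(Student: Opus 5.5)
The plan mirrors the proof of Lemma 2. Each result follows from substituting the decomposition
\[
y_{2t}^{(i)}-\overline{y}_{2}^{(i)}=\beta_{2}^{(i)}\left(t-\overline{t}\right)+\left(u_{2t}^{(i)}-\overline{u}_{2}^{(i)}\right),
\]
or its analogue for $\widehat{\beta}_{j}^{(i)}$, and applying Lemma 1. The key difference from Lemma 2 is that with $\kappa=\frac{3}{2}$ the deterministic and stochastic parts are now of the same order. Neither term is negligible, so both survive in the limit.

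For the first result, I multiply by $T^{-3/2}$ and write
\[
T^{-3/2}\sum_{t=1}^{T}(t-\overline{t})(y_{2t}^{(i)}-\overline{y}_{2}^{(i)})=\overline{\beta}_{2}^{(i)}T^{-3}\sum_{t=1}^{T}(t-\overline{t})^{2}+T^{-3/2}\sum_{t=1}^{T}(t-\overline{t})u_{2t}^{(i)},
\]
using $\sum_{t}(t-\overline{t})\overline{u}_{2}^{(i)}=0$ and $T^{-3/2}\beta_{2}^{(i)}=T^{-3}\overline{\beta}_{2}^{(i)}$. The first term converges to $\frac{1}{12}\overline{\beta}_{2}^{(i)}$ by Lemma 1. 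The second converges to $\int_{0}^{1}(s-\frac{1}{2})dB_{u2}^{(i)}(s)$, which is the relevant element of the fifth result of Lemma 1. The two limits hold jointly because the first is deterministic, and this joint convergence is also what the CMT step in the theorems will need.

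For the second result, I multiply by $T^{-1/2}$ and write
\[
T^{-1/2}\sum_{t=1}^{[rT]}(y_{2t}^{(i)}-\overline{y}_{2}^{(i)})=\overline{\beta}_{2}^{(i)}T^{-2}\sum_{t=1}^{[rT]}(t-\overline{t})+T^{-1/2}\sum_{t=1}^{[rT]}(u_{2t}^{(i)}-\overline{u}_{2}^{(i)}).
\]
The first term tends to $\overline{\beta}_{2}^{(i)}L(r)$ by Lemma 1. For the second term, I use $T^{-1/2}\sum_{t\le[rT]}\overline{u}_{2}^{(i)}=([rT]/T)\,T^{-1/2}\sum_{t=1}^{T}u_{2t}^{(i)}$ together with the FCLT (\ref{fclt}), which gives $B_{u2}^{(i)}(r)-rB_{u2}^{(i)}(1)$.

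For the third result, I write
\[
T^{3/2}\widehat{\beta}_{j}^{(i)}=\overline{\beta}_{j}^{(i)}+T^{3/2}\left(\widehat{\beta}_{j}^{(i)}-\beta_{j}^{(i)}\right).
\]
The second term converges to $\Psi_{j}^{(i)}$ by the sixth result of Lemma 1. That result as printed carries $T^{-3/2}$, but the intended normalization is $T^{3/2}$, since $\widehat{\beta}-\beta=(\sum(t-\overline{t})^{2})^{-1}\sum(t-\overline{t})\mathbf{U}_{t}$ and $T^{3}/\sum(t-\overline{t})^{2}\to12$. Stacking the four elements into $\mathbf{R}_{\widehat{\beta}}$, with the signs and ordering defining $\mathbf{R}_{\beta}$, gives $\mathbf{R}_{\overline{\beta}}+[\Psi_{2}^{(2)},-\Psi_{2}^{(1)},-\Psi_{1}^{(2)},\Psi_{1}^{(1)}]$. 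The vector $\mathbf{\Psi}$ arises from one functional of $\mathbf{B}_{u}$, so convergence is joint across the four elements.

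There is no real obstacle here. The only point needing care is bookkeeping: the stochastic terms are no longer $o_{p}(1)$, so all pieces must converge jointly. This holds because every limit is a continuous functional of the single FCLT (\ref{fclt}).
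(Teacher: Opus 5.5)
Your proposal is correct and follows the paper's proof essentially line for line: the paper also sets $\kappa=\frac{3}{2}$ in the Lemma 2 decompositions so that the deterministic and noise terms are of the same order and both survive. It likewise obtains the third result from $T^{3/2}\widehat{\beta}_{j}^{(i)}=\overline{\beta}_{j}^{(i)}+T^{3/2}(\widehat{\beta}_{j}^{(i)}-\beta_{j}^{(i)})$ together with the $\mathbf{\Psi}$ limit in Lemma 1. You are also right that the normalization in that Lemma 1 result should be $T^{3/2}$ rather than $T^{-3/2}$, and the paper's own proof of this lemma repeats the typo.
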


\noindent\textbf{Proof: }Setting $\kappa=\frac{3}{2}$ in the proof of Lemma 2
and using Lemma 1 gives%
\[
T^{-3/2}%
{\displaystyle\sum\limits_{t=1}^{T}}
(t-\overline{t})(y_{2t}^{(i)}-\overline{y}_{2}^{(i)})=\overline{\beta}%
_{2}^{(i)}T^{-3}%
{\displaystyle\sum\limits_{t=1}^{T}}
(t-\overline{t})^{2}+T^{-3/2}%
{\displaystyle\sum\limits_{t=1}^{T}}
(t-\overline{t})u_{2t}^{(i)}\Rightarrow\frac{1}{12}\overline{\beta}_{2}%
^{(i)}+\int_{0}^{1}(s-\frac{1}{2})dB_{u2}^{(i)}(s),
\]%
\[
T^{-1/2}%
{\displaystyle\sum\limits_{t=1}^{[rT]}}
(y_{2t}^{(i)}-\overline{y}_{2}^{(i)})=\overline{\beta}_{2}^{(i)}T^{-2}%
{\displaystyle\sum\limits_{t=1}^{[rT]}}
\left(  t-\overline{t}\right)  +T^{-1/2}%
{\displaystyle\sum\limits_{t=1}^{[rT]}}
(u_{2t}^{(i)}-\overline{u}_{2}^{(i)})\Rightarrow\overline{\beta}_{2}%
^{(i)}L(r)+\widetilde{B}_{u2}^{(i)}(r),
\]%
\begin{equation}
T^{3/2}\widehat{\beta}_{j}^{(i)}=\overline{\beta}_{j}^{(i)}+T^{3/2}\left(
\widehat{\beta}_{j}^{(i)}-\beta_{j}^{(i)}\right)  \Rightarrow\overline{\beta
}_{j}^{(i)}+\Psi_{j}^{(i)}, \label{lem3_1}%
\end{equation}
where the second part of (\ref{lem3_1}) follows from elements of the limit of
the vector $T^{-3/2}\left(  \widehat{\mathbf{\beta}}-\mathbf{\beta}\right)  $
in Lemma 1. The limit of $T^{3/2}\mathbf{R}_{\widehat{\beta}}$ directly
follows from (\ref{lem3_1}).

\bigskip

\begin{lemma}
(Zero slopes) Suppose that (\ref{fclt}) and (\ref{fclteps}) hold and
$\beta_{1}^{(i)}=0$, $\beta_{2}^{(i)}=0$. The following hold as $T\rightarrow
\infty$,%
\begin{gather*}
T^{-3/2}%
{\displaystyle\sum\limits_{t=1}^{T}}
(t-\overline{t})(y_{2t}^{(i)}-\overline{y}_{2}^{(i)})\Rightarrow\int_{0}%
^{1}(s-\frac{1}{2})dB_{u2}^{(i)}(s),\\
T^{-1/2}%
{\displaystyle\sum\limits_{t=1}^{[rT]}}
(y_{2t}^{(i)}-\overline{y}_{2}^{(i)})\Rightarrow B_{u2}^{(i)}(r)-rB_{u2}%
^{(i)}(1)=\widetilde{B}_{u2}^{(i)}(r),\\
T^{3/2}\mathbf{R}_{\widehat{\beta}}\Rightarrow\left[  \Psi_{2}^{(2)},-\Psi
_{2}^{(1)},-\Psi_{1}^{(2)},\Psi_{1}^{(1)}\right]  .
\end{gather*}

\end{lemma}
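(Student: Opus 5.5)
The plan is to mirror the proofs of Lemmas 2 and 3 with $\overline{\beta}_{1}^{(i)}=\overline{\beta}_{2}^{(i)}=0$. The trend terms then drop out identically, leaving only the noise components.

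For the first result, substitute $y_{2t}^{(i)}-\overline{y}_{2}^{(i)}=u_{2t}^{(i)}-\overline{u}_{2}^{(i)}$. Because $\sum_{t=1}^{T}(t-\overline{t})=0$, this gives
\[
T^{-3/2}\sum_{t=1}^{T}(t-\overline{t})(y_{2t}^{(i)}-\overline{y}_{2}^{(i)})=T^{-3/2}\sum_{t=1}^{T}(t-\overline{t})u_{2t}^{(i)} .
\]
The limit $\int_{0}^{1}(s-\frac{1}{2})dB_{u2}^{(i)}(s)$ is then read off as the relevant element of the fifth result of Lemma 1, $T^{-3/2}\sum(t-\overline{t})\mathbf{U}_{t}\Rightarrow\int_{0}^{1}(s-\frac{1}{2})d\mathbf{B}_{u}(s)$.

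For the second result, write
\[
T^{-1/2}\sum_{t=1}^{[rT]}(u_{2t}^{(i)}-\overline{u}_{2}^{(i)})=T^{-1/2}\sum_{t=1}^{[rT]}u_{2t}^{(i)}-\frac{[rT]}{T}\,T^{-1/2}\sum_{t=1}^{T}u_{2t}^{(i)} .
\]
By the FCLT (\ref{fclt}) and the CMT, the pair of partial sums converges jointly, and $[rT]/T\rightarrow r$ uniformly in $r$. The expression therefore converges to $B_{u2}^{(i)}(r)-rB_{u2}^{(i)}(1)=\widetilde{B}_{u2}^{(i)}(r)$.

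For the third result, zero slopes give $\widehat{\beta}_{j}^{(i)}=\widehat{\beta}_{j}^{(i)}-\beta_{j}^{(i)}$. Hence $T^{3/2}\widehat{\beta}_{j}^{(i)}\Rightarrow\Psi_{j}^{(i)}$ jointly over $(i,j)$, by the sixth result of Lemma 1. One caveat: that result is typed as $T^{-3/2}(\widehat{\mathbf{\beta}}-\mathbf{\beta})$ but must be read as $T^{3/2}$, as in (\ref{lem3_1}); I would state this explicitly. It holds because $T^{3/2}(\widehat{\beta}-\beta)=(T^{-3}\sum(t-\overline{t})^{2})^{-1}T^{-3/2}\sum(t-\overline{t})u_{t}$, which converges to $12\int_{0}^{1}(s-\frac{1}{2})dB_{u}(s)$. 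Since $\mathbf{R}_{\widehat{\beta}}$ is a fixed linear rearrangement of the four slope estimates, the CMT gives $T^{3/2}\mathbf{R}_{\widehat{\beta}}\Rightarrow[\Psi_{2}^{(2)},-\Psi_{2}^{(1)},-\Psi_{1}^{(2)},\Psi_{1}^{(1)}]$.

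There is no real obstacle here. The only points requiring care are that all convergences hold jointly, which follows because every limit is a functional of the single process $\mathbf{B}_{\mathbf{u}}$ under (\ref{fclt}), and the $T^{3/2}$ scaling noted above.
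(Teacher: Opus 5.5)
Your proof is correct and is essentially the paper's argument: the paper simply sets $\overline{\beta}_{j}^{(i)}=0$ in Lemma 3. That lemma's proof is exactly the decomposition you carry out directly, where the trend terms drop out and only the Lemma 1 noise limits remain. Your remark that the sixth result of Lemma 1 should read $T^{3/2}(\widehat{\mathbf{\beta}}-\mathbf{\beta})$ rather than $T^{-3/2}(\widehat{\mathbf{\beta}}-\mathbf{\beta})$ is also right, consistent with how it is used in the proof of Lemma 3.
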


\noindent\textbf{Proof: }The results follow directly from Lemma 3 by setting
$\overline{\beta}_{j}^{(i)}=0$ in all the limiting expressions.

\bigskip

\bigskip

\noindent\textbf{Proof of Theorem 1.} First consider the cases where the trend
slopes are not zero. Using (\ref{thetahat_centered}) and scaling by
$T^{3/2-\kappa}$ gives%
\[
T^{3/2-\kappa}\left(  \widehat{\mathbf{\theta}}-\mathbf{\theta}\right)
\mathbf{=}\left(  T^{-3+\kappa}\widehat{\mathbf{D}}_{2}\right)  ^{-1}%
T^{-3/2}\sum_{t=1}^{T}\left(  t-\overline{t}\right)
\scalebox{1.5}{$\bm{\epsilon}$}_{\theta t}.
\]
For the case of large to small slopes, the limit of the $i^{th}$ diagonal
element of $T^{-3+\kappa}\widehat{\mathbf{D}}_{2}$, which is given by
$T^{-3+\kappa}%
{\displaystyle\sum_{t=1}^{T}}
(t-\overline{t})(y_{2t}^{(i)}-\overline{y}_{2}^{(i)})$, follows from Lemma 2,
and it follows that%
\[
T^{-3+\kappa}\widehat{\mathbf{D}}_{2}\overset{p}{\rightarrow}\frac{1}%
{12}\mathbf{D}_{\overline{\beta}_{2}}.
\]
Using the limit of $T^{-3/2}\sum_{t=1}^{T}\left(  t-\overline{t}\right)
\scalebox{1.5}{$\bm{\epsilon}$}_{\theta t}$ from Lemma 1 gives%
\[
T^{3/2-\kappa}\left(  \widehat{\mathbf{\theta}}-\mathbf{\theta}\right)
\Rightarrow\left(  \frac{1}{12}\mathbf{D}_{\overline{\beta}_{2}}\right)
^{-1}\mathbf{\Lambda}_{\epsilon}\int_{0}^{1}(s-\frac{1}{2})d\mathbf{W}%
_{\epsilon}(s),
\]
as required. For the case of very small slopes, Lemma 3 gives%
\[
T^{-3/2}\widehat{\mathbf{D}}_{2}\overset{p}{\rightarrow}\frac{1}{12}%
\mathbf{D}_{\overline{\beta}_{2}}+\mathbf{D}_{\mathbf{B}_{u2}},
\]
and it follows that%
\[
\left(  \widehat{\mathbf{\theta}}-\mathbf{\theta}\right)  \Rightarrow\left(
\frac{1}{12}\mathbf{D}_{\overline{\beta}_{2}}+\mathbf{D}_{\mathbf{B}_{u2}%
}\right)  ^{-1}\mathbf{\Lambda}_{\epsilon}\int_{0}^{1}(s-\frac{1}%
{2})d\mathbf{W}_{\epsilon}(s),
\]
as required. For the case of zero slopes, from (\ref{thetahat_zero_slopes}) it
follows that%
\[
\widehat{\theta}^{(i)}=\frac{%
{\textstyle\sum\nolimits_{t=1}^{T}}
(t-\overline{t})u_{1t}^{(i)}}{%
{\textstyle\sum\nolimits_{t=1}^{T}}
(t-\overline{t})u_{2t}^{(i)}}=\frac{T^{-3/2}%
{\textstyle\sum\nolimits_{t=1}^{T}}
(t-\overline{t})u_{1t}^{(i)}}{T^{-3/2}%
{\textstyle\sum\nolimits_{t=1}^{T}}
(t-\overline{t})u_{2t}^{(i)}}\Rightarrow\frac{\int_{0}^{1}(s-\frac{1}%
{2})dB_{u1}^{(i)}(s)}{\int_{0}^{1}(s-\frac{1}{2})dB_{u2}^{(i)}(s)},
\]
where convergence follows using elements from the limit of $T^{-3/2}%
{\displaystyle\sum_{t=1}^{T}}
(t-\overline{t})\mathbf{U}_{t}$ from Lemma 1. Collecting the $\widehat{\theta
}^{(i)}$ into the vector $\widehat{\mathbf{\theta}}$ gives the result.

\bigskip

\noindent\textbf{Proof of Theorem 2.} Results are given for the Wald
statistic. The limits of the $t$-statistics following easily from the Wald
statistic arguments. The Wald statistic is given by%
\[
Wald_{IV}=\left(  \mathbf{R}\widehat{\mathbf{\theta}}-\mathbf{r}\right)
^{\prime}\left[  \mathbf{R}\left(
{\textstyle\sum\nolimits_{t=1}^{T}}
(t-\overline{t})^{2}\right)  \widehat{\mathbf{D}}_{2}^{-1}%
\widehat{\mathbf{\Omega}}_{\mathbf{\epsilon}}\widehat{\mathbf{D}}_{2}%
^{-1}\mathbf{R}^{\prime}\right]  ^{-1}\left(  \mathbf{R}%
\widehat{\mathbf{\theta}}-\mathbf{r}\right)  .
\]
First consider the cases where the trend slopes are nonzero. Rewrite
$\mathbf{R}\widehat{\mathbf{\theta}}-\mathbf{r}$ as%
\[
\mathbf{R}\widehat{\mathbf{\theta}}-\mathbf{r=R}\left(
\widehat{\mathbf{\theta}}-\mathbf{\theta}\right)  +\mathbf{R\theta
}-\mathbf{r.}%
\]
Under $H_{1L}$ it follows that $\mathbf{R\theta=r+}\overline{\mathbf{\Delta}%
}T^{-3/2+\kappa}$, or equivalently $\mathbf{R\theta-r=}\overline
{\mathbf{\Delta}}T^{-3/2+\kappa}$ giving%
\begin{equation}
\mathbf{R}\widehat{\mathbf{\theta}}-\mathbf{r=R}\left(
\widehat{\mathbf{\theta}}-\mathbf{\theta}\right)  +\overline{\mathbf{\Delta}%
}T^{-3/2+\kappa}. \label{rthetahat2}%
\end{equation}
Plugging (\ref{rthetahat2}) into $Wald_{IV}$ gives%
\[
Wald_{IV}=\left(  \mathbf{R}\left(  \widehat{\mathbf{\theta}}-\mathbf{\theta
}\right)  +\overline{\mathbf{\Delta}}T^{-3/2+\kappa}\right)  ^{\prime}\left[
\mathbf{R}\left(
{\textstyle\sum\nolimits_{t=1}^{T}}
(t-\overline{t})^{2}\right)  \widehat{\mathbf{D}}_{2}^{-1}%
\widehat{\mathbf{\Omega}}_{\mathbf{\epsilon}}\widehat{\mathbf{D}}_{2}%
^{-1}\mathbf{R}^{\prime}\right]  ^{-1}\left(  \mathbf{R}\left(
\widehat{\mathbf{\theta}}-\mathbf{\theta}\right)  +\overline{\mathbf{\Delta}%
}T^{-3/2+\kappa}\right)
\]%
\begin{align*}
&  =\left(  \mathbf{R}T^{3/2-\kappa}\left(  \widehat{\mathbf{\theta}%
}-\mathbf{\theta}\right)  +\overline{\mathbf{\Delta}}\right)  ^{\prime}\left[
\mathbf{R}\left(  T^{-3}%
{\textstyle\sum\nolimits_{t=1}^{T}}
(t-\overline{t})^{2}\right)  \left(  T^{-3+\kappa}\widehat{\mathbf{D}}%
_{2}\right)  ^{-1}\widehat{\mathbf{\Omega}}_{\mathbf{\epsilon}}\left(
T^{-3+\kappa}\widehat{\mathbf{D}}_{2}\right)  ^{-1}\mathbf{R}^{\prime}\right]
^{-1}\\
&  \text{ \ \ \ \ \ \ \ \ \ \ \ \ \ \ \ \ \ \ \ \ \ \ \ \ \ \ }\times\left(
\mathbf{R}T^{3/2-\kappa}\left(  \widehat{\mathbf{\theta}}-\mathbf{\theta
}\right)  +\overline{\mathbf{\Delta}}\right)  .
\end{align*}
With the exception of $\widehat{\mathbf{\Omega}}_{\mathbf{\epsilon}}$, limits
of the scaled components of $Wald_{IV}$ follow from the proof of Theorem 1.
Following \cite{kvnewasym} the limit of $\widehat{\mathbf{\Omega}%
}_{\mathbf{\epsilon}}$ under the fixed-$b$ nesting for the bandwidth is given
by $\mathbf{P}_{b}(\mathbf{Q}(r))$ where $\mathbf{Q}(r)$ is the corresponding
limit of $T^{-1/2}\sum_{t=1}^{[rT]}\widehat{\scalebox{1.5}{$\bm{\epsilon}$}}%
_{\theta t}$. The form of $\mathbf{Q}(r)$ depends on whether the trend slopes
are large to small or are very small.

Simple algebra gives%
\[
T^{-1/2}\sum_{t=1}^{[rT]}\widehat{\scalebox{1.5}{$\bm{\epsilon}$}}_{\theta
t}=T^{-1/2}%
{\displaystyle\sum\limits_{t=1}^{[rT]}}
\left(  \scalebox{1.5}{$\bm{\epsilon}$}_{\theta t}-\overline
{\scalebox{1.5}{$\bm{\epsilon}$}}_{\theta}\right)  -\left(  T^{-2+\kappa}%
{\displaystyle\sum\limits_{t=1}^{[rT]}}
\mathbf{H}_{t}\right)  T^{3/2-\kappa}\left(  \widehat{\mathbf{\theta}%
}-\mathbf{\theta}\right)  ,
\]
where $\mathbf{H}_{t}$ is an $n\times n$ diagonal matrix with $i^{th}$
diagonal element $\sum_{t=1}^{[rT]}(y_{2t}^{(i)}-\overline{y}_{2}^{(i)})$. For
the case of large to small trend slopes, it follows from Lemma 2 that%
\begin{equation}
T^{-2+\kappa}%
{\displaystyle\sum\limits_{t=1}^{[rT]}}
\mathbf{H}_{t}\overset{p}{\rightarrow}\mathbf{D}_{\overline{\beta}_{2}}L(r).
\label{Hlim1}%
\end{equation}
Using (\ref{Hlim1}), Lemma 1, Lemma 2, and Theorem 1 it follows that%
\begin{align*}
&  T^{-1/2}\sum_{t=1}^{[rT]}\widehat{\scalebox{1.5}{$\bm{\epsilon}$}}_{\theta
t}\overset{}{\Rightarrow}\mathbf{\Lambda}_{\epsilon}\left(  \mathbf{W}%
_{\epsilon}(r)-r\mathbf{W}_{\epsilon}(1)\right)  -\mathbf{D}_{\overline{\beta
}_{2}}L(r)\left(  \frac{1}{12}\mathbf{D}_{\overline{\beta}_{2}}\right)
^{-1}\mathbf{\Lambda}_{\epsilon}\int_{0}^{1}(s-\frac{1}{2})d\mathbf{W}%
_{\epsilon}(s)\\
&  =\mathbf{\Lambda}_{\epsilon}\left(  \mathbf{W}_{\epsilon}(r)-r\mathbf{W}%
_{\epsilon}(1)\right)  -12L(r)\mathbf{\Lambda}_{\epsilon}\int_{0}^{1}%
(s-\frac{1}{2})d\mathbf{W}_{\epsilon}(s)\\
&  =\mathbf{\Lambda}_{\epsilon}\left(  \mathbf{W}_{\epsilon}(r)-r\mathbf{W}%
_{\epsilon}(1)-12L(r)\int_{0}^{1}(s-\frac{1}{2})d\mathbf{W}_{\epsilon
}(s)\right)  \equiv\mathbf{\Lambda}_{\epsilon}\widetilde{\mathbf{W}%
}_{\mathbf{\epsilon}}(r).
\end{align*}
For the case of very small slopes, $\kappa=\frac{3}{2}$, it follow from Lemma
3 that%
\begin{equation}
T^{-1/2}%
{\displaystyle\sum\limits_{t=1}^{[rT]}}
\mathbf{H}_{t}\overset{p}{\rightarrow}\mathbf{D}_{\overline{\beta}_{2}%
}L(r)+\mathbf{D}_{\widehat{\mathbf{B}}_{u2}(r)}. \label{Hlim2}%
\end{equation}
Using (\ref{Hlim2}), Lemma 1, Lemma 3, and Theorem 1 it follows that%
\[
T^{-1/2}\sum_{t=1}^{[rT]}\widehat{\scalebox{1.5}{$\bm{\epsilon}$}}_{\theta
t}\overset{}{\Rightarrow}\mathbf{H}_{1}(r),
\]%
\[
\mathbf{H}_{1}(r)=\mathbf{\Lambda}_{\epsilon}\left(  \mathbf{W}_{\epsilon
}(r)-r\mathbf{W}_{\epsilon}(1)\right)  -\left(  \mathbf{D}_{\overline{\beta
}_{2}}L(r)+\mathbf{D}_{\widehat{\mathbf{B}}_{u2}(r)}\right)  \left(  \frac
{1}{12}\mathbf{D}_{\overline{\beta}_{2}}+\mathbf{D}_{\mathbf{B}_{u2}}\right)
^{-1}\mathbf{\Lambda}_{\epsilon}\int_{0}^{1}(s-\frac{1}{2})d\mathbf{W}%
_{\epsilon}(s).
\]
The limit of $Wald_{IV}$ is now straightforward to obtain. For the case of
large to small trend slopes it follows that%
\begin{align*}
Wald_{IV}  &  =\left(  \mathbf{R}T^{3/2-\kappa}\left(  \widehat{\mathbf{\theta
}}-\mathbf{\theta}\right)  +\overline{\mathbf{\Delta}}\right)  ^{\prime
}\left[  \mathbf{R}\left(  T^{-3}%
{\displaystyle\sum\limits_{t=1}^{T}}
(t-\overline{t})^{2}\right)  \left(  T^{-3+\kappa}\widehat{\mathbf{D}}%
_{2}\right)  ^{-1}\widehat{\mathbf{\Omega}}_{\mathbf{\epsilon}}\left(
T^{-3+\kappa}\widehat{\mathbf{D}}_{2}\right)  ^{-1}\mathbf{R}^{\prime}\right]
^{-1}\\
&  \text{ \ \ \ \ \ \ \ \ \ \ \ \ \ \ \ \ \ \ \ \ \ \ \ \ \ \ \ }\times\left(
\mathbf{R}T^{3/2-\kappa}\left(  \widehat{\mathbf{\theta}}-\mathbf{\theta
}\right)  +\overline{\mathbf{\Delta}}\right)  ,
\end{align*}%
\begin{align*}
&  \Rightarrow\left(  \mathbf{R}\left(  \frac{1}{12}\mathbf{D}_{\overline
{\beta}_{2}}\right)  ^{-1}\mathbf{\Lambda}_{\epsilon}\int_{0}^{1}(s-\frac
{1}{2})d\mathbf{W}_{\epsilon}(s)+\overline{\mathbf{\Delta}}\right)  ^{\prime
}\left[  \mathbf{R}\frac{1}{12}\left(  \frac{1}{12}\mathbf{D}_{\overline
{\beta}_{2}}\right)  ^{-1}\mathbf{\Lambda}_{\epsilon}\mathbf{P}_{b}%
(\widetilde{\mathbf{W}}_{\epsilon}(r))\mathbf{\Lambda}_{\epsilon}^{\prime
}\left(  \frac{1}{12}\mathbf{D}_{\overline{\beta}_{2}}\right)  ^{-1}%
\mathbf{R}^{\prime}\right]  ^{-1}\\
&  \text{ \ \ \ \ \ \ \ \ \ \ \ \ \ \ \ \ \ \ \ \ \ \ \ \ \ \ \ \ \ \ \ }%
\times\left(  \mathbf{R}\left(  \frac{1}{12}\mathbf{D}_{\overline{\beta}_{2}%
}\right)  ^{-1}\mathbf{\Lambda}_{\epsilon}\int_{0}^{1}(s-\frac{1}%
{2})d\mathbf{W}_{\epsilon}(s)+\overline{\mathbf{\Delta}}\right)  ,
\end{align*}%
\begin{align*}
&  =\left(  \mathbf{12RD}_{\overline{\beta}_{2}}^{-1}\mathbf{\Lambda
}_{\epsilon}\int_{0}^{1}(s-\frac{1}{2})d\mathbf{W}_{\epsilon}(s)+\overline
{\mathbf{\Delta}}\right)  ^{\prime}\left[  12\mathbf{RD}_{\overline{\beta}%
_{2}}^{-1}\mathbf{\Lambda}_{\epsilon}\mathbf{P}_{b}(\widetilde{\mathbf{W}%
}_{\epsilon}(r))\mathbf{\Lambda}_{\epsilon}^{\prime}\mathbf{D}_{\overline
{\beta}_{2}}^{-1}\mathbf{R}^{\prime}\right]  ^{-1}\\
&  \text{ \ \ \ \ \ \ \ \ \ \ \ \ \ \ \ \ \ \ \ \ \ \ \ \ \ \ \ \ }%
\times\left(  \mathbf{12RD}_{\overline{\beta}_{2}}^{-1}\mathbf{\Lambda
}_{\epsilon}\int_{0}^{1}(s-\frac{1}{2})d\mathbf{W}_{\epsilon}(s)+\overline
{\mathbf{\Delta}}\right)  .
\end{align*}
The form of the limit given in Theorem 2 is obtained by replacing
$\mathbf{RD}_{\overline{\beta}_{2}}^{-1}\mathbf{\Lambda}_{\epsilon}%
\mathbf{W}_{\epsilon}(s)$ with $\mathbf{\Lambda}_{\epsilon}^{\ast}%
\mathbf{W}_{\epsilon}^{\ast}(s)$ where $\mathbf{W}_{\epsilon}^{\ast}(s)$ is a
$q\times1$ vector of standard Wiener processes and $\mathbf{\Lambda}%
_{\epsilon}^{\ast}$ is the matrix square root of the $q\times q$ matrix
$\mathbf{RD}_{\overline{\beta}_{2}}^{-1}\mathbf{\Lambda}_{\epsilon
}\mathbf{\Lambda}_{\epsilon}^{\prime}\mathbf{D}_{\overline{\beta}_{2}}%
^{-1}\mathbf{R}^{\prime}$. The $\mathbf{RD}_{\overline{\beta}_{2}}%
^{-1}\mathbf{\Lambda}_{\epsilon}$ matrices on both sides of $\mathbf{P}%
_{b}(\widetilde{\mathbf{W}}_{\epsilon}(r))$ can be pushed inside
$\mathbf{P}_{b}(\mathbf{\cdot})$ to give $\mathbf{RD}_{\overline{\beta}_{2}%
}^{-1}\mathbf{\Lambda}_{\epsilon}\mathbf{P}_{b}(\widetilde{\mathbf{W}%
}_{\epsilon}(r))\mathbf{\Lambda}_{\epsilon}^{\prime}\mathbf{D}_{\overline
{\beta}_{2}}^{-1}\mathbf{R}^{\prime}=\mathbf{P}_{b}(\mathbf{RD}_{\overline
{\beta}_{2}}^{-1}\mathbf{\Lambda}_{\epsilon}\widetilde{\mathbf{W}}_{\epsilon
}(r))\equiv\mathbf{P}_{b}(\mathbf{\Lambda}_{\epsilon}^{\ast}%
\widetilde{\mathbf{W}}_{\epsilon}^{\ast}(r))=\mathbf{\Lambda}_{\epsilon}%
^{\ast}\mathbf{P}_{b}(\widetilde{\mathbf{W}}_{\epsilon}^{\ast}%
(r))\mathbf{\Lambda}_{\epsilon}^{\ast\prime}$ where $\widetilde{\mathbf{W}%
}_{\epsilon}^{\ast}(r)$ has the same form as $\widetilde{\mathbf{W}}%
_{\epsilon}(r)$ with $\mathbf{W}_{\epsilon}^{\ast}(s)$ in place of
$\mathbf{W}_{\epsilon}(s)$. Therefore, the limit of $Wald_{IV}$ is given by%
\[
Wald_{IV}\Rightarrow\left(  \mathbf{12\Lambda}_{\epsilon}^{\ast}\int_{0}%
^{1}(s-\frac{1}{2})d\mathbf{W}_{\epsilon}(s)+\overline{\mathbf{\Delta}%
}\right)  ^{\prime}\left[  12\mathbf{\Lambda}_{\epsilon}^{\ast}\mathbf{P}%
_{b}(\widetilde{\mathbf{W}}_{\epsilon}^{\ast}(r))\mathbf{\Lambda}_{\epsilon
}^{\ast\prime}\right]  ^{-1}\left(  \mathbf{12\Lambda}_{\epsilon}^{\ast}%
\int_{0}^{1}(s-\frac{1}{2})d\mathbf{W}_{\epsilon}(s)+\overline{\mathbf{\Delta
}}\right)  ,
\]%
\[
=\left(  \sqrt{\mathbf{12}}\int_{0}^{1}(s-\frac{1}{2})d\mathbf{W}_{\epsilon
}(s)+\frac{1}{\sqrt{\mathbf{12}}}\mathbf{\Lambda}_{\epsilon}^{\ast-1}%
\overline{\mathbf{\Delta}}\right)  ^{\prime}\mathbf{P}_{b}%
(\widetilde{\mathbf{W}}_{\epsilon}^{\ast}(r))^{-1}\left(  \sqrt{\mathbf{12}%
}\int_{0}^{1}(s-\frac{1}{2})d\mathbf{W}_{\epsilon}(s)+\frac{1}{\sqrt
{\mathbf{12}}}\mathbf{\Lambda}_{\epsilon}^{\ast-1}\overline{\mathbf{\Delta}%
}\right)  ,
\]%
\[
=\left(  \mathbf{Z}_{\epsilon}^{\ast}+\frac{1}{\sqrt{\mathbf{12}}%
}\mathbf{\Lambda}_{\epsilon}^{\ast-1}\overline{\mathbf{\Delta}}\right)
^{\prime}\mathbf{P}_{b}(\widetilde{\mathbf{W}}_{\epsilon}^{\ast}%
(r))^{-1}\left(  \mathbf{Z}_{\epsilon}^{\ast}+\frac{1}{\sqrt{\mathbf{12}}%
}\mathbf{\Lambda}_{\epsilon}^{\ast-1}\overline{\mathbf{\Delta}}\right)  .
\]
It is not difficult to show that $\mathbf{Z}_{\epsilon}^{\ast}=\sqrt
{\mathbf{12}}\int_{0}^{1}(s-\frac{1}{2})d\mathbf{W}_{\epsilon}(s)$ is
distributed $N(\mathbf{0,I}_{q})$ and is independent of $\widetilde{\mathbf{W}%
}_{\epsilon}^{\ast}(r)$.

For the case of very small slopes, the arguments are similar as the large to
small slopes case and details are omitted.

For the case of zero trend slopes, different arguments are needed because
$\mathbf{\theta}$ is not defined and $\overline{\mathbf{\Delta}}=\mathbf{0}$.
Write the Wald statistic as%
\[
Wald_{IV}=\left(  \mathbf{R}\widehat{\mathbf{\theta}}-\mathbf{r}\right)
^{\prime}\left[  \mathbf{R}\left(  T^{-3}%
{\displaystyle\sum\limits_{t=1}^{T}}
(t-\overline{t})^{2}\right)  \left(  T^{-3/2}\widehat{\mathbf{D}}_{2}\right)
^{-1}\widehat{\mathbf{\Omega}}_{\mathbf{\epsilon}}\left(  T^{-3/2}%
\widehat{\mathbf{D}}_{2}\right)  ^{-1}\mathbf{R}^{\prime}\right]  ^{-1}\left(
\mathbf{R}\widehat{\mathbf{\theta}}-\mathbf{r}\right)  .
\]
Using Theorem 1 it follows that%
\begin{equation}
\mathbf{R}\widehat{\mathbf{\theta}}-\mathbf{r\Rightarrow RD}_{\mathbf{B}_{u2}%
}^{-1}\int_{0}^{1}\left(  s-\frac{1}{2}\right)  d\mathbf{B}_{\mathbf{u}%
1}(s)-\mathbf{r,} \label{rthetahat}%
\end{equation}
and using Lemma 4 it follows that%
\begin{equation}
T^{-3/2}\widehat{\mathbf{D}}_{2}\mathbf{\Rightarrow D}_{\mathbf{B}_{u2}}.
\label{dhat2}%
\end{equation}
Because the slopes are zero and $\mathbf{\theta}$ is not defined, it now
follows that%
\[
T^{-1/2}\sum_{t=1}^{[rT]}\widehat{\scalebox{1.5}{$\bm{\epsilon}$}}_{\theta
t}=T^{-1/2}%
{\displaystyle\sum\limits_{t=1}^{[rT]}}
\left(  \mathbf{y}_{1t}-\overline{\mathbf{y}}_{1}\right)  -\left(  T^{-1/2}%
{\displaystyle\sum\limits_{t=1}^{[rT]}}
\mathbf{H}_{t}\right)  \widehat{\mathbf{\theta}}=T^{-1/2}%
{\displaystyle\sum\limits_{t=1}^{[rT]}}
\left(  \mathbf{u}_{1t}-\overline{\mathbf{u}}_{1}\right)  -\left(  T^{-1/2}%
{\displaystyle\sum\limits_{t=1}^{[rT]}}
\mathbf{H}_{t}\right)  \widehat{\mathbf{\theta}},
\]
where the $i^{th}$ diagonal element of $\mathbf{H}_{t}$ is given by
$u_{2t}^{(i)}-\overline{u}_{2}^{(i)}$. Using Lemma 1 and Theorem 1, the
following holds%
\begin{equation}
T^{-1/2}\sum_{t=1}^{[rT]}\widehat{\scalebox{1.5}{$\bm{\epsilon}$}}_{\theta
t}\mathbf{\Rightarrow B}_{\mathbf{u}1}(s)-r\mathbf{B}_{\mathbf{u}%
1}(1)-\mathbf{D}_{\widehat{\mathbf{B}}_{u2}(r)}\mathbf{D}_{\mathbf{B}_{u2}%
}^{-1}\int_{0}^{1}\left(  s-\frac{1}{2}\right)  d\mathbf{B}_{\mathbf{u}%
1}(s)\equiv\mathbf{H}_{2}(r). \label{epshat}%
\end{equation}
Using (\ref{rthetahat}), (\ref{dhat2}), and (\ref{epshat}), the limit of
$Wald_{IV}$ is easily obtained for the case where slopes are zero.

\bigskip

\bigskip

\noindent\textbf{Proof of Theorem 3.} Begin with some algebraic calculations
for $g_{\widehat{\beta}}$:%
\begin{align*}
g_{\widehat{\beta}}  &  =\widehat{\beta}_{2}^{(2)}\widehat{\beta}_{1}%
^{(1)}-\widehat{\beta}_{2}^{(1)}\widehat{\beta}_{1}^{(2)}\\
&  =\left[  \beta_{2}^{(2)}+\left(  \widehat{\beta}_{2}^{(2)}-\beta_{2}%
^{(2)}\right)  \right]  \left[  \beta_{1}^{(1)}+\left(  \widehat{\beta}%
_{1}^{(1)}-\beta_{1}^{(1)}\right)  \right]  -\left[  \beta_{2}^{(1)}+\left(
\widehat{\beta}_{2}^{(1)}-\beta_{2}^{(1)}\right)  \right]  \left[  \beta
_{1}^{(2)}+\left(  \widehat{\beta}_{1}^{(2)}-\beta_{1}^{(2)}\right)  \right]
\\
&  =\beta_{2}^{(2)}\beta_{1}^{(1)}-\beta_{2}^{(1)}\beta_{1}^{(2)}%
+\mathbf{R}_{\beta}\left(  \widehat{\mathbf{\beta}}-\mathbf{\beta}\right)
+\left(  \widehat{\beta}_{2}^{(2)}-\beta_{2}^{(2)}\right)  \left(
\widehat{\beta}_{1}^{(1)}-\beta_{1}^{(1)}\right)  -\left(  \widehat{\beta}%
_{2}^{(1)}-\beta_{2}^{(1)}\right)  \left(  \widehat{\beta}_{1}^{(2)}-\beta
_{1}^{(2)}\right)  .
\end{align*}
Replacing $\beta_{2}^{(2)}\beta_{1}^{(1)}-\beta_{2}^{(1)}\beta_{1}^{(2)}$ with
$\beta_{2}^{(1)}\beta_{2}^{(2)}\overline{\mathbf{\Delta}}T^{-3/2+\kappa}$
using (\ref{h1prod}) gives%
\[
g_{\widehat{\beta}}=\beta_{2}^{(1)}\beta_{2}^{(2)}\overline{\mathbf{\Delta}%
}T^{-3/2+\kappa}+\mathbf{R}_{\beta}\left(  \widehat{\mathbf{\beta}%
}-\mathbf{\beta}\right)  +\left(  \widehat{\beta}_{2}^{(2)}-\beta_{2}%
^{(2)}\right)  \left(  \widehat{\beta}_{1}^{(1)}-\beta_{1}^{(1)}\right)
-\left(  \widehat{\beta}_{2}^{(1)}-\beta_{2}^{(1)}\right)  \left(
\widehat{\beta}_{1}^{(2)}-\beta_{1}^{(2)}\right)  .
\]
Replacing $\beta_{2}^{(1)}$ and $\beta_{2}^{(2)}$ with their local values
$T^{-\kappa}\overline{\beta}_{2}^{(1)}$ and $T^{-\kappa}\overline{\beta}%
_{2}^{(2)}$ gives%
\[
g_{\widehat{\beta}}=\overline{\beta}_{2}^{(1)}\overline{\beta}_{2}%
^{(2)}\overline{\mathbf{\Delta}}T^{-3/2-\kappa}+T^{-\kappa}\mathbf{R}%
_{\overline{\beta}}\left(  \widehat{\mathbf{\beta}}-\mathbf{\beta}\right)
+\left(  \widehat{\beta}_{2}^{(2)}-\beta_{2}^{(2)}\right)  \left(
\widehat{\beta}_{1}^{(1)}-\beta_{1}^{(1)}\right)  -\left(  \widehat{\beta}%
_{2}^{(1)}-\beta_{2}^{(1)}\right)  \left(  \widehat{\beta}_{1}^{(2)}-\beta
_{1}^{(2)}\right)  .
\]
Recall that%
\[
t_{prod}=\frac{g_{\widehat{\beta}}}{\sqrt{\mathbf{R}_{\widehat{\beta}%
}\widehat{\mathbf{\Omega}}_{\mathbf{u}}\mathbf{R}_{\widehat{\beta}}^{\prime
}\left(
{\textstyle\sum\nolimits_{t=1}^{T}}
(t-\overline{t})^{2}\right)  ^{-1}}}.
\]
Because $\widehat{\mathbf{\Omega}}_{\mathbf{u}}$ is exactly invariant to the
values of the trend slopes, Lemma 1 and standard fixed-$b$ arguments give%
\[
\widehat{\mathbf{\Omega}}_{\mathbf{u}}\mathbf{\Rightarrow
P(\widetilde{\mathbf{B}}_{u}\mathbf{(r)})=\Lambda}_{\mathbf{u}}%
\mathbf{P(\widetilde{\mathbf{W}}_{u}\mathbf{(r)})\Lambda}_{\mathbf{u}}%
^{\prime}.
\]
For the case of large to small trend slopes, it follows from Lemma 1 that%
\begin{align*}
T^{3/2+\kappa}g_{\widehat{\beta}}  &  =\overline{\beta}_{2}^{(1)}%
\overline{\beta}_{2}^{(2)}\overline{\mathbf{\Delta}}+\mathbf{R}_{\overline
{\beta}}T^{3/2}\left(  \widehat{\mathbf{\beta}}-\mathbf{\beta}\right)
+T^{3/2+\kappa}\left(  \widehat{\beta}_{2}^{(2)}-\beta_{2}^{(2)}\right)
\left(  \widehat{\beta}_{1}^{(1)}-\beta_{1}^{(1)}\right) \\
&  -T^{3/2+\kappa}\left(  \widehat{\beta}_{2}^{(1)}-\beta_{2}^{(1)}\right)
\left(  \widehat{\beta}_{1}^{(2)}-\beta_{1}^{(2)}\right)  ,\\
&  =\overline{\beta}_{2}^{(1)}\overline{\beta}_{2}^{(2)}\overline
{\mathbf{\Delta}}+\mathbf{R}_{\overline{\beta}}T^{3/2}\left(
\widehat{\mathbf{\beta}}-\mathbf{\beta}\right)  +o_{p}(1),\\
&  \mathbf{\Rightarrow}\text{ }\overline{\beta}_{2}^{(1)}\overline{\beta}%
_{2}^{(2)}\overline{\mathbf{\Delta}}+12\mathbf{R}_{\overline{\beta}%
}\mathbf{\Lambda}_{u}\int_{0}^{1}(s-\frac{1}{2})d\mathbf{W}_{u}(s).
\end{align*}
Note that the $T^{3/2+\kappa}\left(  \widehat{\beta}_{2}^{(i)}-\beta_{2}%
^{(i)}\right)  \left(  \widehat{\beta}_{1}^{(j)}-\beta_{1}^{(j)}\right)
=T^{-3/2+\kappa}T^{3/2}\left(  \widehat{\beta}_{2}^{(i)}-\beta_{2}%
^{(i)}\right)  T^{3/2}\left(  \widehat{\beta}_{1}^{(j)}-\beta_{1}%
^{(j)}\right)  $ terms are $o_{p}(1)$ because $T^{-3/2+\kappa}\rightarrow0$ as
$T\rightarrow\infty$ for $0\leq\kappa<\frac{3}{2}$.

Collecting limits gives%
\[
t_{prod}=\frac{g_{\widehat{\beta}}}{\sqrt{\mathbf{R}_{\widehat{\beta}%
}\widehat{\mathbf{\Omega}}_{\mathbf{u}}\mathbf{R}_{\widehat{\beta}}^{\prime
}\left(
{\textstyle\sum\nolimits_{t=1}^{T}}
(t-\overline{t})^{2}\right)  ^{-1}}}=\frac{T^{3/2+\kappa}g_{\widehat{\beta}}%
}{\sqrt{T^{\kappa}\mathbf{R}_{\widehat{\beta}}\widehat{\mathbf{\Omega}%
}_{\mathbf{u}}\mathbf{R}_{\widehat{\beta}}^{\prime}T^{\kappa}\left(  T^{-3}%
{\textstyle\sum\nolimits_{t=1}^{T}}
(t-\overline{t})^{2}\right)  ^{-1}}}%
\]%
\begin{align*}
&  \mathbf{\Rightarrow}\text{ }\frac{\overline{\beta}_{2}^{(1)}\overline
{\beta}_{2}^{(2)}\overline{\mathbf{\Delta}}+12\mathbf{R}_{\overline{\beta}%
}\mathbf{\Lambda}_{u}\int_{0}^{1}(s-\frac{1}{2})d\mathbf{W}_{u}(s)}%
{\sqrt{12\mathbf{R}_{\overline{\beta}}\mathbf{\Lambda}_{\mathbf{u}%
}\mathbf{P(\widetilde{\mathbf{W}}_{u}\mathbf{(r)})\Lambda}_{\mathbf{u}%
}^{\prime}\mathbf{R}_{\overline{\beta}}^{\prime}}}\equiv\frac{\overline{\beta
}_{2}^{(1)}\overline{\beta}_{2}^{(2)}\overline{\mathbf{\Delta}}%
+12\mathbf{\Lambda}_{u}^{\ast}\int_{0}^{1}(s-\frac{1}{2})dw_{u}^{\ast}%
(s))}{\sqrt{12\mathbf{\Lambda}_{u}^{\ast2}P_{b}(\widetilde{w}_{u}^{\ast}(r))}%
},\\
&  =\frac{\frac{\overline{\beta}_{2}^{(1)}\overline{\beta}_{2}^{(2)}%
\overline{\mathbf{\Delta}}}{\mathbf{\Lambda}_{u}^{\ast}\sqrt{12}}+\sqrt
{12}\int_{0}^{1}(s-\frac{1}{2})dw_{u}^{\ast}(s))}{\sqrt{P_{b}(\widetilde{w}%
_{u}^{\ast}(r))}}\equiv\frac{\frac{\overline{\beta}_{2}^{(1)}\overline{\beta
}_{2}^{(2)}\overline{\mathbf{\Delta}}}{\mathbf{\Lambda}_{u}^{\ast}\sqrt{12}%
}+Z_{u}^{\ast}}{\sqrt{P_{b}(\widetilde{w}_{u}^{\ast}(r))}}.
\end{align*}
using Lemmas 1 and 2. Replacing $\mathbf{R}_{\overline{\beta}}\mathbf{\Lambda
}_{u}\mathbf{W}_{u}(s)$ with $\mathbf{\Lambda}_{u}^{\ast}w_{u}^{\ast}(s)$
where $w_{u}^{\ast}(s)$ is a univariate Wiener process and $\mathbf{\Lambda
}_{u}^{\ast}=\sqrt{\mathbf{R}_{\overline{\beta}}\mathbf{\Lambda}_{\mathbf{u}%
}\mathbf{\Lambda}_{\mathbf{u}}^{\prime}\mathbf{R}_{\overline{\beta}}^{\prime}%
}$, gives the second to last expression of the limit. It is easy to show that
$Z_{u}^{\ast}=\sqrt{12}\int_{0}^{1}(s-\frac{1}{2})dw_{u}^{\ast}(s))$ is
distributed $N(0,1)$ and is independent of $\widetilde{w}_{u}^{\ast}(r)$. For
the case of very small trend slopes $(\kappa=\frac{3}{2})$, it follows that%
\begin{align*}
T^{3}g_{\widehat{\beta}}  &  =\overline{\beta}_{2}^{(1)}\overline{\beta}%
_{2}^{(2)}\overline{\mathbf{\Delta}}+\mathbf{R}_{\overline{\beta}}%
T^{3/2}\left(  \widehat{\mathbf{\beta}}-\mathbf{\beta}\right)  +T^{3/2}\left(
\widehat{\beta}_{2}^{(2)}-\beta_{2}^{(2)}\right)  T^{3/2}\left(
\widehat{\beta}_{1}^{(1)}-\beta_{1}^{(1)}\right) \\
&  -T^{3/2}\left(  \widehat{\beta}_{2}^{(1)}-\beta_{2}^{(1)}\right)
T^{3/2}\left(  \widehat{\beta}_{1}^{(2)}-\beta_{1}^{(2)}\right) \\
&  \mathbf{\Rightarrow}\overline{\beta}_{2}^{(1)}\overline{\beta}_{2}%
^{(2)}\overline{\mathbf{\Delta}}+12\mathbf{R}_{\overline{\beta}}%
\mathbf{\Lambda}_{u}\int_{0}^{1}(s-\frac{1}{2})d\mathbf{W}_{u}(s)+\Psi
_{2}^{(2)}\Psi_{1}^{(1)}-\Psi_{2}^{(1)}\Psi_{1}^{(2)}.
\end{align*}
Using Lemmas 1 and 3, the limit of $t_{prod}$ is given by%
\begin{align*}
t_{prod}  &  =\frac{g_{\widehat{\beta}}}{\sqrt{\mathbf{R}_{\widehat{\beta}%
}\widehat{\mathbf{\Omega}}_{\mathbf{u}}\mathbf{R}_{\widehat{\beta}}^{\prime
}\left(
{\textstyle\sum\nolimits_{t=1}^{T}}
(t-\overline{t})^{2}\right)  ^{-1}}}=\frac{T^{3}g_{\widehat{\beta}}}%
{\sqrt{T^{3/2}\mathbf{R}_{\widehat{\beta}}\widehat{\mathbf{\Omega}%
}_{\mathbf{u}}\mathbf{R}_{\widehat{\beta}}^{\prime}T^{3/2}\left(  T^{-3}%
{\textstyle\sum\nolimits_{t=1}^{T}}
(t-\overline{t})^{2}\right)  ^{-1}}}\\
&  \mathbf{\Rightarrow}\frac{\overline{\beta}_{2}^{(1)}\overline{\beta}%
_{2}^{(2)}\overline{\mathbf{\Delta}}+12\mathbf{R}_{\overline{\beta}%
}\mathbf{\Lambda}_{u}\int_{0}^{1}(s-\frac{1}{2})d\mathbf{W}_{u}(s)+\Psi
_{2}^{(2)}\Psi_{1}^{(1)}-\Psi_{2}^{(1)}\Psi_{1}^{(2)}}{\sqrt{12\left(
\mathbf{R}_{\overline{\beta}}+\left[  \Psi_{2}^{(2)},-\Psi_{2}^{(1)},-\Psi
_{1}^{(2)},\Psi_{1}^{(1)}\right]  \right)  \mathbf{P(\widetilde{\mathbf{B}%
}_{u}\mathbf{(r)})}\left(  \mathbf{R}_{\overline{\beta}}+\left[  \Psi
_{2}^{(2)},-\Psi_{2}^{(1)},-\Psi_{1}^{(2)},\Psi_{1}^{(1)}\right]  \right)
^{\prime}}},
\end{align*}
as required. The limit of $t_{prod}$ for the case of zero trend slopes is
obtained by replacing $\overline{\beta}_{2}^{(1)}$, $\overline{\beta}%
_{2}^{(2)}$ and $\mathbf{R}_{\overline{\beta}}$ with zeros.

\section{Acknowledgements}

I thank Ross McKitrick for detailed comments on earlier drafts of the paper. I
thank seminar and conference participants at U. Colorado Boulder, Indiana
University (conference in honor of Joon Park), Guelph University, and Michigan
State University for useful comments and suggestions. I thank John Christy for
providing the data used in the empirical application.

\bibliographystyle{kluwer}
\bibliography{trendratio}

\bigskip

\bigskip

\scalebox{1.0}{
\begin{tabular}
[c]{ccccccccccccc}\multicolumn{13}{c}{Table 1: Empirical Null Rejection Probabilities,
$H_{0}:\theta^{(1)}=\theta^{(2)}$, 5\% Nominal Level}\\
\multicolumn{13}{c}{For $i=1,2,$ $\beta_{1}^{(i)}=\theta^{(i)}\beta_{2}^{(i)}$. If $\beta_{2}^{(i)}=0$, then $\beta_{1}^{(i)}=0$. 10,000 Replications}\\\hline\hline
\multicolumn{5}{c}{iid Noise} & $b=A91$ &  & $b=0.25$ &  & $b=0.5$ &  &
$b=1.0$ & \\\hline
$T$ & $\beta_{2}^{(1)}$ & $\theta^{(1)}$ & $\beta_{2}^{(2)}$ & $\theta^{(2)}$
& ${\small t}_{IV}$ & $t_{prod}$ & ${\small t}_{IV}$ & $t_{prod}$ &
${\small t}_{IV}$ & $t_{prod}$ & ${\small t}_{IV}$ & $t_{prod}$\\\hline
50 & 10 & 1 & 10 & 1 & .047 & .051 & .050 & .050 & .050 & .050 & .052 & .052\\
& 2 & 1 & 2 & 1 & .047 & .051 & .050 & .050 & .051 & .050 & .052 & .052\\
& .2 & 1 & .2 & 1 & .045 & .051 & .049 & .049 & .049 & .050 & .052 & .053\\
& .05 & 1 & .05 & 1 & .017 & .045 & .032 & .047 & .039 & .050 & .045 & .049\\
& .025 & 1 & .025 & 1 & .004 & .033 & .017 & .041 & .026 & .045 & .032 &
.049\\
& .005 & 1 & .005 & 1 & .000 & .002 & .005 & .011 & .010 & .019 & .012 &
.023\\
& 0 & na & 0 & na & .000 & .001 & .004 & .009 & .008 & .017 & .012 &
.021\\\hline
100 & 10 & 1 & 10 & 1 & .052 & .055 & .052 & .052 & .050 & .050 & .053 &
.053\\
& 2 & 1 & 2 & 1 & .052 & .055 & .052 & .052 & .050 & .050 & .053 & .053\\
& .2 & 1 & .2 & 1 & .052 & .054 & .053 & .052 & .050 & .050 & .053 & .054\\
& .05 & 1 & .05 & 1 & .048 & .053 & .050 & .051 & .052 & .051 & .052 & .055\\
& .025 & 1 & .025 & 1 & .030 & .050 & .045 & .052 & .049 & .052 & .049 &
.054\\
& .005 & 1 & .005 & 1 & .002 & .012 & .010 & .031 & .015 & .035 & .020 &
.039\\
& 0 & na & 0 & na & .000 & .001 & .003 & .010 & .008 & .017 & .010 &
.021\\\hline
200 & 10 & 1 & 10 & 1 & .048 & .049 & .050 & .050 & .049 & .049 & .053 &
.053\\
& 2 & 1 & 2 & 1 & .048 & .049 & .050 & .050 & .049 & .049 & .053 & .053\\
& .2 & 1 & .2 & 1 & .048 & .049 & .050 & .049 & .049 & .050 & .053 & .052\\
& .05 & 1 & .05 & 1 & .047 & .049 & .049 & .049 & .050 & .051 & .053 & .053\\
& .025 & 1 & .025 & 1 & .045 & .048 & .048 & .050 & .050 & .051 & .053 &
.054\\
& .005 & 1 & .005 & 1 & .008 & .041 & .025 & .047 & .035 & .047 & .039 &
.051\\
& 0 & na & 0 & na & .000 & .000 & .003 & .009 & .007 & .016 & .009 &
.018\\\hline
&  &  &  &  &  &  &  &  &  &  &  & \\
\multicolumn{5}{c}{Serially Correlated Noise} & \multicolumn{2}{c}{$b=A91$} &
\multicolumn{2}{c}{$b=0.25$} & \multicolumn{2}{c}{$b=0.5$} &
\multicolumn{2}{c}{$b=1.0$}\\\hline
$T$ & $\beta_{2}^{(1)}$ & $\theta^{(1)}$ & $\beta_{2}^{(2)}$ & $\theta^{(2)}$
& ${\small t}_{IV}$ & $t_{prod}$ & ${\small t}_{IV}$ & $t_{prod}$ &
${\small t}_{IV}$ & $t_{prod}$ & ${\small t}_{IV}$ & $t_{prod}$\\\hline
50 & 10 & 1 & 10 & 1 & .123 & .160 & .087 & .086 & .062 & .063 & .066 & .066\\
& 2 & 1 & 2 & 1 & .122 & .160 & .087 & .087 & .064 & .063 & .066 & .066\\
& .2 & 1 & .2 & 1 & .093 & .162 & .065 & .087 & .055 & .063 & .058 & .066\\
& .05 & 1 & .05 & 1 & .049 & .152 & .033 & .066 & .031 & .055 & .035 & .055\\
& .025 & 1 & .025 & 1 & .026 & .078 & .020 & .038 & .022 & .041 & .026 &
.043\\
& .005 & 1 & .005 & 1 & .005 & .019 & .007 & .016 & .011 & .021 & .014 &
.023\\
& 0 & na & 0 & na & .005 & .016 & .006 & .014 & .010 & .019 & .015 &
.021\\\hline
100 & 10 & 1 & 10 & 1 & .100 & .132 & .062 & .062 & .056 & .055 & .060 &
.060\\
& 2 & 1 & 2 & 1 & .101 & .132 & .062 & .062 & .056 & .055 & .060 & .060\\
& .2 & 1 & .2 & 1 & .096 & .131 & .058 & .062 & .055 & .056 & .058 & .060\\
& .05 & 1 & .05 & 1 & .073 & .142 & .040 & .061 & .038 & .057 & .040 & .061\\
& .025 & 1 & .025 & 1 & .056 & .147 & .034 & .056 & .035 & .055 & .039 &
.060\\
& .005 & 1 & .005 & 1 & .010 & .028 & .011 & .018 & .015 & .024 & .019 &
.028\\
& 0 & na & 0 & na & .003 & .009 & .007 & .012 & .010 & .019 & .011 &
.020\\\hline
200 & 10 & 1 & 10 & 1 & .085 & .114 & .054 & .054 & .052 & .052 & .056 &
.056\\
& 2 & 1 & 2 & 1 & .085 & .114 & .054 & .054 & .053 & .052 & .056 & .056\\
& .2 & 1 & .2 & 1 & .085 & .114 & .054 & .053 & .053 & .053 & .057 & .056\\
& .05 & 1 & .05 & 1 & .079 & .117 & .050 & .053 & .049 & .054 & .042 & .056\\
& .025 & 1 & .025 & 1 & .081 & .122 & .046 & .052 & .044 & .052 & .046 &
.053\\
& .005 & 1 & .005 & 1 & .045 & .107 & .027 & .033 & .031 & .039 & .034 &
.041\\
& 0 & na & 0 & na & .002 & .005 & .006 & .010 & .009 & .017 & .010 &
.019\\\hline\hline
\end{tabular}
}\newpage

\scalebox{1.0}{
\begin{tabular}
[c]{cccccccccccccc}\multicolumn{14}{c}{Table 2: Finite Sample Power, 5\% Nominal Level, $T=100$,
Two-sided Tests.}\\
\multicolumn{14}{c}{10,000 Replications, $H_{0}:\theta^{(1)}=\theta^{(2)},$
$H_{1}:\theta^{(1)}\neq\theta^{(2)}$, $\beta_{1}^{(i)}=\theta^{(i)}\beta
_{2}^{(i)}$ for $i=1,2.$}\\\hline\hline
\multicolumn{5}{c}{Serially Correlated Noise} & \multicolumn{2}{c}{$b=A91$} &
\multicolumn{2}{c}{$b=0.25$} & \multicolumn{2}{c}{$b=0.5$} &
\multicolumn{2}{c}{$b=1.0$} & \\\cline{1-5}\cline{1-7}\cline{8-14}
& $\beta_{2}^{(1)}$ & $\theta^{(1)}$ & $\beta_{2}^{(2)}$ & $\theta^{(2)}$ &
${\small t}_{IV}$ & $t_{prod}$ & ${\small t}_{IV}$ & $t_{prod}$ &
${\small t}_{IV}$ & $t_{prod}$ & ${\small t}_{IV}$ & $t_{prod}$ & \\\hline
& 10 & 1.0 & 10 & .993 & .727 & .800 & .586 & .583 & .414 & .412 & .353 &
.351 & \\
&  &  &  & .995 & .482 & .558 & .344 & .341 & .243 & .242 & .224 & .224 & \\
&  &  &  & .998 & .221 & .275 & .143 & .143 & .111 & .110 & .110 & .110 & \\
&  &  &  & \textbf{1} & \textbf{.100} & \textbf{.132} & \textbf{.062} &
\textbf{.062} & \textbf{.056} & \textbf{.055} & \textbf{.060} & \textbf{.060}
& \\
&  &  &  & 1.002 & .205 & .258 & .134 & .136 & .102 & .102 & .103 & .103 & \\
&  &  &  & 1.005 & .466 & .547 & .330 & .333 & .234 & .236 & .214 & .215 & \\
&  &  &  & 1.008 & .712 & .797 & .568 & .571 & .395 & .397 & .338 & .341 &
\\\hline
& 2 & 1.000 & 2 & .960 & .780 & .834 & .650 & .634 & .467 & .448 & .391 &
.378 & \\
&  &  &  & .973 & .534 & .597 & .389 & .376 & .277 & .368 & .251 & .243 & \\
&  &  &  & .987 & .243 & .291 & .157 & .151 & .121 & .117 & .119 & .114 & \\
&  &  &  & \textbf{1} & \textbf{.101} & \textbf{.132} & \textbf{.062} &
\textbf{.062} & \textbf{.056} & \textbf{.055} & \textbf{.060} & \textbf{.060}
& \\
&  &  &  & 1.013 & .212 & .274 & .139 & .144 & .106 & .109 & .106 & .108 & \\
&  &  &  & 1.027 & .485 & .581 & .348 & .363 & .247 & .257 & .224 & .231 & \\
&  &  &  & 1.040 & .731 & .829 & .595 & .612 & .413 & .429 & .353 & .366 &
\\\hline
& .2 & 1.0 & .2 & .600 & .936 & .858 & .832 & .662 & .651 & .471 & .547 &
.398 & \\
&  &  &  & .733 & .698 & .625 & .523 & .392 & .382 & .273 & .327 & .247 & \\
&  &  &  & .867 & .322 & .306 & .205 & .153 & .156 & .118 & .145 & .115 & \\
&  &  &  & \textbf{1} & \textbf{.096} & \textbf{.131} & \textbf{.058} &
\textbf{.062} & \textbf{.055} & \textbf{.056} & \textbf{.058} & \textbf{.060}
& \\
&  &  &  & 1.133 & .132 & .260 & .087 & .139 & .075 & .105 & .080 & .108 & \\
&  &  &  & 1.267 & .314 & .550 & .217 & .340 & .159 & .241 & .158 & .221 & \\
&  &  &  & 1.400 & .499 & .792 & .378 & .567 & .264 & .394 & .243 & .342 &
\\\hline
& .05 & 1.0 & .05 & -1.00 & .725 & .925 & .582 & .728 & .425 & .521 & .366 &
.437 & \\
&  &  &  & -.333 & .821 & .785 & .661 & .514 & .484 & .359 & .410 & .317 & \\
&  &  &  & .333 & .658 & .430 & .422 & .198 & .300 & .150 & .260 & .146 & \\
&  &  &  & \textbf{1} & \textbf{.073} & \textbf{.142} & \textbf{.040} &
\textbf{.061} & \textbf{.038} & \textbf{.057} & \textbf{.040} & \textbf{.061}
& \\
&  &  &  & 1.667 & .022 & .275 & .018 & .152 & .026 & .118 & .030 & .114 & \\
&  &  &  & 2.333 & .074 & .527 & .054 & .324 & .058 & .234 & .063 & .214 & \\
&  &  &  & 3.000 & .139 & .696 & .095 & .463 & .086 & .331 & .092 & .295 &
\\\hline
& .025 & 1.0 & .025 & -14 & .229 & .637 & .165 & .412 & .127 & .297 & .121 &
.267 & \\
&  &  &  & -9 & .243 & .650 & .174 & .420 & .133 & .302 & .125 & .269 & \\
&  &  &  & -4 & .282 & .671 & .193 & .415 & .147 & .295 & .141 & .258 & \\
&  &  &  & \textbf{1} & \textbf{.056} & \textbf{.147} & \textbf{.034} &
\textbf{.056} & \textbf{.035} & \textbf{.055} & \textbf{.039} & \textbf{.060}
& \\
&  &  &  & 6 & .075 & .447 & .053 & .275 & .051 & .206 & .056 & .192 & \\
&  &  &  & 11 & .120 & .523 & .083 & .336 & .070 & .244 & .074 & .224 & \\
&  &  &  & 16 & .143 & .547 & .098 & .355 & .080 & .258 & .083 & .232 &
\\\hline
& .005 & 2 & .005 & -49 & .077 & .098 & .055 & .067 & .046 & .062 & .051 &
.063 & \\
&  &  &  & -32.33 & .073 & .099 & .049 & .066 & .044 & .061 & .047 & .064 & \\
&  &  &  & -15.67 & .063 & .100 & .040 & .065 & .038 & .060 & .043 & .062 & \\
&  &  &  & \textbf{1} & \textbf{.010} & \textbf{.028} & \textbf{.011} &
\textbf{.018} & \textbf{.015} & \textbf{.024} & \textbf{.019} & \textbf{.028}
& \\
&  &  &  & 17.67 & .047 & .094 & .036 & .061 & .036 & .059 & .037 & .061 & \\
&  &  &  & 34.33 & .062 & .095 & .047 & .065 & .041 & .061 & .046 & .062 & \\
&  &  &  & 51 & .070 & .095 & .051 & .065 & .045 & .060 & .049 & .063 &
\\\hline\hline
\end{tabular}
}\newpage

\scalebox{1.0}{
\begin{tabular}
[c]{ccccccc}\multicolumn{7}{c}{Table 3: Estimated Linear Trend Slopes in Degrees Celsius
per Decade}\\
\multicolumn{7}{c}{Observed Temperatures by hPa Level, Annual Data 1958-2024,
$T=67$}\\\hline\hline
hPa &  & RICH & RAOB & RATP & ERA5 & JRA3Q\\\hline
&  &  &  &  &  & \\
SFC & $\widehat{\beta}$ & .142 & .143 & .143 & .132 & .145\\
& CI & (.117, .167) & (.118, .167) & (.118, .167) & (.103, .160) & (.121,
.168)\\\hline
&  &  &  &  &  & \\
850 & $\widehat{\beta}$ & .117 & .081 & .121 & .115 & .137\\
& CI & (.086, .148) & (.050, .112) & (.086, .156) & (.083, .148) & (.111,
.164)\\\hline
&  &  &  &  &  & \\
700 & $\widehat{\beta}$ & .195 & .152 & .134 & .169 & .166\\
& CI & (.158, .232) & (.117, .187) & (.100, .168) & (.133, .205) & (.132,
.199)\\\hline
&  &  &  &  &  & \\
500 & $\widehat{\beta}$ & .204 & .147 & .158 & .147 & .168\\
& CI & (.167, .240) & (.108, .185) & (.121, .186) & (.111, .182) & (.133,
.202)\\\hline
&  &  &  &  &  & \\
400 & $\widehat{\beta}$ & .220 & .179 & .200 & .192 & .189\\
& CI & (.177, .264) & (.135, .224) & (.156, .244) & (.150, .235) & (.145,
.234)\\\hline
&  &  &  &  &  & \\
300 & $\widehat{\beta}$ & .230 & .193 & .231 & .221 & .259\\
& CI & (.178, .282) & (.139, .247) & (.183, .280) & (.172, .270) & (.209,
.309)\\\hline
&  &  &  &  &  & \\
200 & $\widehat{\beta}$ & .225 & .197 & .159 & .226 & .266\\
& CI & (.169, .280) & (.145, .249) & (.109, .209) & (.176, .276) & (.219,
.313)\\\hline
&  &  &  &  &  & \\
150 & $\widehat{\beta}$ & .196 & .107 & .060 & .122 & .151\\
& CI & (.139, .252) & (.048, .166) & (.017, .104) & (.066, .178) & (.107,
.195)\\\hline
&  &  &  &  &  & \\
100 & $\widehat{\beta}$ & -.041 & -.190 & -.157 & -.176 & -.247\\
& CI & (-.126, .045) & (-.358, -.022) & (-.221, -.093) & (-.343, -.008) &
(-.358, -.135)\\\hline
&  &  &  &  &  & \\
70 & $\widehat{\beta}$ & -.310 & -.374 & -.442 & -.321 & -.396\\
& CI & (-.471, -.149) & (-515, -.233) & (-585, -.298) & (-.462, -.180) &
(-.539, -.253)\\\hline
&  &  &  &  &  & \\
50 & $\widehat{\beta}$ & -.468 & -.391 & -.439 & -.339 & -.329\\
& CI & (-.609, -.328) & (-.515, -.267) & (-.538, -.341) & (-.440, -.239) &
(-.422, -.236)\\\hline
&  &  &  &  &  & \\
30 & $\widehat{\beta}$ & -.409 & -.382 & -.514 & -.353 & -.422\\
& CI & (-.488, -.329) & (-.461, -.303) & (-.594, -.434) & (-.435, -.271) &
(-.487, -.358)\\\hline
&  &  &  &  &  & \\
20 & $\widehat{\beta}$ & -.374 & -.397 & NA & -.361 & -.568\\
& CI & (-.450, -.299) & (-.469, -.325) &  & (-.428, -.293) & (-637,
-.500)\\\hline\hline
\multicolumn{7}{l}{Note: 95\% fixed-$b$ confidence intervals in brackets using
Daniell $k(x)$ function with Andrews}\\
\multicolumn{7}{l}{(1991) data dependent bandwidth. Data dependent bandwidth
sample size ratios}\\
\multicolumn{7}{l}{ranged from .024 to .136.}\end{tabular}
}\newpage\scalebox{1.0}{
\begin{tabular}
[c]{ccccccc}\multicolumn{7}{c}{Table 4: IV Estimated Trend Slope Ratios Relative to
Surface by hPa Levels}\\
\multicolumn{7}{c}{Annual Data 1958-2024, $T=67$}\\\hline\hline
hPa/SFC &  & RICH & RAOB & RATP & ERA5 & JRA3Q\\\hline
&  &  &  &  &  & \\
850/SFC & $\widehat{\theta}$ & .823 & .568 & .849 & .875 & .950\\
& CI & (.677, .954) & (.409, .692) & (.703, .971) & (.750, .985) & (.896,
.997)\\\hline
&  &  &  &  &  & \\
700/SFC & $\widehat{\theta}$ & 1.37 & 1.07 & .941 & 1.28 & 1.15\\
& CI & (1.26, 1.49) & (.947, 1.17) & (.821, 1.05) & (1.16, 1.42) & (1.05,
1.24)\\\hline
&  &  &  &  &  & \\
500/SFC & $\widehat{\theta}$ & 1.43 & 1.03 & 1.11 & 1.12 & 1.16\\
& CI & (1.41, 1.69) & (.909, 1.12) & (.973, 1.23) & (.999, 1.23) & (1.05,
1.27)\\\hline
&  &  &  &  &  & \\
400/SFC & $\widehat{\theta}$ & 1.55 & 1.260 & 1.40 & 1.46 & 1.31\\
& CI & (1.45, 1.77) & (1.11, 1.38) & (1.25, 1.55) & (1.36, 1.56) & (1.18,
1.43)\\\hline
&  &  &  &  &  & \\
300/SFC & $\widehat{\theta}$ & 1.62 & 1.36 & 1.62 & 1.68 & 1.79\\
& CI & (1.34, 1.81) & (1.140, 1.53) & (1.46, 1.78) & (1.52, 1.84) & (1.64,
1.93)\\\hline
&  &  &  &  &  & \\
200/SFC & $\widehat{\beta}$ & 1.58 & 1.38 & 1.38 & 1.72 & 1.84\\
& CI & (1.08, 1.66) & (1.19, 1.54) & (1.19, 1.54) & (1.54, 1.91) & (1.71,
1.98)\\\hline
&  &  &  &  &  & \\
150/SFC & $\widehat{\theta}$ & 1.38 & .751 & .424 & .924 & 1.04\\
& CI & (1.08, 1.66) & (.369, 1.10) & (.133, .679) & (.593, 1.19 & (.830,
1.23)\\\hline
&  &  &  &  &  & \\
100/SFC & $\widehat{\theta}$ & -.285 & -1.33 & -1.10 & -1.33 & -1.71\\
& CI & (-.968, .300) & (-2.92, -.138) & (-1.66, -.633) & (-3.69, -.048) &
(-2.82, -.849)\\\hline
&  &  &  &  &  & \\
70/SFC & $\widehat{\theta}$ & -2.18 & -2.62 & -3.10 & -2.44 & -2.74\\
& CI & (-3.28, .-1.10) & (-3.79, .-1.61) & (-4.26, .-2.09) & (-3.83, .-1.32) &
(-3.93, .-1.72)\\\hline
&  &  &  &  &  & \\
50/SFC & $\widehat{\theta}$ & -3.29 & -2.75 & -3.08 & -2.58 & -2.27\\
& CI & (-4.65, -2.20) & (-3.85, -1.82) & (-4.03, -2.31) & (-4.61, -1.76) &
(-2.85, -1.17)\\\hline
&  &  &  &  &  & \\
30/SFC & $\widehat{\theta}$ & -2.87 & -2.68 & -3.61 & -2.69 & -2.92\\
& CI & (-3.78, -2.18) & (-3.60, -2.00) & (-4.59, -2.86) & (-3.71, -1.91) &
(-3.61, -2.37)\\\hline
&  &  &  &  &  & \\
20/SFC & $\widehat{\beta}$ & -2.63 & -2.79 & NA & -2.74 & -3.93\\
& CI & (-3.52, -1.95) & (-3.70, -2.10) &  & (-3.61, -2.08) & (-4.81,
-3.26)\\\hline\hline
\multicolumn{7}{l}{Note: 95\% fixed-$b$ confidence intervals in brackets using
Daniell $k(x)$ function with Andrews}\\
\multicolumn{7}{l}{(1991) data dependent bandwidth. Data dependent bandwidth
sample size ratios}\\
\multicolumn{7}{l}{ranged from .018 to .135.}\end{tabular}
}\newpage\scalebox{0.9}{
\begin{tabular}
[c]{lcccccc}\multicolumn{7}{c}{Table 5: Differences in IV Estimated Trend Slope Ratios and
Product Differences by hPa Levels}\\
\multicolumn{7}{c}{Annual Data 1958-2024, $T=67$}\\\hline\hline
& \multicolumn{2}{c}{850/SFC} & \multicolumn{2}{c}{700/SFC} &
\multicolumn{2}{c}{500/SFC}\\\hline
& $\widehat{\Delta}_{\theta}$ & $g_{\widehat{\beta}}$ & $\widehat{\Delta
}_{\theta}$ & $g_{\widehat{\beta}}$ & $\widehat{\Delta}_{\theta}$ &
$g_{\widehat{\beta}}$\\\hline
RICH,RAOB & .255$^{\ast}$ & .517$^{\ast}$ & .308$^{\ast}$ & .624$^{\ast}$ &
.405$^{\ast}$ & .821$^{\ast}$\\
& (.177, .333) & (.342, .692) & (.254, .362) & (.480, .769) & (.280, .531) &
(.527, 1.12)\\\hline
RICH,RATP & -.026 & -.052 & .433$^{\ast}$ & .877$^{\ast}$ & .322$^{\ast}$ &
.653$^{\ast}$\\
& (-.179, .127) & (-.406, .301) & (.341, .525) & (596., 1.16) & (.210, .434) &
(.411, .895)\\\hline
RICH, ERA5 & -.052 & -.097 & .090 & .169 & .317$^{\ast}$ & .593$^{\ast}$\\
& (-.155, .051) & (-.282, .087) & (-.032, .213) & (-.153, .490) & (.184,
.450) & (.160, 1.03)\\\hline
RICH, JRA3Q & -.127$^{\ast}$ & -.261$^{\ast}$ & .226$^{\ast}$ & .464$^{\ast}$
& .273$^{\ast}$ & .561$^{\ast}$\\
& (-.235, -.019) & (-.491, -.032) & (.111, .340) & (.173, .754) & (.143,
.403) & (.221, .900)\\\hline
RAOB, RATP & -.281$^{\ast}$ & -.570$^{\ast}$ & .125$^{\ast}$ & .254$^{\ast}$ &
-.083$^{\ast}$ & -.169$^{\ast}$\\
& (-.418, -.144) & (-.908, -.232) & (.045, .205) & (022., .485) & (-.163,
-.004) & (-.323, -.015)\\\hline
RAOB,ERA5 & -.307$^{\ast}$ & -.576$^{\ast}$ & -.218$^{\ast}$ & -.408$^{\ast}$
& -.089 & -.166\\
& (-.438, -.176) & (-.576, -.792) & (-.345, -.090) & (-.642, -.175) & (-.185,
.008) & (-.381, .049)\\\hline
RAOB, JRA3Q & -.382$^{\ast}$ & -.787$^{\ast}$ & -.082 & -.170 & -.133$^{\ast}$
& -.273$^{\ast}$\\
& (-.484, -.281) & (-1.01, -.562) & (-.180, .019) & (-.407, .067) & (-.232,
-.033) & (-.524, -.022)\\\hline
RATP, ERA5 & -.026 & -.049 & -.343$^{\ast}$ & -.643$^{\ast}$ & -.006 & -.010\\
& (-.206, .153) & (-.423, .324) & (-.469, -.217) & (-.908, -.377) & (-.097,
.086) & (-.313, .292)\\\hline
RATP, JRA3Q & -.101 & -.209 & -.207$^{\ast}$ & -.427$^{\ast}$ & -.049 &
-.102\\
& (-.227, .024) & (-.453, .035) & (-.326, -.088) & (-.724, -.130) & (-.148,
.049) & (-.433, .229)\\\hline
ERA5, JRA3Q & -.075 & -.143 & .135$^{\ast}$ & .258 & -.044 & -.084\\
& (-.201, .051) & (-1.15, .866) & (.028, .243) & (-.036, .551) & (-.110,
.022) & (-.383, .216)\\\hline\hline
&  &  &  &  &  & \\
& \multicolumn{2}{c}{400/SFC} & \multicolumn{2}{c}{300/SFC} &
\multicolumn{2}{c}{200/SFC}\\\hline
& $\widehat{\Delta}_{\theta}$ & $g_{\widehat{\beta}}$ & $\widehat{\Delta
}_{\theta}$ & $g_{\widehat{\beta}}$ & $\widehat{\Delta}_{\theta}$ &
$g_{\widehat{\beta}}$\\\hline
RICH,RAOB & .291$^{\ast}$ & .589$^{\ast}$ & .265$^{\ast}$ & .536$^{\ast}$ &
.098$^{\ast}$ & .401\\
& (.176, .405) & (.315, .963) & (.146, .383) & (.265, .807) & (.037, .359) &
(-.195, .997)\\\hline
RICH,RATP & .145$^{\ast}$ & .295 & -.003 & -.006 & .464$^{\ast}$ & .940\\
& (.039, .252) & (-.006, .596) & (-.126, .120) & (-.343, .330) & (.205,
.723) & (-.352, 2.23)\\\hline
RICH, ERA5 & .087 & .163 & -.058 & -.108 & -.139 & -.259\\
& (-.036, .210) & (-.196, .522) & (-.247, .132) & (-.716, .499) & (-.437,
.160) & (-1.39, .871)\\\hline
RICH, JRA3Q & .239$^{\ast}$ & .491$^{\ast}$ & -.172$^{\ast}$ & -.353 &
-.262$^{\ast}$ & -.538$^{\ast}$\\
& (.095, .382) & (.068, .913) & (-.335, -.008) & (-.826, .119) & (-.501,
-.023) & (-1.37, .294)\\\hline
RAOB, RATP & -.145$^{\ast}$ & -.295$^{\ast}$ & -.268$^{\ast}$ & -.544$^{\ast}$
& .266$^{\ast}$ & .540$^{\ast}$\\
& (-.251, -.039) & (-.493, -.097) & (-.404, -.131) & (-.792, -.295) & (.099,
.433) & (-.020, 1.10)\\\hline
RAOB,ERA5 & -.203$^{\ast}$ & -.381$^{\ast}$ & -.322$^{\ast}$ & -.605$^{\ast}$
& -.336$^{\ast}$ & -.631$^{\ast}$\\
& (-.314, -.093) & (-.572, -.191) & (-.485, -.159) & (-.890, -.319) & (-.561,
-.112) & (-1.05, -.217)\\\hline
RAOB, JRA3Q & -.052 & -.107 & -.436$^{\ast}$ & -.899$^{\ast}$ & -.460$^{\ast}$
& -.947$^{\ast}$\\
& (-.154, .050) & (-.381, .167) & (-.583, -.290) & (-1.18, -.618) & (-.632,
-.288) & (-1.31, -.581)\\\hline
RATP, ERA5 & -.058 & -.109 & -.055 & -.103 & -.602$^{\ast}$ & -1.13$^{\ast}
$\\
& (-.156, .039) & (-.375, .157) & (-.186, .077) & (-.470, .264) & (-.791,
-.414) & (-1.52, -.739)\\\hline
RAPT, JRA3Q & .093 & .192 & -.169$^{\ast}$ & -.348 & -.726$^{\ast}$ &
-1.50$^{\ast}$\\
& (-.021,.207 ) & (-.148, .533) & (-.297, -.040) & (-.715, .019) & (-.896,
-.556) & (-2.09, -.896)\\\hline
ERA5, JRA3Q & .152$^{\ast}$ & .288$^{\ast}$ & -.114 & -.217 & -.123 & -.234\\
& (.091, .212) & (.183, .393) & (-.239, .011) & (-1.05, .618) & (-.317,
.071) & (-1.67, 1.21)\\\hline\hline
\multicolumn{7}{l}{Note: 95\% fixed-$b$ confidence intervals in brackets using
Daniell $k(x)$ function with Andrews (1991) data}\\
\multicolumn{7}{l}{dependent bandwidth. Data dependent bandwidth sample size
ratios ranged from .011 to .405. The values of $g_{\widehat{\beta}}$}\\
\multicolumn{7}{l}{and its confidence intervals are scaled by $10^{4}$ for
presentation purposes.}\end{tabular}
}\newpage\scalebox{0.9}{
\begin{tabular}
[c]{lcccccc}\multicolumn{7}{c}{Table 5 (continued): Differences in IV Estimated Trend
Slope Ratios and Product Differences by hPa Levels}\\
\multicolumn{7}{c}{Annual Data 1958-2024, $T=67$}\\\hline\hline
& \multicolumn{2}{c}{150/SFC} & \multicolumn{2}{c}{100/SFC} &
\multicolumn{2}{c}{70/SFC}\\\hline
& $\widehat{\Delta}_{\theta}$ & $g_{\widehat{\beta}}$ & $\widehat{\Delta
}_{\theta}$ & $g_{\widehat{\beta}}$ & $\widehat{\Delta}_{\theta}$ &
$g_{\widehat{\beta}}$\\\hline
RICH,RAOB & .624$^{\ast}$ & 1.26$^{\ast}$ & 1.05$^{\ast}$ & 2.12$^{\ast}$ &
.442 & .895\\
& (.352, .896) & (.228, 2.30) & (.305, 1.80) & (.493, 3.75) & (-.506, 1.39) &
(-.805, 2.60)\\\hline
RICH,RATP & .951$^{\ast}$ & 1.93$^{\ast}$ & .818$^{\ast}$ & 1.66$^{\ast}$ &
.920$^{\ast}$ & 1.86$^{\ast}$\\
& (.597, 1.31) & (.117, 3.74) & (.292, 1.35) & (.095, 3.22) & (.068, 1.77) &
(.267, 3.46)\\\hline
RICH, ERA5 & .451$^{\ast}$ & .843$^{\ast}$ & 1.05$^{\ast}$ & 1.96$^{\ast} $ &
.260 & .486\\
& (.185, .716) & (.139, 1.55) & (.095, 2.00) & (1.06, 2.86) & (-.830, 1.35) &
(-1.15, 2.13)\\\hline
RICH, JRA3Q & .331$^{\ast}$ & .681 & 1.42$^{\ast}$ & 2.92$^{\ast}$ & .559 &
1.149\\
& (.057, .605) & (-.298, 1.66) & (.868, 1.97) & (2.08, 3.76) & (-.500, 1.62) &
(-.627, 2.92)\\\hline
RAOB, RATP & .327 & .664 & -.229 & -.464 & .478$^{\ast}$ & .971$^{\ast}$\\
& (-.132,.786 ) & (-1.71, 3.04) & (-1.43, .970) & (-3.89, 2.96) & (.045,
.912) & (.084, 1.86)\\\hline
RAOB,ERA5 & -.173$^{\ast}$ & -.324$^{\ast}$ & .002 & .004 & -.182 & -.342\\
& (-.332, -.014) & (-.597, -.052) & (-.309, .314) & (-.448, .455) & (-.510,
.145) & (-.762, .079)\\\hline
RAOB, JRA3Q & -.193 & -.603 & .373 & .769 & .117 & .241\\
& (-.716, .131) & (-2.30, 1.09) & (-.590, 1.34) & (-1.94, 3.48) & (-.182,
.416) & (-.152, .634)\\\hline
RATP, ERA5 & -.500$^{\ast}$ & -.938 & .231 & .433 & -.661$^{\ast}$ & -1.24\\
& (-.875, -.125) & (-2.60, 0.721) & (-1.21, 1.68) & (-2.69, 3.55) & (-1.28,
-.043) & (-2.83, .357)\\\hline
RATP, JRA3Q & -.620$^{\ast}$ & -1.28$^{\ast}$ & .602 & 1.24 & -.361 & -.744\\
& (-.838, -.401) & (-2.13, -.424) & (-.128, 1.33) & (-.372, 2.85) & (-.915,
.192) & (-2.03, .538)\\\hline
ERA5, JRA3Q & -.120 & -.228 & .371 & .706 & .299 & .569\\
& (-.464, .225) & (-2.51., 2.05) & (-.750, 1.49) & (-6.23, 7.64) & (-.048,
.647) & (-.139, 1.28)\\\hline\hline
&  &  &  &  &  & \\
& \multicolumn{2}{c}{50/SFC} & \multicolumn{2}{c}{30/SFC} &
\multicolumn{2}{c}{20/SFC}\\\hline
& $\widehat{\Delta}_{\theta}$ & $g_{\widehat{\beta}}$ & $\widehat{\Delta
}_{\theta}$ & $g_{\widehat{\beta}}$ & $\widehat{\Delta}_{\theta}$ &
$g_{\widehat{\beta}}$\\\hline
RICH,RAOB & -.549$^{\ast}$ & -1.11$^{\ast}$ & -.196 & -.396 & .154 & .313\\
& (-.904, -.193) & (-1.73, -.494) & (-.392, .001) & (-1.84, 1.05) & (-.046,
.355) & (-1.88, 2.51)\\\hline
RICH,RATP & -.212 & -.431 & .733$^{\ast}$ & 1.49$^{\ast}$ & NA & NA\\
& (-.719, .294) & (-1.30, .434) & (.311, 1.16) & (.199, 2.77) &  & \\\hline
RICH, ERA5 & -.716$^{\ast}$ & -1.34$^{\ast}$ & -.189 & -.353 & .107 & .200\\
& (-1.15, -.286) & (-1.94, -.743) & (-.488, .111) & (-.943, .237) & (-.299,
.513) & (-.607, 1.01)\\\hline
RICH, JRA3Q & -1.02$^{\ast}$ & -2.10$^{\ast}$ & .049 & .101 & 1.298$^{\ast}$ &
2.67$^{\ast}$\\
& (-1.79, -.249) & (-2.97, -1.22) & (-.474, .572) & (-1.56, 1.76.) & (.630,
1.97) & (.676, 4.66)\\\hline
RAOB, RATP & .336 & .683 & .929$^{\ast}$ & 1.89$^{\ast}$ & NA & NA\\
& (-.023, .695) & (-.138, 1.50) & (.506, 1.35) & (.478, 3.30) &  & \\\hline
RAOB,ERA5 & -.167 & -.313 & .007 & .013 & -.048 & -.089\\
& (-.448, .114) & (-.732, .106) & (-.234, .248) & (-.430, .456) & (-.397,
.302) & (-.745, .566)\\\hline
RAOB, JRA3Q & -.472 & -.972$^{\ast}$ & .245 & .504 & 1.14$^{\ast}$ &
2.36$^{\ast}$\\
& (-1.042, .098) & (-1.77, -.178) & (-.239, .729) & (-1.09, 2.10) & (560,
1.72) & (.556, 4.15)\\\hline
RATP, ERA5 & -.503$^{\ast}$ & -.943 & -.922$^{\ast}$ & -1.73$^{\ast}$ & NA &
NA\\
& (-.970, -.036) & (-2.35, .467) & (-1.27, -.572) & (-3.22, -.235) &  &
\\\hline
RATP, JRA3Q & -.808$^{\ast}$ & -1.66$^{\ast}$ & -.684$^{\ast}$ & -1.41 & NA &
NA\\
& (-1.23, -.384) & (-2.40, -.926) & (-1.16, -.213) & (-3.07, .250) &  &
\\\hline
ERA5, JRA3Q & -.305 & -.580 & .238 & .452 & 1.19$^{\ast}$ & 2.27$^{\ast}$\\
& (-.874, .265) & (-2.06, .905) & (-.227, .703) & (-1.54, 2.44) & (.674,
1.71) & (.291, 4.24)\\\hline\hline
\multicolumn{7}{l}{Note: 95\% fixed-$b$ confidence intervals in brackets using
Daniell $k(x)$ function with Andrews (1991) data}\\
\multicolumn{7}{l}{dependent bandwidth. Data dependent bandwidth sample size
ratios ranged from .011 to .405. The values of $g_{\widehat{\beta}}$}\\
\multicolumn{7}{l}{and its confidence intervals are scaled by $10^{4}$ for
presentation purposes.}\end{tabular}
}
\end{document}